\def\ps@pprintTitle{%
   \let\@oddhead\@empty
   \let\@evenhead\@empty
   \def\@oddfoot{\reset@font\hfil\thepage\hfil}
   \let\@evenfoot\@oddfoot
}
\newtheorem{theorem}{Theorem}
\newtheorem{lemma}[theorem]{Lemma}
\newtheorem{proposition}[theorem]{Proposition}
\newtheorem{corollary}[theorem]{Corollary}
\newtheorem{open}[theorem]{Open Problem}
\newtheorem{definition}[theorem]{Definition}
\newtheorem{example}[theorem]{Example}
\newtheorem{remark}[theorem]{Remark}
\newcommand{\gf}{{\mathrm{GF}}}
\newcommand{\C}{{\mathcal{C}}}
\newcommand{\supp}{{\mathrm{Supp}}}
\newcommand{\bc}{{\mathbf{c}}}
\begin{document}

\begin{frontmatter}



\title{Full Characterization of   Minimal Linear Codes as Cutting Blocking Sets
\tnotetext[fn1]{C. Ding's research was supported by the Hong Kong Research Grants Council,
Proj. No. 16300418. C. Tang was supported by National Natural Science Foundation of China (Grant No.
11871058) and China West Normal University (14E013, CXTD2014-4 and the Meritocracy Research
Funds).}
}

\author[cmt]{Chunming Tang}
\ead{tangchunmingmath@163.com}

\author[cmt]{Yan Qiu}
\ead{yanqiucwnu@163.com}

\author[qyl]{Qunying Liao}
\ead{qunyingliao@sicnu.edu.cn}

\author[zcz]{Zhengchun Zhou}
\ead{zzc@swjtu.edu.cn}
\address[cmt]{School of Mathematics and Information, China West Normal University, Nanchong, Sichuan,  637002, China}
\address[qyl]{Institute of Mathematics and Software Science, Sichuan Normal University, Chengdu, 610031, China}
\address[zcz]{School of Mathematics, Southwest Jiaotong University, Chengdu, 610031, China}




\begin{abstract}
In this paper, we first study in detail the relationship between minimal linear codes and cutting blocking sets,
which were recently introduced by
 Bonini and Borello, and  then completely characterize minimal linear  codes  as  cutting blocking sets.
 As a direct result, minimal projective  codes of dimension $3$ and
 $t$-fold blocking sets  with $t\ge 2$ in  projective planes  are identical objects.
Some bounds on the parameters of minimal codes are derived from this characterization.
This confirms a recent conjecture by Alfarano, Borello and Neri in
[a geometric characterization of minimal codes and their asymptotic performance, arXiv:1911.11738, 2019] about
 a lower bound of the minimum distance of a minimal code.
Using this new link between minimal codes and blocking sets, we
also present new general  primary and secondary constructions of minimal linear codes.
 As a result, infinite families of minimal linear codes not satisfying the  Aschikhmin-Barg's condition are obtained.
 In addition to this, the weight distributions of two subfamilies of the proposed minimal linear codes are established.
 Open problems are also presented.
\end{abstract}

\begin{keyword}
Linear code \sep minimal code \sep hyperplane  \sep blocking set \sep secret sharing.

\MSC  05B05 \sep 51E10 \sep 94B15

\end{keyword}

\end{frontmatter}


\section{Introduction}
Throughout the paper, we will assume the reader to have familiarity with linear codes ( see for instance \cite{MS77}). A $q$-ary linear code of
 length $n$ and dimension $k$ will be referred to as an $[n, k]_q$ code. Further, if the code has minimum distance $d$, it will be referred to as an $[n, k, d]_q$ code. When the alphabet size $q$ is clear from the context, we omit the subscript.
Let $\C$ be an $[n, k, d]_q$ linear code. $\C$ is called projective if any two of its coordinates are linearly independent,
or in other words, if the minimum distance $d^{\perp}$ of its dual code $\C^{\perp}$ is at least three.

The Hamming weight (for short, weight) of a vector $\mathbf v$ is the number of its nonzero entries and is denoted $\mathrm{wt}(\mathbf v)$.
The minimum (respectively, maximum) weight of the code $\C$ is the minimum (respectively, maximum) nonzero weight among all codewords of $\C$,
$w_{\min} = \min (\mathrm{wt}(\mathbf c))$ (respectively, $w_{\max} = \max (\mathrm{wt}(\mathbf c))$).

Let $\bc=(c_0, \cdots, c_{n-1})$  be a codeword in $\C$.
The \emph{support} $\supp(\bc)$ of the codeword $\bc$  is the set
of indices of its nonzero coordinates:
$$\supp(\bc)=\{i: c_i \neq 0\}.$$
A codeword $\bc$ of the linear code $\C$ is called minimal if its support
does not contain the support of any other linearly independent codeword.
$\C$ is called a minimal linear code  if all codewords of $\C$  are minimal.
Minimal codes are a special class of linear codes.
They have applications in secret sharing schemes \cite{Mas93,Mas95}.
A sufficient condition for a linear code to be minimal is given in the following lemma \cite{AB98}.
\begin{lemma}[Aschikhmin-Barg]\label{lem:AB}
 A linear code $\C$ over $\gf(q)$ is minimal if $\frac{w_{\min}}{w_{\max}} >\frac{q-1}{q}$.
\end{lemma}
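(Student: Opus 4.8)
The statement to prove is a \emph{sufficient} condition, so the plan is to argue by contradiction: assume $\C$ is \emph{not} minimal and deduce $w_{\min}/w_{\max}\le (q-1)/q$. Non-minimality means there exist two linearly independent codewords $\bc,\bc'\in\C$ with $\supp(\bc')\subseteq\supp(\bc)$. I would fix such a pair and work entirely with the family $\{\bc-\alpha\bc' : \alpha\in\gf(q)\}$, exploiting that linear independence guarantees $\bc-\alpha\bc'\neq\mathbf 0$ for every $\alpha$.

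The engine of the argument is a position-by-position weight count for this family. Classify each coordinate $i$ into three types: those outside $\supp(\bc)$, where both $\bc$ and $\bc'$ vanish (by the containment), so the $i$-th entry of $\bc-\alpha\bc'$ is $0$ for every $\alpha$; those in $\supp(\bc)\setminus\supp(\bc')$, where $\bc$ is nonzero and $\bc'$ vanishes, so the entry $c_i-\alpha c_i'=c_i$ stays nonzero for all $\alpha$; and those in $\supp(\bc')$, where both entries are nonzero, so $c_i-\alpha c_i'=0$ for exactly one scalar $\alpha=c_i/c_i'$. Summing over all $\alpha$ then yields the clean identity $\sum_{\alpha\in\gf(q)}\wt(\bc-\alpha\bc')=q\,\wt(\bc)-\wt(\bc')$, since each of the $\wt(\bc)-\wt(\bc')$ coordinates of the second type is counted $q$ times and each of the $\wt(\bc')$ coordinates of the third type is counted $q-1$ times.

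The final step isolates the $\alpha=0$ term, which is simply $\wt(\bc)$, leaving $q-1$ nonzero words $\bc-\alpha\bc'$ with $\alpha\neq 0$, each of weight at least $w_{\min}$. Hence $(q-1)\wt(\bc)-\wt(\bc')\ge (q-1)w_{\min}$, i.e. $\wt(\bc')\le (q-1)\bigl(\wt(\bc)-w_{\min}\bigr)$. Substituting $\wt(\bc)\le w_{\max}$ and $\wt(\bc')\ge w_{\min}$ collapses this to $w_{\min}\le (q-1)(w_{\max}-w_{\min})$, that is $q\,w_{\min}\le (q-1)w_{\max}$, equivalently $w_{\min}/w_{\max}\le (q-1)/q$, the desired contradiction.

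I expect the only genuinely substantive point — the real content of the lemma — to be the weight identity of the second paragraph. The crux is recognizing that the hypothesis $\supp(\bc')\subseteq\supp(\bc)$ is precisely what forces the type-one coordinates to contribute nothing, and, more importantly, that on $\supp(\bc')$ exactly one scalar annihilates each coordinate while on $\supp(\bc)\setminus\supp(\bc')$ none does. This asymmetry between the two active coordinate types is exactly what manufactures the threshold $(q-1)/q$; everything after it is bookkeeping and elementary estimation. One could equally phrase the whole thing as a direct proof of the contrapositive, but the averaging identity is indispensable either way.
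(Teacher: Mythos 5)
Your proof is correct. The paper itself states this lemma without proof, citing Ashikhmin--Barg \cite{AB98}; your argument --- the counting identity $\sum_{\alpha\in\gf(q)}\wt(\bc-\alpha\bc')=q\,\wt(\bc)-\wt(\bc')$ over the pencil $\bc-\alpha\bc'$, with the three-way coordinate classification forced by $\supp(\bc')\subseteq\supp(\bc)$, followed by bounding the $q-1$ nonzero words with $\alpha\neq 0$ below by $w_{\min}$ --- is precisely the classical argument from that reference, so it is essentially the intended proof.
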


Many minimal linear codes satisfying  the  condition $\frac{w_{\min}}{w_{\max}} >\frac{q-1}{q}$
are obtained from linear codes with few weights \cite{DD15,DY03,M17,MOS17,MOS18,TLQZT}.

The sufficient condition in Lemma \ref{lem:AB} is not necessary for minimal codes. Recently,
searching for minimal linear codes with $\frac{w_{\min}}{w_{\max}} \le  \frac{q-1}{q}$
 has been an interesting research topic.
Chang  and Hyun \cite{CY18} made a breakthrough and constructed an infinite family of minimal binary
linear codes with $\frac{w_{\min}}{w_{\max}} < \frac{1}{2}$. Ding, Heng and Zhou \cite{DHZI} gave
a necessary and sufficient condition for a binary linear code to be minimal.
Three infinite families of minimal binary linear codes with $\frac{w_{\min}}{w_{\max}} \le  \frac{1}{2}$ were obtained using this condition.
They also constructed an infinite family of minimal ternary linear codes with with $\frac{w_{\min}}{w_{\max}} \le  \frac{2}{3}$ in \cite{DHZF}.
Bartoli and Bonini \cite{BBI} generalized the construction of minimal linear codes in \cite{DHZF} from ternary  case to
odd characteristic case. In \cite{BBG}, an inductive construction of minimal codes was presented.
Li and Yue \cite{LY} obtained some minimal binary linear codes with Boolean functions. Xu and Qu \cite{XQ19} constructed
minimal $q$-ary linear codes from some special functions. Lu, Wu and Cao \cite{LWC} studied the existence of  minimal linear codes.
Bonini and Borello \cite{BB19} presented a family of codes arising from cutting blocking sets.
Infinitely many of these codes do not satisfy Ashikhmin-Barg's condition.

In this paper, we mainly study further the characterizations and  constructions  of minimal codes.
First, we investigate in detail the relationship between minimal linear codes and  blocking sets,
 and completely characterize minimal linear codes as cutting blocking sets. In particular,
 minimal projective codes of dimension $3$ and $t$-fold blocking sets with $t \ge  2$ in projective planes are identical objects.
 A tight lower bound for the minimum distance of a minimal code was derived using the geometric characterization of minimal codes.
This settles the conjecture in \cite{ABN}.
 By the new characterization of minimal linear codes, we present
 a primary construction and a general secondary construction of minimal codes.
 Some new infinite classes of minimal $q$-ary linear codes with $\frac{w_{\min}}{w_{\max}} \le \frac{q-1}{q}$ are derived.
Finally we determine the weight distributions of two subfamilies of the proposed minimal codes.
Open problems are also presented.

The rest of this paper is organized as follows. In Section \ref{sec:intr}, we present some basic results
on linear codes from defining sets and blocking sets. In Section  \ref{sec:char},
we  study in detail the relationship between minimal linear codes and blocking sets.
It enables us to identify a minimal code as a cutting blocking set.
In Section \ref{sec:constr}, we present a primary construction of minimal codes from hyperplanes and a
general secondary construction of minimal codes, and establish the weight distributions of two subfamilies of the proposed minimal codes.
In section \ref{sec:conc}, we conclude this paper.

\section{Background}\label{sec:intr}
\subsection{Linear codes from multisets in vectorial  spaces}
Let $\mathbb V$ be a vector space over $\gf(q)$ and let $\langle \cdot, \cdot  \rangle$
be an inner product in $\mathbb V$.
For $\mathbf{v} \in \mathbb V \in \setminus \{\mathbf{0}\}$ we will denote by $\langle \mathbf  v \rangle$ the
one dimensional subspace generated by $\mathbf v$.
A multiset $D$ in $\mathbb V$  is \emph{$k$-dimensional} if the
linear subspace $\mathrm{Span}(D)$ over $\gf(q)$ generated by $D$ has dimension $k$.
A subset $\overline{D}$  of $\mathbb V\setminus \{\mathbf 0\}$ is called a projection of $D$
if for any nonzero $\mathbf v \in D$ there exists a unique $\mathbf v'\in \overline{D}$ such that
$\mathbf v =\lambda \mathbf v'$ where $\lambda \in \gf(q)^*$.
And $D$ is called projective if $\# D= \# \overline{D}$.

Let $D:=\{\{ \mathbf g_1, \cdots, \mathbf g_n\}\}$ be a multiset in $k$-dimensional
vector space $\mathbb V$. A classic generic construction
of linear codes from multisets of vector spaces is described  as
\begin{align}\label{def:C-D}
\C_{D}=\{\left (\langle \mathbf v, \mathbf g_0 \rangle, \cdots, \langle \mathbf v, \mathbf g_{n-1} \rangle\right ): \mathbf v \in \mathbb V\}.
\end{align}
Empoying this general scheme, Ding et al. constructed many families of linear codes with few weights \cite{DLN08,DN07}.
We call $D$  the defining set of $\C_{D}$.  By definition, the dimension of the code $\C_{D}$ is at most $k$.
Although different orderings of the elements of $D$ give different linear codes, these codes are permutation equivalent.
Hence we do not consider these codes obtained by different orderings of the elements in $D$.
Further, $\C_{D}$ is a projective code if and only if $D$ is projective.
Xiang proved that any $q$-ary linear code $\C$ may be generated with a defining set $D$ via this construction \cite{Xiang16}.

The familiar Griesmer bound says that for an $[n,k,d]$ code $\C$ over
$\gf(q)$,
\begin{eqnarray}\label{eq:Griesmer}
n\ge d+ \left \lceil \frac{d}{q}  \right \rceil + \cdots \left \lceil \frac{d}{q^{k-1}}  \right \rceil.
\end{eqnarray}
The bound was proved by Griesmer in 1960 for binary linear codes and generalized by Solomon and Stiffler in 1965.

\subsection{Blocking sets and blocking multisets}
In \cite{BB19}, Bonini and Borello introduced the concept of cutting  $s$-blocking sets for sets in affine, projective and vector spaces.
In order to  study minimal codes, we will consider the corresponding concept for multisets.
For any  multiset $D$ in a vector space $\mathbb V$, we will denote the multiset $D \setminus \{\mathbf 0\}$
by $D^*$.

In geometry,  a blocking set is a set of points in a projective plane which every line intersects and
which does not contain an entire line. Instead of talking about projective planes and lines,
one could deal with  high dimensional spaces and their subspaces.
\begin{definition}
Let $D$ be a multiset of an $n$-dimensional vector space $\mathbb V$.
Then $D$ is called a \emph{vectorial  $s$-blocking multiset} if
every  subspace of codimension $s$  of  $\mathbb V$ has a non-empty intersection with $D^*$.
Furthermore, if $D$ is a set, $D$ is also called a \emph{vectorial  $s$-blocking set}.
A \emph{vectorial $1$-blocking multiset} is also referred  as  vectorial blocking multiset.
\end{definition}

A generalization of blocking sets, called multiple blocking sets, was introduced by Bruen in \cite{Bruen86}.
\begin{definition}
A vectorial  $s$-blocking multiset $D$ of a vector space $\mathbb V$ is called \emph{$t$-fold}
if every $(n-s)$-dimensional subspace contains at least $t$ elements of $D^*$ and some $(n-s)$-dimensional subspace contains exactly $t$ elements of $D^*$.
\end{definition}

In order to investigate minimal codes, Bonini and Borello introduced the crucial concept of cutting blocking sets \cite{BB19}.
\begin{definition}\label{def:cutting}
A vectorial $s$-blocking multiset $D$ in a vector space $\mathbb V$ is \emph{cutting} if
its intersection with every linear subspace of codimension $s$ of $\mathbb V$ is not contained in any
other linear subspace of codimension $s$.
\end{definition}

With some effort, we can give similarly the definitions of affine cutting  $s$-blocking sets and
projective cutting  $s$-blocking sets; and the details for these can be found in \cite{BB19}.

According to the previous definitions, $D$ is a vectorial cutting $s$-blocking multiset if and only
if $\overline{D}$ is a vectorial cutting $s$-blocking set, where $\overline{D}$ is
a projection of $D$. In fact, if $\overline{D}$ is a vectorial cutting $s$-blocking set in an
$n$-dimensional vector space $\mathbb V$ over $\gf(q)$, then $\overline{D}$
can be identified as a projective cutting $s$-blocking set in the $(n-1)$-dimension projective
space $\mathrm{PG}(n-1, q)$.  If $D$ is  a projective cutting $s$-blocking set in $\mathrm{PG}(n-1, q)$,
then so is $D'$, where $D\subseteq D'$. Some authors refer to
a $t$-fold blocking set of cardinality $f$
in $\mathrm{PG}(n-1, q)$  as an $\{f,t; n-1,q\}$-\emph{minihyper}.

Bose and Burton determined the smallest point sets of $\mathrm{PG}(n, q)$ that meet every subspace of $\mathrm{PG}(n, q)$ of a given dimension $(n-s)$.
\begin{theorem}[Bose and Burton, \cite{BB67}]
An $s$-blocking set of $\mathrm{PG}(n, q)$ has at least $\theta_{s}:=\frac{q^{s+1}-1}{q-1}$ points.
In the case of equality the $s$-blocking set is an $s$-dimensional subspace.
\end{theorem}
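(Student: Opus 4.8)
My plan is to prove the two assertions separately, both by induction on the ambient dimension $n$. First I record the easy direction that makes the bound sharp: if $\Pi$ is an $s$-dimensional subspace and $\Sigma$ is any subspace of codimension $s$, the projective dimension formula gives $\dim(\Pi\cap\Sigma)\ge s+(n-s)-n=0$, so $\Pi$ meets every codimension-$s$ subspace and is therefore an $s$-blocking set; since it has exactly $\theta_s$ points, the bound is attained. It remains to show that every $s$-blocking set has at least $\theta_s$ points and that equality forces it to be a subspace.

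For the lower bound I would induct on $n$. The base case $n=s$ is immediate, since there a codimension-$s$ subspace is a single point, forcing the blocking set to be all of $\mathrm{PG}(s,q)$, which has $\theta_s$ points. For $n>s$, let $B$ be an $s$-blocking set; if $B$ is the whole space then $|B|=\theta_n>\theta_s$, so I may choose a point $P\notin B$ and project from $P$ onto a quotient $\mathrm{PG}(n-1,q)$. The key observation is that every codimension-$s$ subspace $\overline{\Sigma}$ of the quotient lifts to a codimension-$s$ subspace $\Sigma\ni P$ of $\mathrm{PG}(n,q)$; since $B$ meets $\Sigma$ and $P\notin B$, the meeting point projects into $\overline{\Sigma}$, so the image $\overline{B}$ meets $\overline{\Sigma}$. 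Hence $\overline{B}$ is an $s$-blocking set of $\mathrm{PG}(n-1,q)$, and $|B|\ge|\overline{B}|\ge\theta_s$ by induction. I note that merely averaging the inequality $|B\cap H|\ge\theta_{s-1}$ (valid because $B\cap H$ is an $(s-1)$-blocking set of each hyperplane $H$) over all hyperplanes yields a constant strictly smaller than $\theta_s$ when $n>s$; it is the projection that makes the constant sharp.

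The delicate part is the equality case, which I would again treat by induction on $n$. The crucial base case is $n=s+1$, where a minimal $s$-blocking set is exactly a set $B$ meeting every line of $\mathrm{PG}(s+1,q)$ with $|B|=\theta_s$ points. Fixing any $Q\notin B$, the $\theta_s$ lines through $Q$ partition $B$; since each meets $B$ at least once and there are exactly $\theta_s$ of them, each line through $Q$ meets $B$ in precisely one point. Consequently the line joining two points of $B$ cannot contain any point outside $B$ (such a point would be an external centre lying on a line that meets $B$ twice), so $B$ is closed under taking lines, hence a subspace, and having $\theta_s$ points it is an $s$-flat. For $n\ge s+2$, I would pick $P\notin B$, project, and invoke the inductive equality statement to conclude that $\overline{B}$ is an $s$-flat and that the projection is injective on $B$; pulling back puts $B$ inside the $(s+1)$-dimensional subspace $\Pi^{+}$ that is the preimage of $\overline{B}$ (the cone with vertex $P$ over $\overline{B}$). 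Choosing a second centre $P'\notin\Pi^{+}$, which is possible precisely because $n\ge s+2$, and projecting again, the image of $B$ is an $s$-flat; since $P'\notin\mathrm{Span}(B)$ the projection preserves the dimension of $\mathrm{Span}(B)$, forcing $\dim\mathrm{Span}(B)=s$. As $|B|=\theta_s$ equals the number of points of an $s$-flat and $B$ lies in the $s$-flat $\mathrm{Span}(B)$, we get $B=\mathrm{Span}(B)$.

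The main obstacle I anticipate is the equality analysis rather than the bound: one must choose the projection centres so that the maps are injective on $B$ and so that the inductive hypothesis applies verbatim in the quotient. The line-closure argument in the base case $n=s+1$ is the geometric heart of the statement, and the double-projection step is the bookkeeping needed to descend the conclusion from $\mathrm{PG}(n-1,q)$ back to $\mathrm{PG}(n,q)$.
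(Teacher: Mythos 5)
Your proposal is correct, but there is nothing in the paper to compare it against: the paper states this result purely as background, attributed to Bose and Burton with a citation, and proves none of the quoted geometric theorems in Section 2. Judged on its own, your argument is a complete and standard proof. The sharpness half via the dimension formula $\dim(\Pi\cap\Sigma)\ge s+(n-s)-n\ge 0$ is right; the lower bound by induction on $n$ through projection from a point $P\notin B$ is sound, since codimension-$s$ subspaces of the quotient correspond exactly to codimension-$s$ subspaces through $P$, so the image of $B$ is again an $s$-blocking set of no larger cardinality; and your side remark that naive averaging over hyperplanes only yields $|B|\ge \theta_{s-1}\theta_n/\theta_{n-1}<\theta_s$ is accurate. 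The equality analysis also checks out: in $\mathrm{PG}(s+1,q)$ the $\theta_s$ lines through a point $Q\notin B$ partition $B$, forcing exactly one point of $B$ per line, whence no line can contain two points of $B$ together with a point outside $B$, so $B$ is line-closed and hence a flat (you are invoking the standard fact that a line-closed point set in $\mathrm{PG}(N,q)$ is a subspace; this deserves an explicit mention, since for $q=2$ it amounts to closure of the corresponding vector set under addition); and for $n\ge s+2$ the double projection correctly pins $\dim\mathrm{Span}(B)=s$, because the second centre $P'$ avoids the $(s+1)$-dimensional cone $\Pi^{+}\supseteq\mathrm{Span}(B)$, after which $|B|=\theta_s=\#\mathrm{Span}(B)$ forces $B=\mathrm{Span}(B)$. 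A small simplification is available in that last step: once $B\subseteq\Pi^{+}$, every line $\ell$ of $\Pi^{+}$ extends to a codimension-$s$ subspace $\Sigma$ of $\mathrm{PG}(n,q)$ with $\Sigma\cap\Pi^{+}=\ell$, so $B$ blocks every line of $\Pi^{+}$ and your base case applies directly inside $\Pi^{+}$, avoiding the second projection; but your bookkeeping is equally valid.
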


An $s$-blocking set containing an $s$-dimensional subspace is called trivial. For a non-trivial blocking set
in the projective plane $\mathrm{PG}(2,q)$ the theorem gives  an improved lower bound.
\begin{theorem}[Bruen, \cite{Bruen70,Bruen71}]
In $\mathrm{PG}(2, q)$ a non-trivial blocking set has size at least $q+\sqrt{q}+1$. In the case of equality the blocking set is a Baer subplane.
\end{theorem}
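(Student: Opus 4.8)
The plan is to reduce first to the case of a \emph{minimal} blocking set. If $B$ is any non-trivial blocking set it contains an inclusion-minimal blocking subset $B'$, and $B'$ contains no line (a line inside $B'$ would itself be blocking, forcing $B'$ to equal that line by minimality, whence $B \supseteq B'$ would contain a line, contradicting non-triviality); since $|B|\ge |B'|$ it suffices to bound $|B'|$, so I assume $B$ is minimal. The decisive consequence of minimality is that \emph{through every point $P\in B$ there is a tangent line} (a line meeting $B$ only at $P$): otherwise every line through $P$ would meet $B\setminus\{P\}$, so $B\setminus\{P\}$ would still be blocking. Writing $t_i$ for the number of lines meeting $B$ in exactly $i$ points, I record the standard identities
\[
\sum_i t_i = q^2+q+1,\qquad \sum_i i\,t_i = |B|(q+1),\qquad \sum_i \binom{i}{2}t_i=\binom{|B|}{2},
\]
together with $t_0=0$ (blocking) and $t_{q+1}=0$ (non-trivial), so that $1\le i\le q$.

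Next I would fix a tangent line $\ell$ with $\ell\cap B=\{P\}$ and pass to the affine plane $\mathrm{AG}(2,q)=\mathrm{PG}(2,q)\setminus\ell$. The affine part $B^\circ:=B\setminus\{P\}$ then meets \emph{every} affine line in each of the $q$ directions determined by the points of $\ell$ other than $P$, since the projective closure of such a line misses $P$ but must still be blocked by $B$. Because the $q$ lines of any one such direction partition $B^\circ$ into $q$ nonempty parts, one gets at once $|B^\circ|\ge q$, i.e. $|B|\ge q+1$; but this first-order count, like any argument using only the three moment identities above, is too weak to detect the term $\sqrt q$. The main obstacle is precisely to improve this to $|B^\circ|\ge q+\sqrt q$. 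For this I would associate to $B^\circ$ its R\'edei (lacunary) polynomial and analyse the directions it determines, using the theory of lacunary polynomials over $\mathrm{GF}(q)$ in the spirit of R\'edei and Blokhuis to bound how few points can meet all lines in so many parallel classes; equivalently one can follow Bruen's refined tangent/secant double count, tracking the points of tangency on $B$ of the many tangents through the points of $\ell$, which forces a quadratic inequality in $|B|$ whose relevant root is $q+\sqrt q+1$. This is the step where the geometry of $\mathrm{PG}(2,q)$ genuinely enters and where $q$ is implicitly forced to be a square at equality.

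Finally, for the equality case $|B|=q+\sqrt q+1$ I would note that $B$ is automatically minimal (a proper blocking subset would be smaller than the minimum) and then trace back through the extremal step: the inequalities become tight exactly when every line meets $B$ in either $1$ or $\sqrt q+1$ points, i.e. $t_i=0$ for $i\notin\{1,\sqrt q+1\}$. With $s=\sqrt q$ the moment identities then pin down $t_{s+1}=s^2+s+1=|B|$, which is exactly the number of lines of $\mathrm{PG}(2,s)$. One then shows that these $(\sqrt q+1)$-secants together with the points of $B$ satisfy the incidence axioms of $\mathrm{PG}(2,\sqrt q)$ — any two points of $B$ lie on a unique such secant, and any two such secants meet in a point of $B$ — so that $B$ is a Baer subplane; conversely a Baer subplane has $q+\sqrt q+1$ points, meets every line, and contains no line, realising equality. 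I expect the verification that the secants close up into a subplane, rather than the bound itself, to be the most delicate part of the equality analysis, since it requires excluding every other intersection pattern compatible with the counting.
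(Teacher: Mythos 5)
The paper never proves this statement: it is quoted in Section 2.2 as a classical result of Bruen, with citations, so there is no in-paper proof to compare against and your attempt must stand on its own. As it stands it has a genuine gap at the decisive step. Everything you actually execute is correct but standard and weak: the reduction to a minimal blocking set, the existence of tangents, the three moment identities, and the affine count giving $|B|\ge q+1$. The entire content of the theorem is the improvement from $q+1$ to $q+\sqrt{q}+1$, and at exactly that point you do not give an argument; you only name two methods that could give one (R\'edei/lacunary polynomials ``in the spirit of Blokhuis'', or ``Bruen's refined tangent/secant double count'') without defining any polynomial, performing any count, or writing down the quadratic inequality whose root is claimed to be $q+\sqrt{q}+1$. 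A plan that defers precisely the step where $\sqrt{q}$ appears is not a proof, and the equality analysis that follows is conditional on a tightness condition (``every line meets $B$ in $1$ or $\sqrt{q}+1$ points'') for inequalities that were never established.

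The gap is all the more striking because your own setup is one easy observation away from a complete elementary proof that needs neither minimality, nor tangents, nor polynomials. Non-triviality gives more than $t_{q+1}=0$: for any line $\ell$, pick $Q\in\ell\setminus B$; the $q$ lines through $Q$ other than $\ell$ meet $B$ in pairwise disjoint sets of points, all off $\ell$, so $|B|\ge |\ell\cap B|+q$, i.e.\ every line meets $B$ in at most $m:=|B|-q$ points. (This is exactly the Harrach proposition quoted immediately before Bruen's theorem in the paper.) With $x_\ell:=|\ell\cap B|$ and $c:=|B|-q-1$, your identities $\sum_\ell x_\ell=|B|(q+1)$ and $\sum_\ell x_\ell(x_\ell-1)=|B|(|B|-1)$ then give
\[
0\le \sum_{\ell}(x_\ell-1)\bigl(m-x_\ell\bigr)
=(m+1)\sum_\ell x_\ell-\sum_\ell x_\ell^2-m\,(q^2+q+1)
=q\left(c^2-q\right),
\]
hence $c\ge\sqrt{q}$, i.e.\ $|B|\ge q+\sqrt{q}+1$. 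So your remark that any argument using only the moment identities is ``too weak to detect $\sqrt{q}$'' is true only for the constraint $x_\ell\le q$ you recorded; with the constraint $x_\ell\le |B|-q$ the identities suffice. Equality then forces $x_\ell\in\{1,\sqrt{q}+1\}$ for every line, after which your concluding design-theoretic argument (the $(\sqrt{q}+1)$-secants make $B$ a projective plane of order $\sqrt{q}$, i.e.\ a Baer subplane) goes through essentially as you sketched it.
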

A Baer subplane of $\mathrm{PG}(2, q)$ is an embedded $\mathrm{PG}(2, \sqrt{q})$ subgeometry.

The following theorem follows form a simple counting argument.
\begin{theorem}[Harrach, Proposition 1.5.3, \cite{Harrach13}]
If $D$ is any blocking set of the projective plane $\mathrm{PG}(2,q)$, then for any line $\ell$ not contained in $D$, we have $\# (D \setminus \ell ) \ge q$.
\end{theorem}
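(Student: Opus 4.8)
The plan is to exploit the pencil of lines through a single point of $\ell$ that is missing from $D$. Since by hypothesis $\ell \not\subseteq D$, I would first fix a point $P \in \ell$ with $P \notin D$. In $\mathrm{PG}(2,q)$ there are exactly $q+1$ lines through $P$; one of them is $\ell$ itself, which leaves $q$ further lines, say $\ell_1, \ldots, \ell_q$.

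Next I would invoke the blocking property of $D$: every line of the plane meets $D$, so in particular each $\ell_i$ contains at least one point $Q_i \in D$. Because $P \notin D$ while $Q_i \in D$, we must have $Q_i \neq P$. The heart of the argument is then to verify that the points $Q_1, \ldots, Q_q$ are pairwise distinct and all lie off $\ell$, so that they supply $q$ distinct members of $D \setminus \ell$.

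Both facts reduce to the elementary incidence property that two distinct lines of a projective plane meet in exactly one point. For distinctness: if $Q_i = Q_j$ with $i \neq j$, then this common point, being different from $P$, would be a second intersection point of $\ell_i$ and $\ell_j$, forcing $\ell_i = \ell_j$, a contradiction. For lying off $\ell$: since $\ell_i \neq \ell$ and both pass through $P$, their only common point is $P$, and as $Q_i \neq P$ we conclude $Q_i \notin \ell$. Hence $Q_1, \ldots, Q_q$ are $q$ distinct points of $D \setminus \ell$, which yields $\#(D \setminus \ell) \ge q$.

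I do not anticipate a genuine obstacle here, as the statement collapses to the uniqueness of the intersection point of two distinct lines together with the defining property of a blocking set. The only point requiring care is the bookkeeping that simultaneously guarantees distinctness of the $Q_i$ and their absence from $\ell$; once the point $P \in \ell \setminus D$ is fixed, both follow immediately from the same incidence fact.
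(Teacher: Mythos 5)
Your proof is correct, and it is precisely the ``simple counting argument'' that the paper invokes without writing out (the paper cites Harrach and gives no details): fixing a point $P \in \ell \setminus D$, using the $q$ lines through $P$ other than $\ell$, and letting the uniqueness of line intersections in $\mathrm{PG}(2,q)$ give both distinctness of the chosen points of $D$ and their avoidance of $\ell$. No gaps; nothing further is needed.
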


The following theorems give some general lower bounds on the size of a $t$-fold blocking set in $\mathrm{PG}(2,q)$.
\begin{theorem}[Ball, \cite{Ball96}]
Let $D$ be a $t$-fold blocking set in $\mathrm{PG}(2, q)$. If $D$ contains no line, then it has at least $tq+\sqrt{tq}+1$ points.
\end{theorem}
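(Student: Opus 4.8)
The plan is to prove this sharp bound by the polynomial (R\'edei) method, since a pure counting or variance argument (every line through a point off $D$ meets $D$ in at least $t$ points, which yields only $|D|\ge t(q+1)$) cannot produce the $\sqrt{tq}$ refinement. First I would pass to an affine picture. Because $D$ contains no line, $D\neq\mathrm{PG}(2,q)$, and by the definition of a $t$-fold blocking set recalled above there is a line $\ell$ meeting $D$ in exactly $t$ points; choosing coordinates so that $\ell$ is the line at infinity, write $U=D\setminus\ell=\{(a_i,b_i)\}$ for the affine part, so that $|U|=|D|-t$, and form the R\'edei polynomial
\begin{equation*}
R(X,Y)=\prod_{(a_i,b_i)\in U}\bigl(X+a_iY-b_i\bigr)=\sum_{j=0}^{|U|}\sigma_j(Y)\,X^{|U|-j},
\end{equation*}
where $\sigma_j(Y)\in\gf(q)[Y]$ is, up to sign, the $j$-th elementary symmetric function of the linear forms $a_iY-b_i$ and so satisfies $\deg\sigma_j\le j$.

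The second step converts the blocking hypothesis into divisibility. For a fixed direction $y\in\gf(q)$ the roots of $R(X,y)$ are the values $b_i-a_iy$, and the multiplicity of a root $c$ is the number of affine points of $U$ on the line $Y=yX+c$. If the infinite point $(y)$ is not in $D$, each of the $q$ lines in direction $y$ meets $D$ only in affine points, hence in at least $t$ of them, so every element of $\gf(q)$ is a root of $R(X,y)$ of multiplicity at least $t$ and therefore $(X^q-X)^t\mid R(X,y)$. There are $q+1-t$ such good directions, the remaining $t$ directions being the points of $D\cap\ell$, for which the same reasoning gives only $(X^q-X)^{t-1}\mid R(X,y)$. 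The hypothesis that $D$ contains no line is used here to guarantee that the remainder polynomial extracted below is not identically zero.

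The heart of the argument, and the step I expect to be hardest, is algebraic. Writing $R(X,Y)=(X^q-X)^{t}\,Q(X,Y)+S(X,Y)$ with $\deg_X S<tq$, the divisibility forces $S(X,y)=0$ for all $q+1-t$ good directions $y$. Since the coefficients of $S$ inherit a controlled $Y$-degree from the $\sigma_j$, this simultaneous vanishing pins down the low-order symmetric functions and isolates, from the leading coefficients of $R$, a nonzero fully reducible lacunary polynomial in a single variable whose degree measures the excess $|D|-tq-1$. The decisive input is then a R\'edei--Blokhuis type estimate for fully reducible lacunary polynomials over $\gf(q)$, bounding below the degree of the non-leading part of a polynomial of the form $X^{q}g(X)+h(X)$ with $h\neq0$ and $\gcd(g,h)=1$. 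Feeding the extracted polynomial into this estimate, while reconciling the strong divisibility $(X^q-X)^t$ at the $q+1-t$ good directions with the weaker $(X^q-X)^{t-1}$ at the $t$ directions of $D\cap\ell$, is exactly what yields the square-root term and hence $|D|\ge tq+\sqrt{tq}+1$.

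The main obstacle is precisely this reconciliation: both the term $\sqrt{tq}$ and the additive constant $1$ are produced inside the lacunary-polynomial estimate, so the final bound is only as good as that estimate, and the contribution of the $t$ infinite points of $D$ must be tracked exactly. I would close by checking sharpness: the bound is known to be tight in cases where $tq$ is a perfect square, by subgeometry (Baer-type) configurations, which confirms that no slack has been introduced in the degree comparisons. It is also worth emphasising that removing the hypothesis that $D$ contains no line destroys the argument, since then the extracted lacunary polynomial can vanish identically and the degree estimate becomes vacuous.
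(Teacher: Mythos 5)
You should first note that the paper does not prove this statement at all: it appears in the background section as a quoted known result, attributed to Ball with the citation \cite{Ball96}, so there is no internal proof to compare your argument against and it must be judged on its own. Your overall strategy --- pass to an affine frame with a line $\ell$ meeting $D$ in exactly $t$ points placed at infinity, form the R\'edei polynomial, convert the $t$-fold blocking property into divisibility by $(X^q-X)^t$ at directions off $D$ and by $(X^q-X)^{t-1}$ at the $t$ directions in $D\cap\ell$, and then invoke a Blokhuis-type lower bound for fully reducible lacunary polynomials --- is indeed the method by which bounds of this shape are proved in the literature, so the plan points at the right target.

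As a proof, however, it has a genuine gap, and you concede as much yourself: the entire content of the theorem lives in the step you describe but never carry out. What you actually establish is soft. The divisibility $(X^q-X)^t \mid R(X,y)$, together with $\deg_X R=\#D-t$ and $R(X,y)\neq 0$, yields only $\#D\ge tq+t$, which for $t\le q$ is weaker than the claimed $tq+\sqrt{tq}+1$. To do better you would need (a) a precise statement of the lacunary-polynomial estimate (for which polynomials $X^q g(X)+h(X)$, under what hypotheses on $\gcd(g,h)$ and on degrees, with what conclusion); (b) an actual construction, from the coefficients $\sigma_j(Y)$ and the simultaneous vanishing of $S(X,y)$ at the good directions, of a single-variable fully reducible lacunary polynomial whose degree controls $\#D-tq-1$; and (c) a proof that this polynomial is not identically zero, which is exactly where the hypothesis that $D$ contains no line must enter --- you assert this implication but never derive it. None of (a)--(c) is routine; they are the substance of Ball's argument, so your proposal is an outline of a known proof rather than a proof. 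There is also a small technical slip: the R\'edei polynomial $\prod_i (X+a_iY-b_i)$ sees only the $q$ directions $y\in\gf(q)$, so the vertical direction must either be handled separately or be forced into $D\cap\ell$ by the choice of coordinates; as written, your count of $q+1-t$ good directions to which the divisibility is applied is off by one.
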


\begin{theorem}[Bruen, \cite{Bruen92}]
Let $D$ be a $t$-fold blocking set in $\mathrm{PG}(2, q)$ that contains a line. If $t\ge 2$, then $\#D \ge  tq+q - t+2$.
\end{theorem}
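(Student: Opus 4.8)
The plan is to reduce the statement to a sharp lower bound on \emph{multiple affine blocking sets}, for which the polynomial multiplicity method (the technique underlying the cited work) is available. First I would fix the line $\ell_0 \subseteq D$ and regard it as the line at infinity, so that $\mathrm{PG}(2,q) \setminus \ell_0 = \mathrm{AG}(2,q)$ and $D' := D \setminus \ell_0$ becomes a point set of the affine plane $\mathrm{AG}(2,q)$, with $\#D = (q+1) + \#D'$. Observe first that if $t > q+1$ then no line of $\mathrm{PG}(2,q)$, having only $q+1$ points, can contain $t$ points of $D$, so no $t$-fold blocking set exists and there is nothing to prove; hence I may assume $t \le q+1$, in which case $\ell_0$ itself already carries $q+1 \ge t$ points of $D$ and imposes no constraint.

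The core of the reduction is the following translation of the blocking condition. Every line $m \neq \ell_0$ of $\mathrm{PG}(2,q)$ meets $\ell_0$ in exactly one point, which lies in $D$ because $\ell_0 \subseteq D$, and all remaining points of $m$ are affine. Thus $\#(m \cap D) = 1 + \#(m \cap D')$, and the requirement $\#(m \cap D) \ge t$ is equivalent to $\#(m \cap D') \ge t-1$. Running this over all $m \neq \ell_0$ shows precisely that $D'$ is a \emph{$(t-1)$-fold blocking set of} $\mathrm{AG}(2,q)$: every affine line contains at least $t-1$ of its points. Since $t \ge 2$, this is a genuine affine blocking condition with multiplicity at least one.

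It then remains to invoke the sharp estimate for multiple affine blocking sets, namely that an $s$-fold blocking set of $\mathrm{AG}(2,q)$ has at least $(s+1)q - s = (s+1)(q-1)+1$ points (for $s=1$ this is the Jamison--Brouwer--Schrijver bound $2q-1$). Applying it with $s = t-1$ gives $\#D' \ge tq - (t-1) = tq - t + 1$, whence $\#D = (q+1) + \#D' \ge tq + q - t + 2$, as claimed. I would also record that the bound is attained with equality by a triangle of three lines when $t=2$, for which $\#D = 3q$, confirming tightness.

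The genuinely hard part is the affine multiple-blocking-set estimate used in the final step. Elementary double counting, partitioning $D'$ by the pencil of lines through a point of $\ell_0$ or through a point not in $D$, yields only $\#D' \ge (q+1)(t-1)$, which falls short of the required $tq - t + 1$ by $q - 2(t-1)$ and is therefore useless for small $t$ (the case $t=2$ being the worst). Closing this gap is exactly where the polynomial (lacunary / R\'edei) method enters: one encodes $D'$ by a R\'edei-type polynomial and bounds the multiplicities of its roots, which is the substance of the cited theorem of Bruen. I expect this multiplicity argument, rather than the geometric reduction above, to be the main obstacle.
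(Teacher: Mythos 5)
The paper never proves this statement --- it is quoted as background with a citation to Bruen's 1992 paper --- so there is no internal proof to compare against; what can be said is that your argument is exactly the standard derivation, essentially the one in the cited source. Your reduction is correct: deleting the contained line $\ell_0$ turns $D$ into a point set $D'$ of $\mathrm{AG}(2,q)$ with $\#D=(q+1)+\#D'$, and since every line $m\neq\ell_0$ meets $\ell_0$ in a point of $D$, the condition $\#(m\cap D)\ge t$ is equivalent to $\#(m\cap D')\ge t-1$, so $D'$ is a $(t-1)$-fold affine blocking set; the affine bound $\#D'\ge t(q-1)+1$ then gives $\#D\ge tq+q-t+2$, and your triangle example shows tightness at $t=2$. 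The one caveat is that the affine estimate you invoke --- an $s$-fold blocking set of $\mathrm{AG}(2,q)$ has at least $(s+1)(q-1)+1$ points --- is itself the main theorem of the very paper being cited, proved there by the polynomial multiplicity method; so what you have produced is a correct reduction of the quoted corollary to that theorem, correctly identifying where all the difficulty lives, but not a self-contained proof. It is worth noting, however, that in the case most relevant to this paper's applications, namely $t=2$, your reduction needs only the $1$-fold affine bound $2q-1$, which is Jamison's theorem as stated in Theorem \ref{thm:Jamison77}; so for $t=2$ your argument is complete using only results already quoted in the paper, and the genuinely hard polynomial input is needed only for $t\ge 3$.
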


\begin{theorem}[Ball, \cite{Ball96}]
Let $D$ be a nontrivial $2$-fold blocking set in $\mathrm{PG}(2,q)$.  Then the following are true.
\begin{enumerate}
\item If $q<9$ then $D$ has at least $3q$ points.
\item If $q=11,13,17$ or $19$ then $\#D\ge \frac{5q+7}{2}$.
\item If $q=p^{2e+1}>19$ then $\#D \ge 2q+p^e\left \lceil \frac{p^{e+1}+1}{p^e+1} \right \rceil+2$.
\item If $q>4$ is a square then $\# D \ge 2q+2\sqrt{q}+2$.
\end{enumerate}

\end{theorem}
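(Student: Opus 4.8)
The plan is to realize $D$ through the polynomial (Rédei) method and to read off its size from divisibility properties of an associated lacunary polynomial. First I would fix a line $\ell_\infty$ of $\mathrm{PG}(2,q)$ meeting $D$ in as few points as possible and pass to the affine plane $\mathrm{AG}(2,q)=\mathrm{PG}(2,q)\setminus\ell_\infty$; write $U=D\cap\mathrm{AG}(2,q)$ as $U=\{(a_i,b_i):1\le i\le m\}$ with $m=\#U$. Since $D$ is nontrivial (it contains no line, the line‑containing case being governed by Bruen's theorem above), $\ell_\infty$ is a low‑secant and the points of $D$ on it form a small set of ``excluded directions.'' To $U$ I associate the Rédei polynomial
\[
R(X,Y)=\prod_{i=1}^{m}\bigl(X+a_iY-b_i\bigr)\in\gf(q)[X,Y],
\]
so that for each slope $y\in\gf(q)$ the factorization $R(X,y)=\prod_{c\in\gf(q)}(X-c)^{m_c(y)}$ encodes, in the multiplicity $m_c(y)$, the number of points of $U$ on the affine line of direction $y$ and intercept $c$.

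The $2$-fold blocking hypothesis is what drives the whole argument. For a direction $y$ corresponding to a point of $\ell_\infty$ \emph{not} in $D$, each of the $q$ affine lines of that direction must meet $D$ in at least two points, necessarily affine, so $m_c(y)\ge 2$ for every $c\in\gf(q)$ and hence $(X^q-X)^2\mid R(X,y)$; for a direction that is a point of $D\cap\ell_\infty$ one only obtains divisibility by $X^q-X$. I would therefore split $R$ according to these two regimes and study the cofactor $f_y(X)=R(X,y)/(X^q-X)^2$, a fully reducible lacunary polynomial over $\gf(q)$ of degree $m-2q$ whose long initial block of vanishing coefficients is forced by the double divisibility. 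The core of the proof is then the Blokhuis--Ball theory of lacunary polynomials, which bounds the degree of such a cofactor from below in terms of $q$ and $p=\mathrm{char}\,\gf(q)$; applied over sufficiently many directions $y$ this yields $m-2q\ge\delta(q)$ for the appropriate threshold $\delta(q)$, and translating back through the count of points on $\ell_\infty$ produces the stated lower bound $\#D\ge 2q+\delta(q)+\cdots$.

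The four cases reflect four regimes of this lacunary bound. In the square case $q=s^2$ the governing estimate gives $\delta(q)=2\sqrt q$, matching the extremal configuration of two disjoint Baer subplanes (each a minimal blocking set of size $q+\sqrt q+1$), which I would exhibit to certify that $2q+2\sqrt q+2$ is tight; the case $q=p^{2e+1}$ requires the finer $p$-adic form of the degree bound and yields the ceiling expression, while for $q<9$ and for $q\in\{11,13,17,19\}$ the asymptotic lacunary estimates are too weak and must be replaced by a direct combinatorial analysis of the intersection numbers, supplemented by the Bose--Burton and Bruen bounds quoted above. The main obstacle I anticipate is precisely this lacunary‑polynomial core: extracting sharp lower bounds on the degree of a fully reducible lacunary polynomial over $\gf(q)$, and controlling how the bound degrades across the square, odd‑power, and small‑$q$ ranges, is delicate; in particular, handling the directions lying in $D$ (where only single divisibility is available) without losing the factor needed for sharpness is the technically demanding step.
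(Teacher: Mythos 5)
First, a point of comparison that matters here: the paper contains no proof of this statement. It is quoted as background, attributed to Ball \cite{Ball96}, and used only as a known external result; so there is no ``paper's proof'' to match your approach against, and your proposal has to stand on its own.

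On its own terms it does not. You correctly name the method by which such bounds are obtained in the literature (R\'edei polynomials, divisibility of $R(X,y)$ by $(X^q-X)^2$ for directions off $D$, degree bounds for fully reducible lacunary polynomials, as in \cite{Ball96} and \cite{BSS99}), but the entire quantitative content of all four cases is delegated to ``the Blokhuis--Ball theory of lacunary polynomials'' invoked as a black box --- and that theory \emph{is} essentially the theorem to be proved, so the argument is close to circular. Concretely: (i) nothing in your sketch produces the actual constants --- the $3q$ for $q<9$, the $\frac{5q+7}{2}$ for $q\in\{11,13,17,19\}$, or the term $p^e\left\lceil \frac{p^{e+1}+1}{p^e+1}\right\rceil$ --- these are the hard outputs of the degree estimates, not routine translations of a single threshold $\delta(q)$; (ii) cases 1 and 2 are waved off as ``direct combinatorial analysis, supplemented by the Bose--Burton and Bruen bounds,'' but those bounds (respectively $q+1$ and $q+\sqrt{q}+1$ for $1$-fold blocking sets) are far too weak to yield $3q$ or $\frac{5q+7}{2}$ even after doubling, so these cases are simply unproven; (iii) there is also a technical misstatement: the cofactor $f_y(X)=R(X,y)/(X^q-X)^2$ is a \emph{low-degree} fully reducible polynomial, not a lacunary one --- the lacunary object is $R(X,y)$ itself (its coefficients being elementary symmetric functions of degree at most their index in $y$), and extracting a lower bound on $\deg f_y=m-2q$ from that structure is exactly the delicate step (e.g.\ showing $f_y$ divides suitable derived expressions, or is a $p$-th power in extremal configurations) that your sketch never supplies. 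The remark that two disjoint Baer subplanes attain the bound in the square case is correct but is not needed for the statement.
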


The following result gives a depiction of  $2$-fold blocking sets in projective plane.

\begin{theorem}[Blokhuis, Storme and Sz\"onyi, \cite{BSS99}]
Let $D$ be a $t$-fold blocking set in $\mathrm{PG}(2,q)$, $q=p^e$, $p$ prime, of size $t(q+1)+c$. Let $c_2=c_3=2^{-1/3}$ and $c_p=1$ for $p > 3$.
If $q>4$ is a square, $t<q^{1/4}/2$ and $c<c_p q^{2/3}$, then $c\ge t\sqrt{q}$ and $D$ contains the union of $t$ disjoint Baer subplanes,
except for $t=1$, in which case $D$ contains a line or a Baer subplane.
\end{theorem}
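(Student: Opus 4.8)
The plan is to prove the statement with the lacunary polynomial method of Rédei, Blokhuis and Szőnyi, which is the standard route for extremal problems about small blocking sets in $\mathrm{PG}(2,q)$. A useful sanity check fixes the target: a Baer subplane of $\mathrm{PG}(2,q)$ has exactly $q+\sqrt q+1$ points, so a disjoint union of $t$ of them has $t(q+\sqrt q+1)=t(q+1)+t\sqrt q$ points. Hence the asserted bound $c\ge t\sqrt q$ is sharp and is attained precisely by the configuration named in the conclusion; this tells us the extremal analysis must terminate by recognizing $\gf(\sqrt q)$-subgeometries inside $D$. To get started I would set up coordinates: by an averaging argument over the $q^2+q+1$ lines, together with the trivial bound $\#D\ge t(q+1)$, select a line $\ell_\infty$ meeting $D$ in as few points as possible, declare it the line at infinity, write the $N$ affine points of $D$ as $(a_i,b_i)$, and form the Rédei polynomial
$$R(X,Y)=\prod_{i=1}^{N}\bigl(X+a_iY-b_i\bigr).$$
For a fixed direction $y\in\gf(q)$, the multiplicity of a root $x_0$ of $R(X,y)$ equals the number of affine points of $D$ on the corresponding line of slope $-y$, so the combinatorics of $D$ is encoded in the factorization of the one-variable specializations $R(X,y)$.

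Next I would translate the $t$-fold blocking hypothesis into lacunarity. Fix a direction $y$ whose point at infinity does \emph{not} lie in $D$; then all of the $\ge t$ points of $D$ on each of the $q$ parallel lines of that direction are affine, so $R(X,y)$ has $q$ distinct roots, each of multiplicity at least $t$, whence $(X^q-X)^t$ divides $R(X,y)$. Since $\deg_X R(X,y)=N$ with $N-tq$ of order $c$ (by the minimal choice of $\ell_\infty$ and the smallness hypothesis), the quotient $g_y(X)=R(X,y)/(X^q-X)^t$ has small degree. Expanding $(X^q-X)^t$, whose nonzero monomials sit only at the exponents $tq-(q-1)j$ for $0\le j\le t$, exhibits $R(X,y)$ as a fully reducible polynomial whose coefficients vanish over a long range: this is precisely the lacunary situation, with a gap of size about $q-\deg g_y\approx q-O(q^{2/3})$, which the condition $c<c_pq^{2/3}$ keeps essentially as large as $q$.

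Then I would invoke the structure theorem for fully reducible lacunary polynomials over $\gf(q)$. When $q$ is a square and the gap is of order $q$, such a polynomial must be assembled from $\gf(\sqrt q)$-linear factors, that is, essentially from polynomials of the form $X^{\sqrt q}-\alpha X$; this is exactly the regime carved out by the hypotheses $q>4$ square, $t<q^{1/4}/2$ and $c<c_pq^{2/3}$. Feeding this subfield structure back through the Rédei correspondence shows that the affine points of $D$, together with the points of $D$ on $\ell_\infty$, organize into $\gf(\sqrt q)$-subgeometries; reassembling the contributions of the various directions yields $t$ pairwise disjoint Baer subplanes (or, when $t=1$, a line or a single Baer subplane). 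Counting the points of this configuration then returns the bound $c\ge t\sqrt q$, consistent with the sanity check above.

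The hard part will be the lacunary polynomial theorem itself and the bookkeeping around it. Showing that a fully reducible polynomial with a prescribed long gap genuinely lives over the subfield $\gf(\sqrt q)$, rather than merely having a large gap by accident, is exactly where the assumption that $q$ is a square is indispensable and where the exceptional constants $c_2=c_3=2^{-1/3}$ (versus $c_p=1$ for $p>3$) enter, reflecting the worse behaviour of lacunary polynomials in characteristics $2$ and $3$. A secondary obstacle is passing from ``$g_y$ is subfield-like for each admissible direction $y$'' to a single \emph{global} partition of $D$ into $t$ disjoint Baer subplanes: this requires reconciling the information obtained from different directions, handling the directions whose point at infinity does lie in $D$, and isolating the $t=1$ degeneracy. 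That synthesis step is the one most likely to demand delicate counting of the type already underlying the bounds of Ball and Bruen quoted above.
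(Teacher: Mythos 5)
First, a point of comparison: the paper does not prove this statement at all --- it is quoted verbatim in Section~\ref{sec:intr} as a known background result, with the proof residing entirely in the cited reference \cite{BSS99} (whose title, ``Lacunary polynomials, multiple blocking sets and Baer subplanes,'' is precisely the method you propose). So your sketch is being measured against the original source rather than against anything internal to this paper, and at the level of strategy you have identified the right route: R\'edei polynomial, translation of the $t$-fold blocking condition into divisibility by $(X^q-X)^t$, lacunarity of the quotient $g_y$, and the structure theory of fully reducible lacunary polynomials over $\gf(q)$ with $q$ a square.

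However, as submitted this is a plan, not a proof, and the gap is exactly where you flag it. The entire content of the theorem lives in the two steps you defer. First, the structure theorem for fully reducible lacunary polynomials --- that a polynomial of the form $(X^q-X)^t g(X)$ with $\deg g$ of order at most $c_pq^{2/3}$ forces $\gf(\sqrt q)$-linear structure --- is invoked as a black box; it is in proving this (and in tracking how characteristic $2$ and $3$ behave worse) that the specific constants $c_2=c_3=2^{-1/3}$, $c_p=1$, the threshold $t<q^{1/4}/2$, and the exponent $2/3$ arise, and none of that is reproduced or even made precise in your sketch (e.g., you never state the exact hypotheses under which the invoked theorem applies, so one cannot check that your $\deg g_y\le c$ bookkeeping actually lands inside them). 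Second, the synthesis from ``for each direction $y$ off $D$, the polynomial $g_y$ is subfield-like'' to ``$D$ contains $t$ pairwise \emph{disjoint} Baer subplanes'' is only gestured at; this requires reconciling different directions, handling lines whose infinite point lies in $D$, extracting disjointness (not merely $t$ subplanes with multiplicity), and isolating the $t=1$ exception where a full line may occur instead --- and a line is not a Baer subplane, so this case genuinely branches and cannot fall out of a uniform subfield argument. Your closing derivation of $c\ge t\sqrt q$ by counting points of the final configuration is logically fine \emph{once} the containment is established, but since the containment is the unproved part, the quantitative conclusion is also unestablished. In short: correct identification of the method of \cite{BSS99}, but the two steps that constitute the actual mathematical difficulty are assumed rather than proved.
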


The following theorem presents a lower bound on the size of an affine blocking set in affine space.
\begin{theorem}[Jamison, \cite{Jamison77}]\label{thm:Jamison77}
If $\mathbb V$ is a vector space of dimension $k$ over the finite field $\gf(q)$, then any subset of $\mathbb V$ which meets
every hyperplane of $\mathbb V$ contains at least $k (q - 1)+ 1$ points.
\end{theorem}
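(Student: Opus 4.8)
The plan is to prove this by the polynomial method, reducing the statement to a hyperplane-covering problem and then invoking the Combinatorial Nullstellensatz. Write $\mathbb{V}=\gf(q)^k$ and regard it as the affine space $\mathrm{AG}(k,q)$, so that a ``hyperplane'' is an affine hyperplane $\{x:\langle a,x\rangle=b\}$ with $a\neq \mathbf 0$. Let $B\subseteq\mathbb V$ meet every such hyperplane; the goal is $\#B\ge k(q-1)+1$. Since affine maps permute the hyperplanes and preserve $\#B$, I would first translate $B$ so that $\mathbf 0\in B$. Every hyperplane through $\mathbf 0$ is then automatically met by $\mathbf 0$, so the real content is that $B':=B\setminus\{\mathbf 0\}$ must meet every hyperplane avoiding the origin. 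Each such hyperplane can be written uniquely as $\{x:\langle a,x\rangle=1\}$ for some $a\neq\mathbf 0$, and $\mathbf 0$ lies on none of them, so the condition becomes: for every $a\neq\mathbf 0$ there is $p\in B'$ with $\langle a,p\rangle=1$.

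Second, I would dualize. For each $p\in B'$ set $H_p:=\{a\in\mathbb V:\langle a,p\rangle=1\}$, an affine hyperplane of the dual copy of $\gf(q)^k$ that does not pass through the origin. The blocking condition above says precisely that the family $\{H_p\}_{p\in B'}$ covers every nonzero point of $\gf(q)^k$, while no $H_p$ covers $\mathbf 0$. Hence $\#B'$ is at least the minimum number of hyperplanes needed to cover $\gf(q)^k\setminus\{\mathbf 0\}$ by hyperplanes missing the origin, and it remains to show this minimum is $k(q-1)$.

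Third, and this is the core step, I would establish the covering bound via the Combinatorial Nullstellensatz. Suppose $H_1,\dots,H_m$, with $H_i=\{a:\ell_i(a)=0\}$ for affine-linear $\ell_i$ satisfying $\ell_i(\mathbf 0)\neq 0$, cover $\gf(q)^k\setminus\{\mathbf 0\}$. Consider $h(x)=\prod_{i=1}^m\ell_i(x)$ and $g(x)=\prod_{j=1}^k(x_j^{\,q-1}-1)$; the latter vanishes at every nonzero point and equals $(-1)^k$ at the origin. Then $f:=h-\frac{h(\mathbf 0)}{g(\mathbf 0)}\,g$ vanishes at every point of $\gf(q)^k$. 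If $m<k(q-1)$, then $\deg h<k(q-1)=\deg g$, so the monomial $\prod_{j=1}^k x_j^{\,q-1}$ survives in $f$ with nonzero coefficient; applying the Combinatorial Nullstellensatz with each $S_j=\gf(q)$ (so $\#S_j=q>q-1$) would produce a grid point where $f\neq 0$, contradicting that $f$ vanishes on all of $\gf(q)^k$. Therefore $m\ge k(q-1)$, giving $\#B'\ge k(q-1)$ and $\#B\ge k(q-1)+1$.

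The main obstacle is this third step: making the dual covering problem precise and choosing the polynomial $g$ that isolates the origin so that the degree gap forces the top monomial $\prod_j x_j^{\,q-1}$ to appear. Everything before it is routine (the affine normalization and the bijection between off-origin hyperplanes and nonzero dual points), and the sharpness of the bound is witnessed by taking $B$ to be the union of the $k$ coordinate axes, which has exactly $k(q-1)+1$ points and meets every hyperplane.
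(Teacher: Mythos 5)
Your proposal is correct, but there is nothing in the paper to compare it against: the paper states this result purely as quoted background, attributing it to Jamison \cite{Jamison77}, and never proves it (it is only \emph{used}, in Theorem \ref{thm:abs-bound} and Corollary \ref{cor:cutting+affine}). What you have written is a genuine, self-contained proof, and it is essentially the classical polynomial-method argument of Brouwer--Schrijver/Jamison in the streamlined form given by Alon's Combinatorial Nullstellensatz. All the steps check out: after translating so that $\mathbf 0\in B$, the hyperplanes not through the origin are exactly the sets $\{x:\langle a,x\rangle=1\}$ with $a\neq\mathbf 0$; the dual hyperplanes $H_p=\{a:\langle a,p\rangle=1\}$ all miss the origin and must cover $\gf(q)^k\setminus\{\mathbf 0\}$; and in the covering bound, both $h=\prod_i\ell_i$ and $g=\prod_j(x_j^{q-1}-1)$ vanish on $\gf(q)^k\setminus\{\mathbf 0\}$ and are nonzero at $\mathbf 0$, so if $m<k(q-1)$ the polynomial $f=h-\frac{h(\mathbf 0)}{g(\mathbf 0)}g$ vanishes identically on $\gf(q)^k$ yet has the monomial $\prod_j x_j^{q-1}$ with nonzero coefficient $-h(\mathbf 0)/g(\mathbf 0)$ in its top-degree part, contradicting the Nullstellensatz with $S_j=\gf(q)$. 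Your sharpness example, the union of the $k$ coordinate axes with $k(q-1)+1$ points, is also right, and it is in fact the same configuration the paper uses (in dimension $k-1$) as the set $D_1=\{a\mathbf e_i\}$ in Corollary \ref{cor:cutting+affine} to show the bound of Theorem \ref{thm:abs-bound} is attained. In short: the proof is complete and correct, and it supplies a proof the paper itself omits.
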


\section{Characterization of minimal codes using cutting blocking sets}\label{sec:char}

First of all, we present an equivalent description of cutting blocking sets.
\begin{proposition}\label{prop:cutting-fold}
Let $D$ be a projective subset of a $n$-dimensional vector space $\mathbb V$.
Then $D$ is a cutting blocking set if and only if for any $(n-1)$-dimensional linear
subspace $H$, the intersection
$D \cap H$  is $(n-1)$-dimensional.
\end{proposition}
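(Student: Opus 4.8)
The plan is to unwind the definition of a cutting blocking set in the codimension-one case ($s=1$), where the subspaces of codimension $s$ are precisely the hyperplanes (the $(n-1)$-dimensional linear subspaces $H$), and translate the non-containment condition in Definition \ref{def:cutting} into a dimension statement about $D\cap H$. The key observation is that the relevant sub-objects here are the hyperplanes themselves, so the condition ``$D\cap H$ is not contained in any other hyperplane $H'$'' is a statement purely about the span of the point set $D\cap H$ inside the $(n-1)$-dimensional space $H$.

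First I would reduce the non-containment condition to a spanning condition. Fix an $(n-1)$-dimensional subspace $H$ and let $W=\mathrm{Span}(D\cap H)\subseteq H$. Since any hyperplane $H'$ is itself $(n-1)$-dimensional, the set $D\cap H$ is contained in $H'$ if and only if $W\subseteq H'$. The crucial linear-algebra fact I would invoke is that a proper subspace $W\subsetneq H$ of the $(n-1)$-dimensional space $H$ is always contained in \emph{some} hyperplane $H'$ of $\mathbb V$ with $H'\neq H$: indeed, if $\dim W\le n-2$, then $W$ sits inside at least two distinct hyperplanes (one may extend $W$ to an $(n-1)$-dimensional subspace in more than one way once the codimension of $W$ is at least $2$). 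Conversely, if $W$ is $(n-1)$-dimensional, i.e.\ $W=H$, then $H$ is the unique hyperplane containing $W$, so $D\cap H$ lies in no other hyperplane.

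The two directions then fall out cleanly. For the forward direction, assume $D$ is cutting; then by definition $D\cap H$ is not contained in any hyperplane $H'\neq H$, so by the contrapositive of the spanning fact above we must have $\dim\mathrm{Span}(D\cap H)=n-1$, that is, $D\cap H$ is $(n-1)$-dimensional. For the converse, if $\mathrm{Span}(D\cap H)=H$ for every hyperplane $H$, then the only hyperplane containing $D\cap H$ is $H$ itself, so $D\cap H$ is contained in no \emph{other} codimension-one subspace, which is exactly the cutting condition at $s=1$. One should also note in passing that the hypothesis implicitly forces $D$ to be a blocking set: if $D\cap H$ spans the $(n-1)$-dimensional $H$, then in particular $D\cap H$ is non-empty, so $D^*$ meets every hyperplane.

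The only genuinely delicate point is the linear-algebra lemma asserting that a subspace $W\subsetneq H$ of dimension at most $n-2$ is contained in a hyperplane different from $H$; this is where one must be careful not to conflate ``$(n-1)$-dimensional subspace of $\mathbb V$'' with ``hyperplane of $H$.'' I expect the main obstacle to be stating this cleanly, but it is elementary: extend a basis of $W$ to a basis of $\mathbb V$ and observe that since $W$ has codimension at least $2$ in $\mathbb V$, there are at least two independent ways to adjoin a single vector and form distinct $(n-1)$-dimensional subspaces, at least one of which differs from $H$. With this lemma in hand the proposition is immediate.
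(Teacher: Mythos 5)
Your proof is correct and takes essentially the same route as the paper's: your key lemma (a subspace of dimension at most $n-2$ lies in a second hyperplane $H' \neq H$, shown by basis extension) is exactly the construction the paper inlines in its forward direction, and your converse observation that any hyperplane containing $D \cap H$ must contain its span $H$, hence equal $H$, matches the paper's remark that $D \cap H \subseteq H \cap H'$ would force dimension at most $n-2$. The only difference is organizational: you package the basis-extension fact as a standalone lemma and argue by contrapositive, whereas the paper runs both directions as proofs by contradiction.
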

\begin{proof}
Let $D$ be a cutting blocking set. Assume that $D \cap H$ is $k$-dimensional,
 where $H$ is $(n-1)$-dimensional linear subspace and $k<(n-1)$.
Then there exists a basis $\{ \mathbf v_1, \cdots, \mathbf v_n \}$ of $\mathbb V$
such that $\mathrm{Span}(D \cap H)=\mathrm{Span}(\mathbf v_1, \cdots, \mathbf v_k)$
and $H=\mathrm{Span}(\mathbf v_1, \cdots, \mathbf v_{n-2}, \mathbf v_{n-1})$.
Let $H'$ be the $(n-1)$-dimensional vectorial subspace spanned by $\mathbf v_1, \cdots, \mathbf v_{n-2}$ and $  \mathbf v_{n}$.
Since $k\le (n-2)$, $H$ and $H'$ are different $(n-1)$-dimensional subspaces.
On the other hand, it's easily observed that
\begin{align*}
D \cap H= D \cap \mathrm{Span}(\mathbf v_1, \cdots, \mathbf v_k) \subseteq D \cap H'.
\end{align*}
This is contrary to the definition of cutting blocking sets.

Conversely, assume that for any $(n-1)$-dimensional linear
subspace $H$, the intersection
$D \cap H$  is $(n-1)$-dimensional.
Then, $D$ is a blocking set.
We only need to prove that $D$ is cutting.
Suppose $D$ is not cutting. Then, there are two different
$(n-1)$-dimensional subspaces $H$ and $H'$ such
that $D\cap H \subseteq D \cap H'$.
Thus, $D\cap H$ is contained in the $(n-2)$-dimensional vectorial subspace  $\subseteq H \cap H'$.
This is contrary to the assumption that $D\cap H$ is $(n-1)$-dimensional.
This completes the proof.
\end{proof}

We can now state the main theorem, i.e., the characterization of minimal codes in terms of cutting blocking sets,
which shows that minimal codes and vectorial cutting blocking multisets are identical objects.
Similar results have been obtained independently and simultaneously by Alfarano, Borello and Neri in \cite{ABN}.
\begin{theorem}\label{thm:characterization}
 Let $\C$ be a $q$-ary $[n,k]$ linear code with generator matrix $G=[\mathbf g_0, \cdots, \mathbf g_{n-1}]$, where $\mathbf g_i \in \gf(q)^k$.
 Let $\overline{D}$ denote any projection of  the multiset $D=\{\{ \mathbf g_0, \cdots, \mathbf g_{n-1} \}\}$. Then, $\C$ is a minimal code if and only if
  $\overline{D}$ is a vectorial  cutting blocking set in the $k$-dimensional vector space $\gf(q)^k$, in other words,
   $\overline{D}$ is a projective  cutting blocking set in $\mathrm{PG}(k-1, q)$.
\end{theorem}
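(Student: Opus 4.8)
The plan is to build a dictionary between codewords of $\C$ and hyperplanes of $\gf(q)^k$, and then read off the cutting-blocking-set property directly from Proposition \ref{prop:cutting-fold}. Since $\C$ has dimension $k$, the matrix $G$ has rank $k$; hence $D$ spans $\gf(q)^k$ and the evaluation map $\mathbf v \mapsto \bc_{\mathbf v}:=(\langle \mathbf v,\mathbf g_0\rangle,\dots,\langle\mathbf v,\mathbf g_{n-1}\rangle)$ is a linear isomorphism of $\gf(q)^k$ onto $\C$. Consequently two codewords are linearly independent exactly when the corresponding vectors are. To each nonzero $\mathbf v$ I attach the hyperplane $H_{\mathbf v}=\mathbf v^{\perp}$, and I observe that the $i$-th coordinate of $\bc_{\mathbf v}$ vanishes iff $\mathbf g_i\in H_{\mathbf v}$. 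Thus the zero set of $\bc_{\mathbf v}$ corresponds precisely to $\overline D\cap H_{\mathbf v}$; since $H_{\mathbf v}$ is a subspace, membership depends only on the projective point of $\mathbf g_i$, so all support data is encoded by $\overline D$ and is independent of the chosen projection.

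The key reformulation I would record is obtained by taking complements of supports: for nonzero $\mathbf v,\mathbf w$ one has $\supp(\bc_{\mathbf w})\subseteq \supp(\bc_{\mathbf v})$ if and only if $\overline D\cap H_{\mathbf v}\subseteq \overline D\cap H_{\mathbf w}$, while $\mathbf v,\mathbf w$ are linearly independent if and only if $H_{\mathbf v}\neq H_{\mathbf w}$. I would also rephrase the target property by Proposition \ref{prop:cutting-fold}: $\overline D$ is a cutting blocking set iff for every hyperplane $H$ the intersection $\overline D\cap H$ spans $H$, i.e. $\dim \mathrm{Span}(\overline D\cap H)=k-1$.

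For the forward direction I argue by contraposition. If $\overline D$ fails to be cutting, then some hyperplane $H_{\mathbf v}$ has $\mathrm{Span}(\overline D\cap H_{\mathbf v})$ of dimension at most $k-2$, so this span lies in a second hyperplane $H_{\mathbf w}\neq H_{\mathbf v}$; then $\overline D\cap H_{\mathbf v}\subseteq \overline D\cap H_{\mathbf w}$, which by the dictionary yields a codeword $\bc_{\mathbf w}$ independent from $\bc_{\mathbf v}$ with $\supp(\bc_{\mathbf w})\subseteq\supp(\bc_{\mathbf v})$, contradicting minimality. Conversely, if $\C$ is not minimal there exist independent $\bc_{\mathbf v},\bc_{\mathbf w}$ with $\supp(\bc_{\mathbf w})\subseteq\supp(\bc_{\mathbf v})$; the dictionary gives $\overline D\cap H_{\mathbf v}\subseteq H_{\mathbf w}$ with $H_{\mathbf w}\neq H_{\mathbf v}$, so $\mathrm{Span}(\overline D\cap H_{\mathbf v})\subseteq H_{\mathbf v}\cap H_{\mathbf w}$ has dimension at most $k-2$ and $\overline D$ is not cutting. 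Both implications are essentially the argument already used in Proposition \ref{prop:cutting-fold}, now transported across the dictionary.

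I expect the only real obstacle to be bookkeeping rather than depth: verifying carefully that the minimality relation (phrased via supports and linear independence of codewords) matches the geometric containment relation (phrased via projective points and hyperplanes), getting the directions of the two complementary inclusions right, and confirming that passing from the multiset $D$ to an arbitrary projection $\overline D$ leaves every relevant support relation unchanged. One should also restrict attention to nonzero codewords throughout and note the degenerate case $k=1$, where both sides of the equivalence hold vacuously.
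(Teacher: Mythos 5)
Your proof is correct and takes essentially the same route as the paper's: the same evaluation-map dictionary between nonzero vectors, codewords and hyperplanes, the same complement-of-support identity $\supp(\bc_{\mathbf v})^c \leftrightarrow \overline D\cap H_{\mathbf v}$, and the same two contradiction/contraposition arguments --- the only cosmetic difference being that the paper invokes Definition \ref{def:cutting} directly where you detour through Proposition \ref{prop:cutting-fold}, whose proof is exactly your conversion between ``span of dimension at most $k-2$'' and ``contained in a second hyperplane.'' One small quibble: your closing remark on $k=1$ is wrong --- there the code is trivially minimal, yet $\{\mathbf 0\}$ is the unique subspace of codimension one and it misses $D^*$, so the geometric side fails; the equivalence genuinely requires $k\ge 2$, an assumption the paper's proof also makes implicitly.
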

\begin{proof}
Since $\mathcal C$ is a minimal code if and only if $\mathcal C_{\overline{D}}$
is a minimal code, we only need to consider the case that $\C$ is projective.
When $\mathcal C$ is a projective code, we can choose $\overline{D}=D$.
 Notice that for any codeword $\bf c \in \C$,
there exists a unique $\mathbf{v} \in \gf(q)^k$ such that
\begin{align}\label{eq:c_v}
\mathbf c=\mathbf c_{\mathbf v}:=\left (\langle \mathbf v, \mathbf g_0 \rangle, \cdots, \langle \mathbf v, \mathbf g_{n-1} \rangle \right ).
\end{align}
Then
\begin{align}\label{eq:Supp-H}
\mathrm{Supp}(\mathbf c)= \left \{0, 1, \cdots, n-1 \right \} \setminus \left \{i: \mathbf g_i \in H_{\mathbf v} \cap D \right \},
\end{align}
where $H_{\mathbf v}$ is the $(k-1)$-dimensional vectorial subspace $\{\mathbf w \in \gf(q)^k: \langle \mathbf v , \mathbf w \rangle =0\}$.

Let $\C$ be a minimal code. Assume that $D$ is not a cutting blocking set. Then there exist two nonzero vectors
$\mathbf v, \mathbf v'$ such that $ H_{\mathbf v} \cap D \subseteq H_{\mathbf v'} \cap D$ and $\mathbf v \not \in \left \langle  \mathbf v'  \right \rangle$.
Hence $\mathrm{Supp}(\mathbf c') \subseteq \mathrm{Supp}(\mathbf c)$ and $\mathbf c \neq \lambda \mathbf c'$ for any $\lambda \in \gf(q)$
, where $\mathbf c= \left (\langle \mathbf v, \mathbf g_0 \rangle, \cdots, \langle \mathbf v, \mathbf g_{n-1} \rangle \right )$
and $\mathbf c'= \left (\langle \mathbf v', \mathbf g_0 \rangle, \cdots, \langle \mathbf v', \mathbf g_{n-1} \rangle \right )$.
This is contrary to the assumption that $\C$ is a minimal code.

Conversely, let  $\overline{D}$ be a cutting blocking set in the $k$-dimensional vector space $\gf(q)^k$.
Suppose that $\C$ is not a minimal code. Then there exist two nonzero
$\mathbf v, \mathbf v' \in \gf(q)^k$ such that $\mathrm{Supp}(\mathbf c_{\mathbf v}) \subseteq \mathrm{Supp}(\mathbf c_{\mathbf v'})$
and $\mathbf v  \neq \lambda \mathbf v'$ for any $\lambda \in \gf(q)$
, where $\mathbf c_{\mathbf v}$ and $\mathbf c_{\mathbf v'}$ are given by (\ref{eq:c_v}).
Applying (\ref{eq:Supp-H}), one
obtains $H_{\mathbf v'} \cap D \subseteq H_{\mathbf v} \cap D$. This is contrary to the assumption that $D$ is cutting.
This completes the proof.

\end{proof}

\begin{example}
Let $D_{\le h}$ be the subset of $\gf(2)^k$ defined by
\begin{align*}
D_{\le h}=\left \{ \mathbf x \in  \gf(2)^k : 1\le \mathrm{wt}(\mathbf x)\le h  \right  \},
\end{align*}
where $k$ and $h$ are integers such that $k \ge  4$ and $2 \le h \le k$.
Then, it was shown that $\C_{D_{\le h}}$ is a minimal linear code \cite{ZYW19}.
From Theorem \ref{thm:characterization}, $D_{\le 2}$ is a vectorial cutting blocking set in $\gf(2)^k$.
Hence, for any $h\ge 2$,  $D_{\le h}$ is also a vectorial cutting blocking set in $\gf(2)^k$ as $D_{\le 2} \subseteq D_{\le h}$.
By Theorem \ref{thm:characterization} again, $\C_{D_{\le h}}$ is a minimal linear code.
\end{example}

\begin{example}
For $q$ odd, let $D_{\ge h}$ be the subset of $\gf(q)^k$ defined by
\begin{align*}
D_{\ge h}=\left \{ \mathbf x \in  \gf(q)^k : \mathrm{wt}(\mathbf x)\ge h  \right  \},
\end{align*}
where $k$ and $h$ are integers such that $k \ge  4$ and $1 \le h \le k-1$.
Then, it was shown that $D_{\ge h}$ is a vectorial cutting blocking set in $\gf(q)^k$ \cite{BB19}. It follows from Theorem \ref{thm:characterization} that $\C_{D_{\ge h}}$ is a minimal linear code
and $\C_{\overline{D}_{\ge h}}$ is a minimal projective code.
\end{example}

\begin{example}
Let $k =h \ell$ be a positive integer and consider the subset of $\gf(q)^k$ defined by
\begin{align}
D_{h, \ell}=\left \{ (x_{1}, \cdots, x_{n})\in  \gf(q)^k\setminus \{\mathbf{0}\}: \sum_{j=0}^{\ell -1} x_{hj+1}x_{hj+2} \cdots x_{hj+h} =0 \right \},
\end{align}
where $h$ and $\ell$ are integers such that $h \ge  2$ and $\ell  \ge 2$.
It follows from \cite[Theorem 5.1]{BB19} and Theorem \ref{thm:characterization} that $\C_{D_{h, \ell}}$ is a minimal linear code
and $\C_{\overline{D}_{h, \ell}}$ is a minimal projective code.

\end{example}

Combining Proposition \ref{prop:cutting-fold} and Theorem \ref{thm:characterization} yields the following results.
\begin{corollary}\label{cor:mc-tfoldbs}
 Let $\C$ be a $q$-ary $[n,k]$ minimal linear code with generator matrix $G=[\mathbf g_0, \cdots, \mathbf g_{n-1}]$, where $\mathbf g_i \in \gf(q)^k$.
 Let $\overline{D}$ denote any projection of  the multiset $D=\{\{ \mathbf g_0, \cdots, \mathbf g_{n-1} \}\}$. Then
  $\overline{D}$ is a $t$-fold blocking set in $\mathrm{PG}(k-1, q)$ with $t\ge (k-1)$.
\end{corollary}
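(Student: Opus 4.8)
The plan is to chain the two results just proved, so that no fresh machinery is needed. First I would invoke Theorem \ref{thm:characterization}: because $\C$ is a minimal code, its projection $\overline{D}$ is a cutting blocking set in the $k$-dimensional space $\gf(q)^k$, equivalently a projective cutting blocking set in $\mathrm{PG}(k-1,q)$. This reduces the task to extracting a cardinality lower bound from the cutting property. Next I would apply Proposition \ref{prop:cutting-fold} to the ambient space $\gf(q)^k$ (so the role played by $n$ there is now played by $k$): it asserts that for every $(k-1)$-dimensional linear subspace $H$, i.e. every hyperplane, the intersection $\overline{D}\cap H$ is itself $(k-1)$-dimensional, so that $\mathrm{Span}(\overline{D}\cap H)=H$.

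The core of the argument is then to convert this \emph{spanning} statement into a \emph{counting} statement. A set of vectors that spans a $(k-1)$-dimensional linear space must contain at least $k-1$ linearly independent vectors, and hence must meet at least $k-1$ distinct one-dimensional subspaces. Since $\overline{D}$ is a projection, its elements represent pairwise distinct projective points, i.e. pairwise distinct $1$-dimensional subspaces, and in particular $\overline{D}^*=\overline{D}$. Consequently $\#(\overline{D}\cap H)\ge k-1$ for every hyperplane $H$. Reading this projectively, every hyperplane (a $(k-2)$-dimensional subspace) of $\mathrm{PG}(k-1,q)$ meets $\overline{D}$ in at least $k-1$ points. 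As $\overline{D}$ is already a blocking set, this is precisely the statement that it is a $t$-fold blocking set with $t\ge k-1$, since $t$ is by definition the minimum intersection number over all hyperplanes.

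I expect the only genuinely delicate point to be the passage from ``$\overline{D}\cap H$ spans a $(k-1)$-dimensional space'' to ``$\overline{D}\cap H$ contains at least $k-1$ points.'' This rests on two facts that should be stated cleanly: that the elements of $\overline{D}$ are distinct projective points, so that scalar multiples cannot be collapsed to artificially reduce the count, and the elementary linear-algebra bound that $m$ vectors cannot span a space of dimension exceeding $m$. Everything else is a direct transcription of Proposition \ref{prop:cutting-fold} and Theorem \ref{thm:characterization}, and no case analysis or computation is required.
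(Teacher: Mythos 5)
Your proposal is correct and follows exactly the route the paper intends: the corollary is stated there as an immediate consequence of combining Theorem \ref{thm:characterization} with Proposition \ref{prop:cutting-fold}, and your counting step (a set spanning a $(k-1)$-dimensional space must contain at least $k-1$ vectors, which are distinct projective points since $\overline{D}$ is a projection) is precisely the detail the paper leaves implicit.
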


\begin{theorem}\label{thm:dim3-MC}
 Let $\C$ be a $q$-ary $[n,3]$ linear code with generator matrix $G=[\mathbf g_0, \cdots, \mathbf g_{n-1}]$, where $\mathbf g_i \in \gf(q)^3$.
 Let $\overline{D}$ denote any projection of  the multiset $D=\{\{ \mathbf g_0, \cdots, \mathbf g_{n-1} \}\}$. Then, $\C$ is a minimal code, if and only if,
  $\overline{D}$ is a $t$-fold blocking set in $\mathrm{PG}(2, q)$ with $t\ge 2$.
\end{theorem}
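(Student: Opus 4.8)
The plan is to prove both implications from the machinery already in place, exploiting the special feature of the projective plane that $k-1 = 2$, so that a ``line'' is a $1$-dimensional projective subspace.

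The forward direction is immediate. If $\C$ is a minimal code of dimension $k = 3$, then Corollary \ref{cor:mc-tfoldbs}, specialized to $k = 3$, says at once that $\overline{D}$ is a $t$-fold blocking set in $\mathrm{PG}(2, q)$ with $t \ge k - 1 = 2$. So no new work is needed for this half.

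For the converse, suppose $\overline{D}$ is a $t$-fold blocking set in $\mathrm{PG}(2, q)$ with $t \ge 2$. By Theorem \ref{thm:characterization} it suffices to show that $\overline{D}$ is a (cutting) blocking set, and by Proposition \ref{prop:cutting-fold}, applied with $n = 3$, this reduces to verifying that $\overline{D} \cap H$ is $2$-dimensional for every $2$-dimensional subspace $H$ of $\gf(q)^3$. I would translate this into projective language: such an $H$ corresponds to a line $\ell$ of $\mathrm{PG}(2, q)$, and the hypothesis $t \ge 2$ guarantees that $\ell$ meets $\overline{D}$ in at least two distinct projective points. The crucial geometric fact, special to the plane, is that two distinct points of a line span that line; equivalently, two projectively distinct vectors in $\overline{D} \cap H$ are linearly independent and therefore generate all of $H$. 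Hence $\mathrm{Span}(\overline{D} \cap H) = H$ is $2$-dimensional, Proposition \ref{prop:cutting-fold} yields that $\overline{D}$ is a cutting blocking set, and Theorem \ref{thm:characterization} then gives that $\C$ is minimal.

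I do not anticipate a genuine obstacle; the only point requiring care is the dimension bookkeeping that makes the equivalence sharp precisely at $k = 3$. In $\mathrm{PG}(2, q)$ a line has projective dimension $1$, so any two of its distinct points already span it, which is exactly why the condition ``at least two points on every line'' upgrades to the full cutting (spanning) condition of Proposition \ref{prop:cutting-fold}. The argument collapses for $k > 3$ because a hyperplane of $\mathrm{PG}(k-1, q)$ then has projective dimension $k - 2 \ge 2$, and $k - 1$ points lying on such a hyperplane need not span it; this is precisely why a clean ``if and only if'' with a $t$-fold blocking condition is available only in the dimension-$3$ case, whereas for general $k$ one obtains only the one-sided statement of Corollary \ref{cor:mc-tfoldbs}.
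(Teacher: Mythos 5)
Your proof is correct and follows essentially the same route as the paper, which derives this theorem by combining Proposition \ref{prop:cutting-fold} and Theorem \ref{thm:characterization} (with Corollary \ref{cor:mc-tfoldbs} supplying the forward direction), exactly as you do. Your explicit observation that two projectively distinct points of $\overline{D}$ on a line span the corresponding $2$-dimensional subspace — and that this is precisely what fails for $k>3$ — is the same dimension argument the paper leaves implicit.
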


Theorem \ref{thm:dim3-MC} shows that
minimal projective  codes of dimension $3$ and   $t$-fold blocking sets  with $t\ge 2$ in  projective planes  are identical objects.
It is worth noting that, in the projective space $\mathrm{PG}(k-1,n)$ with $k\ge 4$, the concept of cutting blocking sets
 is stronger than the concept of $(k-1)$-fold blocking sets. Therefore, the result of Theorem \ref{thm:dim3-MC} can not be extended to
 the case $k\ge 4$.

\begin{corollary}
 Let $\C$ be a $q$-ary $[n,k]$ minimal projective linear code  with  maximum weight $w_{\max}$ and
 generator matrix $G=[\mathbf g_0, \cdots, \mathbf g_{n-1}]$, where $\mathbf g_i \in \gf(q)^k$. Then
  ${D}=\left \{\mathbf g_0, \cdots, \mathbf g_{n-1} \right \}$ is a $t$-fold blocking set in $\mathrm{PG}(k-1, q)$ with $t=n-w_{max}$.
\end{corollary}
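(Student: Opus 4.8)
The plan is to convert the code invariant $n-w_{\max}$ into a statement about how many points of $D$ lie on the worst hyperplane, using the support formula already established in the proof of Theorem \ref{thm:characterization}. Since $\C$ is assumed projective, I may take $\overline{D}=D$, so that $D$ is a genuine set of points of $\mathrm{PG}(k-1,q)$ and $\#(D\cap H)$ counts projective points with no multiplicity. By Corollary \ref{cor:mc-tfoldbs} the minimality of $\C$ already guarantees that $D$ is a blocking set; the only remaining task is to identify the exact fold number $t$, and I claim it is forced to be $n-w_{\max}$.

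First I would recall that for every nonzero $\mathbf v \in \gf(q)^k$ the codeword $\mathbf c_{\mathbf v}$ of (\ref{eq:c_v}) satisfies, directly from (\ref{eq:Supp-H}),
\begin{align*}
\mathrm{wt}(\mathbf c_{\mathbf v}) = n - \#(D \cap H_{\mathbf v}),
\end{align*}
where $H_{\mathbf v}=\{\mathbf w \in \gf(q)^k : \langle \mathbf v, \mathbf w\rangle =0\}$ is the hyperplane orthogonal to $\mathbf v$. Because the generator matrix has rank $k$, the set $D$ spans $\gf(q)^k$, so no nonzero $\mathbf v$ annihilates all of $D$; hence each nonzero class of $\mathbf v$ yields a genuine nonzero codeword, and $w_{\max}$ is attained over nonzero $\mathbf v$.

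Next I would use that the map $\mathbf v \mapsto H_{\mathbf v}=\mathbf v^{\perp}$ is surjective onto the set of all hyperplanes of $\mathrm{PG}(k-1,q)$ (each hyperplane arising from a full $(q-1)$-element class of vectors $\mathbf v$). Combined with the weight identity, this shows that the collection of intersection sizes $\{\#(D\cap H): H \text{ a hyperplane}\}$ coincides with $\{n-\mathrm{wt}(\mathbf c_{\mathbf v})\}$. A codeword of maximum weight therefore corresponds precisely to a hyperplane of minimum intersection, so that
\begin{align*}
\min_{H}\#(D\cap H) = n - w_{\max}.
\end{align*}

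Finally I would set $t:=n-w_{\max}$ and read off the two defining conditions of a $t$-fold blocking set from this minimum: every hyperplane $H$ satisfies $\#(D\cap H)\ge t$, and the minimizing hyperplane satisfies $\#(D\cap H)=t$. Since $\C$ is minimal, $D$ is blocking by Corollary \ref{cor:mc-tfoldbs}, so no hyperplane is disjoint from $D$, giving $t\ge 1$ and $w_{\max}\le n-1$; thus $t$ is a legitimate positive fold number, and the claim follows. I do not anticipate a serious obstacle: the whole argument rests on the identity (\ref{eq:Supp-H}), and the only points requiring care are to keep the projectivity hypothesis in view (so that $\#(D\cap H)$ counts points, not vectors with multiplicity) and to confirm the surjectivity of $\mathbf v\mapsto \mathbf v^{\perp}$ onto the hyperplanes of $\mathrm{PG}(k-1,q)$.
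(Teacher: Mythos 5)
Your proof is correct and follows essentially the same route the paper intends (the paper leaves this corollary unproved, as an immediate consequence of the weight--intersection identity $\mathrm{wt}(\mathbf c_{\mathbf v}) = n - \#(D\cap H_{\mathbf v})$ from the proof of Theorem \ref{thm:characterization}, together with the blocking property supplied by minimality via Corollary \ref{cor:mc-tfoldbs}). Your added care about projectivity ensuring $\#(D\cap H)$ counts distinct points, and about the surjectivity of $\mathbf v \mapsto H_{\mathbf v}$, is exactly what makes the implicit argument rigorous.
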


\begin{corollary}
Let $\C$ be a $q$-ary $[n,k]$ minimal  linear code  of  maximum weight $w_{\max}$. Then
$w_{max}\le n-k+1$.
\end{corollary}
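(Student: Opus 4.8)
The plan is to convert the statement about $w_{\max}$ into one about hyperplanes and then apply the $(k-1)$-fold blocking property furnished by Corollary \ref{cor:mc-tfoldbs}. Fix a generator matrix $G=[\mathbf g_0,\cdots,\mathbf g_{n-1}]$ of $\C$ and set $D=\{\{\mathbf g_0,\cdots,\mathbf g_{n-1}\}\}$. By (\ref{eq:c_v}) every codeword has the form $\mathbf c_{\mathbf v}$ for a unique $\mathbf v\in\gf(q)^k$, and by (\ref{eq:Supp-H}) its weight is $\wt(\mathbf c_{\mathbf v})=n-\#\{i:\mathbf g_i\in H_{\mathbf v}\}$, where $H_{\mathbf v}$ is the orthogonal hyperplane and the count runs over the index set (hence with multiplicity). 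Consequently $w_{\max}=n-m$, where $m:=\min_{\mathbf v\neq\mathbf 0}\#\{i:\mathbf g_i\in H_{\mathbf v}\}$ is the least number of columns, counted with multiplicity, lying on a hyperplane. It therefore suffices to prove $m\ge k-1$.

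To bound $m$ I would pass to a projection $\overline{D}$ of $D$. Since $\C$ is minimal, Theorem \ref{thm:characterization} shows $\overline{D}$ is a vectorial cutting blocking set in $\gf(q)^k$, and Corollary \ref{cor:mc-tfoldbs} then gives that $\overline{D}$ is a $(k-1)$-fold blocking set; in particular every $(k-1)$-dimensional subspace (hyperplane) $H$ satisfies $\#(\overline{D}\cap H)\ge k-1$. Here I use that a projection is contained in $\mathbb V\setminus\{\mathbf 0\}$, so $\overline{D}^*=\overline{D}$ and the $t$-fold count involves all of $\overline{D}\cap H$.

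The only delicate point — and the main obstacle — is the bookkeeping between the multiset $D$ and its projection $\overline{D}$ when $\C$ is not projective. Because a hyperplane $H$ is a linear subspace, the membership $\mathbf g_i\in H$ is invariant under nonzero rescaling, so each $\mathbf g_i$ lies in $H$ exactly when its unique representative in $\overline{D}$ does. Summing the (strictly positive) multiplicities over the representatives in $\overline{D}\cap H$ yields $\#\{i:\mathbf g_i\in H\}\ge\#(\overline{D}\cap H)$ for every hyperplane $H$. Combining this with the $(k-1)$-fold bound gives $m\ge k-1$, whence $w_{\max}=n-m\le n-k+1$, as claimed. As a consistency check, in the projective case the conclusion also drops out of the preceding corollary directly, since there $D$ is a $t$-fold blocking set with $t=n-w_{\max}$, while Corollary \ref{cor:mc-tfoldbs} forces $t\ge k-1$.
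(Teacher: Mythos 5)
Your proof is correct and follows essentially the same route the paper intends: the bound comes from combining Corollary \ref{cor:mc-tfoldbs} (every hyperplane of $\mathrm{PG}(k-1,q)$ meets $\overline{D}$ in at least $k-1$ points) with the correspondence $\wt(\mathbf c_{\mathbf v}) = n - \#\{i : \mathbf g_i \in H_{\mathbf v}\}$, which is exactly the content of the preceding corollary stating $t = n - w_{\max}$ for projective codes. Your explicit multiplicity bookkeeping that extends the bound from the projection $\overline{D}$ back to the full multiset $D$ is a welcome added precision for the non-projective case, which the paper leaves implicit.
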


Using the link between minimal codes and blocking sets we then deduce our lower bound
on the minimum distance of a $q$-ary linear code with prescribed dimension.
This confirms a recent conjecture by Alfarano, Borello and Neri in \cite{ABN}.
\begin{theorem}\label{thm:abs-bound}
Let $\C$ be a minimal linear codes of dimension $k$ over $\gf(q)$ and let $d$ be the minimum distance
of $\C$. Then
$$d \ge (q-1)(k-1)+1.$$
\end{theorem}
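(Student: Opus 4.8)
The plan is to translate the minimum distance into a statement about how hyperplanes meet the defining set, and then to invoke Jamison's bound on affine blocking sets. First I would reduce to the projective case. By (\ref{eq:Supp-H}), the weight of the codeword $\mathbf c_{\mathbf v}$ equals $\#\{i : \mathbf g_i \notin H_{\mathbf v}\}$, counted with multiplicity; since every point of a projection $\overline D$ of $D$ occurs in $D$ with multiplicity at least one, this quantity is bounded below by $\#(\overline D \setminus H_{\mathbf v})$. As $\mathbf v$ runs over the nonzero vectors of $\gf(q)^k$, the space $H_{\mathbf v}$ runs over all $(k-1)$-dimensional subspaces, so
\[
d \ge \min_{H}\ \#(\overline D \setminus H),
\]
the minimum being over all hyperplanes $H$ of $\gf(q)^k$. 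By Theorem \ref{thm:characterization}, minimality of $\C$ guarantees that $\overline D$ is a cutting blocking set in $\mathrm{PG}(k-1,q)$, and since $G$ has rank $k$ the set $\overline D$ spans the whole space. It therefore suffices to show that $\#(\overline D \setminus H_0) \ge (q-1)(k-1)+1$ for every fixed hyperplane $H_0$.

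The heart of the argument is a geometric reduction to an affine blocking set. Fix a hyperplane $H_0$ and regard it as the hyperplane at infinity, so that $\mathrm{PG}(k-1,q)\setminus H_0$ is the affine space $\mathrm{AG}(k-1,q)$ of dimension $k-1$. The affine hyperplanes of $\mathrm{AG}(k-1,q)$ are exactly the sets $H \setminus H_0$, as $H$ ranges over the projective hyperplanes different from $H_0$. I claim that $\overline D \setminus H_0$ meets every such affine hyperplane. Indeed, fix $H \ne H_0$; the cutting property (Definition \ref{def:cutting} with $s=1$) says that $\overline D \cap H$ is contained in no hyperplane other than $H$, and in particular $\overline D \cap H \not\subseteq H_0$. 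Hence $\overline D$ contains a point lying in $H$ but not in $H_0$, that is, a point of $\overline D \setminus H_0$ lying on the affine hyperplane $H \setminus H_0$. Thus $\overline D \setminus H_0$ is an affine blocking set of $\mathrm{AG}(k-1,q)$.

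Finally I would apply Jamison's theorem (Theorem \ref{thm:Jamison77}) to the $(k-1)$-dimensional affine space: every subset meeting all of its hyperplanes has at least $(k-1)(q-1)+1$ points. This yields $\#(\overline D \setminus H_0) \ge (q-1)(k-1)+1$ for every $H_0$, and combined with the reduction of the first paragraph gives $d \ge (q-1)(k-1)+1$, as desired. I expect the main obstacle to be the correct bookkeeping in the geometric reduction, namely verifying the bijection between the affine hyperplanes of $\mathrm{AG}(k-1,q)$ and the projective hyperplanes $H \ne H_0$, and checking that the cutting condition supplies a point of $\overline D$ in the affine (rather than the infinite) part of each such $H$. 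Once that correspondence is set up cleanly, the cutting hypothesis feeds directly into Jamison's bound and the numerics are immediate.
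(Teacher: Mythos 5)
Your proof is correct and follows essentially the same route as the paper: reduce to the projective case, show that the portion of the cutting blocking set $\overline{D}$ lying off a hyperplane is an affine blocking set of the $(k-1)$-dimensional affine space, and invoke Jamison's bound (Theorem \ref{thm:Jamison77}). The only cosmetic difference is that the paper anchors the argument at the single hyperplane determined by a minimum-weight codeword and checks the blocking property in coordinates via Proposition \ref{prop:cutting-fold}, whereas you argue synthetically for every hyperplane at once directly from Definition \ref{def:cutting} and then take the minimum.
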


\begin{proof}
It suffices to prove the theorem in the case when $\C$ is a projective code.
Without loss of generality we can assume that the vector $(0, 0, \cdots, 0, 1, 1, \cdots, 1)$ is a minimum-weight codeword in $\C$.
Thus there exists a generator matrix of $\C$ under the form:
\begin{eqnarray*}
G=\left[
\begin{array}{cccc|cccc}
0 & 0 & \cdots & 0 & 1 & 1 & \cdots &1 \\
\hline
u_{1,1} & u_{1,2} & \cdots & u_{1,n-d} & v_{1,1} & v_{1,2} & \cdots & v_{1,d}\\
u_{2,1} & u_{2,2} & \cdots & u_{2,n-d} & v_{2,1} & v_{2,2} & \cdots & v_{2,d}\\
\vdots  & \vdots & \cdots & \vdots & \vdots & \vdots & \cdots & \vdots \\
u_{k-1,1} & u_{k-1,2} & \cdots & u_{k-1,n-d} & v_{k-1,1} & v_{k-1,2} & \cdots & v_{k-1,d}\\
\end{array}
\right],
\end{eqnarray*}
where $n$ is the length of the linear code $\C$.
Write $\mathbf{u}_i= (u_{1,i},  \cdots, u_{k-1,i}) $ and $\mathbf{v}_j= (v_{1,j}, \cdots, v_{k-1,j})$, where $1 \le i \le n-d$ and $1 \le j \le d$.
Let $D_0=\left \{ (0, \mathbf{u}_i): 1 \le i \le n-d \right \}$, $D_1=\left \{ (1, \mathbf{v}_j): 1 \le j \le d \right \}$
and $B=\left \{ \mathbf{v}_j: 1 \le j \le d \right \}$. From Theorem \ref{thm:characterization}
it follows that $D_0 \dot \cup D_1$ is a vectorial cutting blocking set in the $k$-dimensional vector space $\gf (q)^k$.
We next claim that $B$ is an affine cutting blocking set in $\gf (q)^{k-1}$.
Let $H$ be any affine hyperplane in the affine space $\gf(q)^{k-1}$ defined by $a_1 x_1 + \cdots a_{k-1} x_{k-1}=b$,
where $(a_1, \cdots, a_{k-1}) \in \gf(q)^{k-1} \setminus \{\mathbf 0\}$ and $b\in \gf(q)$.
Let us denote by $\widetilde{H}$ the hyperplane $\left \{ (x_0, \cdots, x_{k-1}) \in \gf(q)^k: -b x_0 +a_1 x_1 + \cdots a_{k-1} x_{k-1}= 0 \right \}$.
Combining Proposition \ref{prop:cutting-fold}  with Theorem \ref{thm:characterization} we see that the set $\widetilde{H} \cap (D_0 \dot \cup D_1)$
is $(k-1)$-dimensional.
It is evident that $\mathrm{dim}\left ( \mathrm{Span} \left ( \widetilde{H} \cap D_0 \right ) \right ) \le  k-2$.
Consequently, the intersection of $\widetilde{H}$ and $D_1$ is not empty.
In particular, there exists a vector $(x_1, \cdots, x_{k-1}) \in B$ such
that $(1, x_1, \cdots, x_{k-1}) \in \widetilde{H} \cap D_1$
, that is, $a_1 x_1 + \cdots a_{k-1} x_{k-1}= b$. Thus $(x_1, \cdots, x_{k-1}) \in H \cap B$.
According to the above discussion, it follows that $B$ is an affine blocking set in $\gf(q)^{k-1}$.
The desired conclusion then follows from Theorem \ref{thm:Jamison77}.
\end{proof}

Combining Theorem \ref{thm:abs-bound} with Griesmer bound in (\ref{eq:Griesmer})  we get the following theorem, which improves
the results in \cite{ABN} and  \cite{LWC}.
\begin{theorem}
Let $k\ge 2$ and let $\C$ be a $q$-ary $[n,k]$ minimal  linear code. Then,
\[ n \ge (q-1)(k-1)+1 + \sum_{i=1}^{k-1} \left \lceil \frac{(q-1)(k-1)+1}{q^i} \right \rceil. \]
In particular, there does not exist a $q$-ary minimal linear code of length $n$ and dimension larger than $ \frac{n}{q}+1$.
\end{theorem}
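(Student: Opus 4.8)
The plan is to read off the bound by chaining the two results already at our disposal: the minimum-distance estimate of Theorem~\ref{thm:abs-bound} and the Griesmer bound~(\ref{eq:Griesmer}). Set $d_0 := (q-1)(k-1)+1$. By Theorem~\ref{thm:abs-bound} the minimum distance $d$ of $\C$ satisfies $d \ge d_0$, while~(\ref{eq:Griesmer}) gives $n \ge d + \sum_{i=1}^{k-1}\left\lceil d/q^i\right\rceil$. The one observation to make is that the right-hand side is a non-decreasing function of $d$, since the leading term $d$ and each ceiling $\left\lceil d/q^i\right\rceil$ grow weakly with $d$; hence I may replace $d$ by its lower bound $d_0$ without reversing the inequality. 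This immediately produces
\[
n \ge d_0 + \sum_{i=1}^{k-1}\left\lceil \frac{d_0}{q^i}\right\rceil = (q-1)(k-1)+1 + \sum_{i=1}^{k-1}\left\lceil \frac{(q-1)(k-1)+1}{q^i}\right\rceil,
\]
which is the asserted bound.

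For the concluding ``in particular'', I would not evaluate the ceilings but bound them crudely. Since $d \ge d_0 \ge 1$, every summand obeys $\left\lceil d/q^i\right\rceil \ge 1$ for $1 \le i \le k-1$, so the Griesmer bound alone yields
\[
n \ge d + (k-1) \ge (q-1)(k-1)+1+(k-1) = q(k-1)+1.
\]
Therefore $n > q(k-1)$, equivalently $k < \tfrac{n}{q}+1$, and so no minimal $q$-ary linear code of length $n$ can have dimension larger than $\tfrac{n}{q}+1$.

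There is essentially no obstacle here beyond bookkeeping, since both inputs are already established and no further geometric or combinatorial argument is required. The two points that demand a little care are the monotonicity claim that legitimises the substitution $d \mapsto d_0$ in the Griesmer expression, and the tracking of the strict inequality $n \ge q(k-1)+1$ so as to land on $k < \tfrac{n}{q}+1$ rather than a weaker estimate. The hypothesis $k \ge 2$ guarantees that the index set $1 \le i \le k-1$ is non-empty, so both displays are meaningful.
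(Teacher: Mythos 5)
Your proposal is correct and follows exactly the route the paper intends: the paper states this theorem as an immediate consequence of combining Theorem~\ref{thm:abs-bound} with the Griesmer bound~(\ref{eq:Griesmer}), and your write-up simply fills in the routine details (the monotonicity of $d \mapsto d + \sum_{i=1}^{k-1}\lceil d/q^i \rceil$ justifying the substitution $d \mapsto d_0$, and the crude bound $\lceil d/q^i\rceil \ge 1$ giving $n \ge q(k-1)+1$ for the dimension claim). Nothing is missing; the argument is sound as stated.
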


The lower bound $(q-1)(k-1)+1 + \sum_{i=1}^{k-1} \left \lceil \frac{(q-1)(k-1)+1}{q^i} \right \rceil$
on the length $n$ of minimal $[n,k, d]_q$ linear codes is not very tight.
It would be nice if the following problem can be settled.

\begin{open}
Determine the minimum length of minimal $q$-ary linear codes with dimension $k$.
\end{open}

\section{New constructions of minimal linear codes}\label{sec:constr}

In this section, we present new primary and secondary constructions of minimal linear codes.
\subsection{Primary construction of minimal codes via unions of hyperplanes}
Let $H_{\mathbf{a}}$ denote
the hyperplane $\left \{ \mathbf x \in  \gf(q)^k : \langle \mathbf a , \mathbf x \rangle =0  \right \} $, where $\mathbf{a} \in \gf(q)^k$ is nonzero.
For any nonempty subset $S$ of $\gf(q)^k \setminus \{\mathbf 0\}$, let $D_{S}$ be the subset of $\gf(q)^k$ defined by
\begin{align}\label{D-S}
D_{S}=\left ( \cup_{\mathbf a \in S} H_{\mathbf a} \right ) \setminus \{\mathbf 0\}.
\end{align}
\begin{proposition}\label{prop:DS-cutting}
Let $S$ be a nonempty proper subset of $\gf(q)^k \setminus \{\mathbf 0\}$ where $k\ge 3$.
Then the set $D_S$ defined by (\ref{D-S}) is a vectorial cutting blocking set in $\gf(q)^k$,
if and only if, the dimension of $\mathrm{Span}(S)$ is at least $3$.

\end{proposition}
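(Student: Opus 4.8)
The plan is to convert the combinatorial ``cutting'' condition into a linear-algebraic spanning condition and then read off the dimension of $\mathrm{Span}(S)$. Since each $H_{\mathbf a}$ is a subspace, $D_S$ is closed under nonzero scalar multiplication and hence is not projective; so first I would pass to a projection $\overline{D_S}$ and apply Proposition \ref{prop:cutting-fold} together with the observation (following Definition \ref{def:cutting}) that $D_S$ is cutting exactly when $\overline{D_S}$ is. This reduces the claim to a single condition: $D_S$ is a vectorial cutting blocking set if and only if $D_S \cap H$ spans $H$ (is $(k-1)$-dimensional) for every hyperplane $H$ of $\gf(q)^k$. Note this already incorporates the blocking requirement, since for $k \ge 3$ a $0$-dimensional intersection can never be $(k-1)$-dimensional.

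The heart of the argument is an explicit description of $\mathrm{Span}(D_S \cap H)$. Writing a general hyperplane as $H_{\mathbf b}$ with $\mathbf b \neq \mathbf 0$, I would compute
\[ D_S \cap H_{\mathbf b} = \Bigl( \bigcup_{\mathbf a \in S} \bigl( H_{\mathbf a} \cap H_{\mathbf b} \bigr) \Bigr) \setminus \{\mathbf 0\}, \]
and then use $H_{\mathbf a} \cap H_{\mathbf b} = \mathrm{Span}(\mathbf a, \mathbf b)^{\perp}$ together with the identity $\sum_i W_i^{\perp} = \bigl( \bigcap_i W_i \bigr)^{\perp}$ to obtain
\[ \mathrm{Span}\bigl( D_S \cap H_{\mathbf b} \bigr) = \Bigl( \bigcap_{\mathbf a \in S} \mathrm{Span}(\mathbf a, \mathbf b) \Bigr)^{\perp}. \]
Since $H_{\mathbf b} = \langle \mathbf b \rangle^{\perp}$, this shows that $D_S \cap H_{\mathbf b}$ spans $H_{\mathbf b}$ if and only if $\bigcap_{\mathbf a \in S} \mathrm{Span}(\mathbf a, \mathbf b) = \langle \mathbf b \rangle$. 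Every plane $\mathrm{Span}(\mathbf a, \mathbf b)$ contains $\langle \mathbf b \rangle$, so this intersection strictly exceeds $\langle \mathbf b \rangle$ precisely when all the planes $\mathrm{Span}(\mathbf a, \mathbf b)$, $\mathbf a \in S$, coincide in one common $2$-dimensional plane; in particular this forces every $\mathbf a \in S$ to be independent of $\mathbf b$.

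With this equivalence the sufficiency direction is short: if $\dim \mathrm{Span}(S) \ge 3$ then $S$ does not lie in any $2$-dimensional plane, so the planes $\mathrm{Span}(\mathbf a, \mathbf b)$ cannot all coincide, the intersection equals $\langle \mathbf b \rangle$ for every $\mathbf b$, and $D_S$ is cutting. For the converse I would prove the contrapositive, assuming $\dim \mathrm{Span}(S) \le 2$ and exhibiting a hyperplane on which $D_S$ fails to span. If $\dim \mathrm{Span}(S) = 1$ then $D_S$ is a single hyperplane and any $\mathbf b$ independent of its normal vector works. The case $\dim \mathrm{Span}(S) = 2$ is the one I expect to be the main obstacle: here the witness $\mathbf b$ must be chosen inside the plane $P = \mathrm{Span}(S)$, and the delicate point is that $\mathbf b$ must be proportional to no element of $S$ -- otherwise that element collapses $\mathrm{Span}(\mathbf a, \mathbf b)$ to $\langle \mathbf b \rangle$ and spuriously restores the spanning property. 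Producing such a $\mathbf b$ requires checking that $S$ does not already meet every one of the $q+1$ lines of $P$, and this degenerate configuration (in which $D_S$ fills the whole space) is exactly the point that has to be treated with care in the necessity direction.
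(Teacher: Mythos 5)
Your reduction via Proposition \ref{prop:cutting-fold} and the identity $\mathrm{Span}(D_S\cap H_{\mathbf b})=\bigl(\bigcap_{\mathbf a\in S}\mathrm{Span}(\mathbf a,\mathbf b)\bigr)^{\perp}$ are sound, and they constitute a genuinely different route from the paper's proof, which never leaves Definition \ref{def:cutting}: for sufficiency the paper picks $\mathbf a\in S$ with $\{\mathbf a_1,\mathbf a_2,\mathbf a\}$ linearly independent and solves the system $\langle\mathbf a_1,\mathbf x\rangle=1$, $\langle\mathbf a_2,\mathbf x\rangle=0$, $\langle\mathbf a,\mathbf x\rangle=0$ to exhibit a point of $H_{\mathbf a_2}\cap D_S$ outside $H_{\mathbf a_1}$, and for necessity it covers $\gf(q)^k$ by the pencil of $q+1$ hyperplanes whose normals lie in a fixed plane and selects one not contained in $D_S$. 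Your characterization (cutting if and only if $\bigcap_{\mathbf a\in S}\mathrm{Span}(\mathbf a,\mathbf b)=\langle\mathbf b\rangle$ for every $\mathbf b\neq\mathbf 0$) is arguably cleaner, and your sufficiency argument and your $\dim\mathrm{Span}(S)=1$ case are complete and correct.

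The gap is exactly the one you flagged and did not close: in the case $\dim\mathrm{Span}(S)=2$ you need a line of $P=\mathrm{Span}(S)$ disjoint from $S$, and under the stated hypotheses such a line need not exist. Moreover, this is not a technicality that more care can repair, because in the degenerate configuration the statement itself fails. If $S$ contains a representative of each of the $q+1$ lines of $P$ (for instance $q=2$, $k=3$, $S=\{(1,0,0),(0,1,0),(1,1,0)\}$, a proper subset of $\gf(q)^k\setminus\{\mathbf 0\}$), then for every $\mathbf x\in\gf(q)^k$ the functional $\mathbf a\mapsto\langle\mathbf a,\mathbf x\rangle$ has a kernel meeting $P$ in a line, hence $\mathbf x\in H_{\mathbf a}$ for some $\mathbf a\in S$; thus $D_S=\gf(q)^k\setminus\{\mathbf 0\}$, whose intersection with any hyperplane $H$ is $H\setminus\{\mathbf 0\}$ and so spans $H$. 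This $D_S$ is a vectorial cutting blocking set even though $\dim\mathrm{Span}(S)=2$, contradicting the ``only if'' direction. The paper's own proof leaps over precisely this point with the assertion ``Since $D_S$ is a proper subset of $\gf(q)^k$,'' which does not follow from properness of $S$. So your argument closes, with your witness $\mathbf b$ taken on any line of $P$ missing $S$, exactly when one adds the hypothesis the paper tacitly uses, namely $D_S\cup\{\mathbf 0\}\neq\gf(q)^k$; without that correction neither your proof, nor the paper's, nor the proposition as literally stated, is valid.
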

\begin{proof}
Assume that $\mathrm{dim}_{\gf(q)} \left ( \mathrm{Span}(S) \right ) =1$. Then
$D_S= H_{\mathbf a} \setminus \{\mathbf 0\}$ for some  $\mathbf a \in \gf(q)^k$.
Choose any $\mathbf a' \in \gf(q)^k \setminus \left \langle \mathbf a \right \rangle$, then
$H_{\mathbf a'+ \mathbf a} \cap D_S \subseteq H_{\mathbf a'}$.
Note that $H_{\mathbf a'+ \mathbf a}$ and  $H_{\mathbf a'}$ are two distinct hyperplanes, which shows
that $D_S$ is not a vectorial cutting blocking set in $\gf(q)^k$ from Definition \ref{def:cutting}.

For the case $\mathrm{dim}_{\gf(q)} \left ( \mathrm{Span}(S) \right ) =2$, there exist $\mathbf a , \mathbf a'\in \gf(q)^k$
such that \[D_S= \left ( \cup_{i=1}^{h} H_{\mathbf a + \lambda_i \mathbf a'} \right ) \cup \left ( H_{\mathbf a}  \cup H_{\mathbf a'} \right ),\]
where $\lambda_i \in \gf(q)^*$. Since $D_S$ is a proper subset of $\gf(q)^k$ and
$\gf(q)^k=  \left ( \cup_{\lambda \in \gf(q)^*} H_{\mathbf a + \lambda \mathbf a'} \right ) \cup \left ( H_{\mathbf a}  \cup H_{\mathbf a'} \right )$,
there exists a $\lambda \in \gf(q)^*$ such that $D_S$ does not contain the hyperplane  $H_{\mathbf a + \lambda \mathbf a'}$.
It is observed that $H_{\mathbf a + \lambda \mathbf a'} \cap H_{\mathbf a''}=H_{\mathbf a} \cap H_{\mathbf a'}$ for any $\mathbf a'' \in S$,
which implies that $H_{\mathbf a + \lambda \mathbf a'} \cap D_S \subseteq H_{\mathbf a} $.
By Definition \ref{def:cutting},  $D_S$ is not a vectorial cutting blocking set in $\gf(q)^k$.

Suppose that $\mathrm{dim}_{\gf(q)} \left ( \mathrm{Span}(S) \right ) \ge 3$.
 Let $H_{\mathbf a_1}$ and $H_{\mathbf a_2}$ be two distinct hyperplanes. Since the dimension of $\mathrm{Span}(S)$
is at least $3$, there exists $\mathbf a\in S$ such that $\{\mathbf a_1, \mathbf a_2, \mathbf a\}$
is $3$-dimensional. Then, there exists a solution $\mathbf x \in \gf(q)^k$ for the linear system
\[
\left\{
\begin{aligned}
\langle \mathbf a_1, \mathbf x \rangle & = 1,\\
\langle \mathbf a_2, \mathbf x \rangle & = 0,\\
\langle \mathbf a, \mathbf x \rangle & = 0.\\
\end{aligned}
\right.
\]
From $\mathbf x \in H_{\mathbf a} \subseteq D_S$,  one obtains $\mathbf x \in H_{\mathbf a_2} \cap  D_S$
and $\mathbf x \not\in H_{\mathbf a_1}$, that is, $H_{\mathbf a_2} \cap  D_S $ is not contained in the hyperplane
$H_{\mathbf a_1} $.
It follows that $D_S$ is a vectorial cutting blocking set in $\gf(q)^k$ from Definition \ref{def:cutting}.
This completes the proof.
\end{proof}

The following theorem is derived from Theorem \ref{thm:characterization} and Proposition \ref{prop:DS-cutting},
 which describes minimal codes constructed by unions of hyperplanes.
\begin{theorem}\label{thm:projC-D}
Let $S$ be a nonempty proper subset of $\gf(q)^k \setminus \{\mathbf 0\}$ where $k\ge 3$.
Let  $D_S$ be the set and $\C_{D_S}$ be the code defined by (\ref{D-S})  and (\ref{def:C-D}), respectively.  Then  $\C_{D_S}$  is
a minimal code of dimension $k$,
if and only if, the dimension of $\mathrm{Span}(S)$ is at least $3$.
\end{theorem}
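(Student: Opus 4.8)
The plan is to reduce the statement to the two results it is built on, with one extra bookkeeping ingredient: the dimension of $\C_{D_S}$ equals $\dim_{\gf(q)}\mathrm{Span}(D_S)$, so the clause ``of dimension $k$'' is a genuine extra constraint that must be tracked separately from minimality. First I would translate the hypothesis on $S$ into a statement about $\mathrm{Span}(D_S)$. If $\dim_{\gf(q)}\mathrm{Span}(S)=1$, then every $\mathbf a\in S$ is proportional to a fixed nonzero vector, so $D_S=H_{\mathbf a}\setminus\{\mathbf 0\}$ is a single punctured hyperplane and $\mathrm{Span}(D_S)=H_{\mathbf a}$ has dimension $k-1$. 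If instead $\dim_{\gf(q)}\mathrm{Span}(S)\ge 2$, then $S$ contains two linearly independent vectors $\mathbf a_1,\mathbf a_2$, the distinct hyperplanes satisfy $\dim_{\gf(q)}(H_{\mathbf a_1}\cap H_{\mathbf a_2})=k-2$, and hence $H_{\mathbf a_1}\cup H_{\mathbf a_2}$ already spans all of $\gf(q)^k$. Thus $\C_{D_S}$ has dimension exactly $k$ if and only if $\dim_{\gf(q)}\mathrm{Span}(S)\ge 2$.

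With this in hand the forward direction is immediate: assuming $\dim_{\gf(q)}\mathrm{Span}(S)\ge 3$, the previous paragraph gives $\dim\C_{D_S}=k$, Proposition \ref{prop:DS-cutting} gives that $D_S$ (equivalently its projection $\overline{D_S}$) is a vectorial cutting blocking set in $\gf(q)^k$, and Theorem \ref{thm:characterization} then yields that $\C_{D_S}$ is minimal. For the converse I would split into the two remaining cases. When $\dim_{\gf(q)}\mathrm{Span}(S)=2$ the code has dimension $k$, but $D_S$ fails to be a cutting blocking set by Proposition \ref{prop:DS-cutting}, so Theorem \ref{thm:characterization} shows $\C_{D_S}$ is not minimal. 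When $\dim_{\gf(q)}\mathrm{Span}(S)=1$ the first paragraph already shows $\dim\C_{D_S}=k-1\ne k$, so $\C_{D_S}$ is not a code of dimension $k$ at all. In either case the conclusion ``minimal of dimension $k$'' fails, completing the contrapositive.

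The step I expect to be the genuine obstacle, and the reason the ``of dimension $k$'' qualifier cannot be dropped, is the case $\dim_{\gf(q)}\mathrm{Span}(S)=1$. Here $D_S$ is a full punctured hyperplane, whose projection is all of $\mathrm{PG}(k-2,q)$ embedded as a hyperplane of $\mathrm{PG}(k-1,q)$; a direct count shows the resulting code is constant-weight (a repeated simplex code) and hence \emph{is} minimal, for instance by Lemma \ref{lem:AB}. So one cannot derive failure of minimality from Proposition \ref{prop:DS-cutting} in this case --- the actual point is that the code drops to dimension $k-1$. Care is therefore needed to observe that Theorem \ref{thm:characterization} is stated for a genuine $[n,k]$ code, i.e.\ one whose defining multiset spans $\gf(q)^k$; invoking it only after establishing $\dim\C_{D_S}=k$ (hence only when $\dim_{\gf(q)}\mathrm{Span}(S)\ge 2$) keeps the argument clean, and the degenerate case $\dim_{\gf(q)}\mathrm{Span}(S)=1$ is then dispatched purely by the dimension count.
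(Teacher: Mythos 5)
Your proof is correct and takes the same route as the paper, which derives Theorem \ref{thm:projC-D} directly from Proposition \ref{prop:DS-cutting} and Theorem \ref{thm:characterization} without spelling out any further detail. The bookkeeping you add --- that $\dim \C_{D_S}=\dim_{\gf(q)}\mathrm{Span}(D_S)$, and that in the case $\dim_{\gf(q)}\mathrm{Span}(S)=1$ the code is in fact still minimal (constant weight, e.g.\ by Lemma \ref{lem:AB}) but of dimension $k-1$, so that only the ``dimension $k$'' clause fails there --- is precisely the detail the paper leaves implicit, and you handle it correctly.
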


As a result of Theorem \ref{thm:projC-D},  we have the following for  minimal projective codes.
\begin{corollary}
Let $S$ be a nonempty proper subset of $\gf(q)^k \setminus \{\mathbf 0\}$ where $k\ge 3$.
Let  $D_S$ be the set defined by (\ref{D-S})  and let $\overline{D}_S$  be any projection of $D_S$.  Then  $\C_{\overline{D}_S}$  is
a minimal projective  code of dimension $k$,
if and only if, the dimension of $\mathrm{Span}(S)$ is at least $3$.

\end{corollary}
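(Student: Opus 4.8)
The plan is to obtain this corollary as the projective counterpart of Theorem \ref{thm:projC-D}, by transferring minimality, dimension, and the spanning condition between the defining multiset $D_S$ and one of its projections $\overline{D}_S$. Almost all the work is already in place: Theorem \ref{thm:projC-D} characterizes exactly when $\C_{D_S}$ is minimal of dimension $k$, so the only thing left is to argue that passing to a projection changes neither the minimality nor the dimension, while automatically producing a projective code.

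First I would record that $\overline{D}_S$ is, by the definition of a projection, a projective subset of $\gf(q)^k \setminus \{\mathbf 0\}$, so $\C_{\overline{D}_S}$ is a projective code and the word ``projective'' in the conclusion comes for free. Next I would invoke the equivalence noted in the proof of Theorem \ref{thm:characterization}, namely that $\C_{D_S}$ is minimal if and only if $\C_{\overline{D}_S}$ is minimal. One may equivalently route this through the fact stated after Definition \ref{def:cutting}, that $D_S$ is a vectorial cutting blocking multiset if and only if $\overline{D}_S$ is a vectorial cutting blocking set, and then apply Theorem \ref{thm:characterization} to each code. Since this equivalence does not depend on the particular projection chosen, the conclusion will hold for \emph{any} projection $\overline{D}_S$, as the statement requires. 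Combining these observations, $\C_{\overline{D}_S}$ is a minimal projective code precisely when $\C_{D_S}$ is minimal, which by Theorem \ref{thm:projC-D} happens exactly when $\dim \mathrm{Span}(S) \ge 3$.

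The one place requiring a little care --- and the main obstacle --- is pinning down that the dimension is \emph{exactly} $k$ and not merely at most $k$, since the generic construction (\ref{def:C-D}) a priori only guarantees $\dim \C_{\overline{D}_S} \le k$. The cleanest way to handle this is through the identity $\dim \C_{D} = \dim \mathrm{Span}(D)$ for the construction in (\ref{def:C-D}) (the kernel of $\mathbf v \mapsto (\langle \mathbf v, \mathbf g_i\rangle)_i$ is the orthogonal complement of $\mathrm{Span}(D)$), together with $\mathrm{Span}(D_S) = \mathrm{Span}(\overline{D}_S)$, which holds because each element of $D_S$ is a scalar multiple of a unique element of $\overline{D}_S$ and conversely. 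These two facts transfer the ``dimension $k$'' conclusion of Theorem \ref{thm:projC-D} directly to $\C_{\overline{D}_S}$.

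Alternatively, if one prefers an intrinsic argument for the dimension, I would reason as follows in the regime $\dim \mathrm{Span}(S) \ge 3$: there $\overline{D}_S$ is a cutting blocking set, so by Proposition \ref{prop:cutting-fold} it meets every $(k-1)$-dimensional subspace in a $(k-1)$-dimensional set. Were $\overline{D}_S$ contained in a single hyperplane $H_0$, then intersecting with any second hyperplane $H$ would give $\overline{D}_S \cap H \subseteq H_0 \cap H$, which is at most $(k-2)$-dimensional, contradicting the cutting property. Hence $\overline{D}_S$ spans all of $\gf(q)^k$, forcing $\dim \mathrm{Span}(\overline{D}_S) = k$ and dimension $k$ for the code. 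This closes the only spot where ``dimension $k$'' could otherwise fail, and completes the equivalence.
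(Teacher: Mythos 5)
Your proposal is correct and follows essentially the same route as the paper, which states this corollary without proof as a direct consequence of Theorem \ref{thm:projC-D}: you transfer minimality via the equivalence $\C_{D_S}$ minimal $\Leftrightarrow$ $\C_{\overline{D}_S}$ minimal (equivalently, via the fact that $D_S$ is a vectorial cutting blocking multiset iff $\overline{D}_S$ is a vectorial cutting blocking set), note that projectivity of $\overline{D}_S$ gives projectivity of the code, and handle the dimension through $\mathrm{Span}(D_S)=\mathrm{Span}(\overline{D}_S)$. Your extra care on the ``dimension exactly $k$'' point (including the intrinsic argument via Proposition \ref{prop:cutting-fold} that a cutting blocking set cannot lie in a hyperplane) is a valid and welcome filling-in of details the paper leaves implicit.
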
\label{lem:N(a,T)}
It is easily observed that the following lemma holds.
\begin{lemma}\label{lem:N(a,T)}
Let  $T$ be a subset of cardinality $t$ of $\{1,2,\cdots, k\}$ and $\mathbf a \in \gf(q)^k$.
Let   $N(\mathbf a, T)$  denote the number of solutions $\mathbf x = (x_1, \cdots, x_k)\in \gf(q)^k$  of the following  system of linear equations
$$
\begin{cases}
   x_j  =0, &\text{ for } j \in T,
   \cr  \langle \mathbf a , \mathbf x \rangle   =0.
\end{cases}
$$
Then
$$
N(\mathbf a, T)=
\begin{cases}
 q^{k-t},   & \text{ if} \quad \mathrm{Supp}(\mathbf a) \subseteq T,
 \cr   q^{k-t-1},  & \text{otherwise}.
\end{cases}
$$

\end{lemma}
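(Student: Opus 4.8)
The plan is to count the number of solutions $\mathbf{x} \in \gf(q)^k$ directly by splitting into the two cases according to whether $\supp(\mathbf a) \subseteq T$, and analyzing how the single linear constraint $\langle \mathbf a, \mathbf x \rangle = 0$ interacts with the coordinate constraints $x_j = 0$ for $j \in T$. First I would observe that the constraints $x_j = 0$ for $j \in T$ cut the solution set down to the subspace of vectors supported on the complement $T^c = \{1, \ldots, k\} \setminus T$, which has dimension $k - t$. So the problem reduces to counting solutions of the restricted equation $\langle \mathbf a, \mathbf x \rangle = 0$ on this $(k-t)$-dimensional space, where effectively only the coordinates $a_j$ with $j \in T^c$ matter.

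The key dichotomy is whether the restriction of $\mathbf a$ to $T^c$ is the zero vector. If $\supp(\mathbf a) \subseteq T$, then $a_j = 0$ for every $j \in T^c$, so the equation $\langle \mathbf a, \mathbf x \rangle = 0$ becomes $\sum_{j \in T^c} a_j x_j = 0$, which is identically satisfied by every vector in the $(k-t)$-dimensional subspace; hence $N(\mathbf a, T) = q^{k-t}$. Otherwise, some $a_j \neq 0$ with $j \in T^c$, so $\langle \mathbf a, \mathbf x \rangle = 0$ is a nontrivial linear functional on the $(k-t)$-dimensional space; its kernel is a hyperplane of that space, of dimension $k - t - 1$, giving $N(\mathbf a, T) = q^{k-t-1}$. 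This yields exactly the two stated values.

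The main technical point — and really the only step needing care — is making the reduction to the subspace rigorous, i.e.\ justifying that imposing $x_j = 0$ for $j \in T$ exactly trades the ambient $\gf(q)^k$ for a $(k-t)$-dimensional space on which $\mathbf a$ acts through its restriction to $T^c$. Since $\# T = t$ and the coordinate functionals are independent, this is transparent, and the rank-nullity argument for the kernel of a nonzero linear functional is standard. I expect no real obstacle here; the statement is essentially a bookkeeping exercise in counting kernels of linear maps over $\gf(q)$, and the phrase \emph{It is easily observed} in the lemma's preamble signals precisely this. The whole proof can be carried out in a few lines once the case split on $\supp(\mathbf a)$ versus $T$ is set up.
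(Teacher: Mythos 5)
Your proof is correct, and it is exactly the standard argument the paper has in mind when it dismisses this lemma with ``It is easily observed'': the paper supplies no proof at all, and your reduction to the $(k-t)$-dimensional coordinate subspace followed by the rank--nullity dichotomy on whether $\mathbf a$ restricts to the zero functional on $T^c$ is the intended bookkeeping. Your reading of $\langle \mathbf a , \mathbf x \rangle$ as the standard dot product $\sum_j a_j x_j$ is also the one the paper uses throughout Section~4, so there is nothing to add.
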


\begin{theorem}\label{thm:x1xh=0}
Let $h$ and $k$ be two integers with $3\le h \le  k$ and
$D=\{(x_1, \cdots, x_k)\in \gf(q)^k \setminus \{0\}: x_1\cdots x_h=0\}$.
Then the code $\C_D$ in Equation (\ref{def:C-D}) is a
$\left [ (q^h-(q-1)^h)q^{k-h}-1,k,  w_{\min} \right ]$ minimal linear code
with the weight distribution in Table \ref{table:x1xh=0}, where $w_{\min}=q^{k-h} ((q-1)q^{h-1}-(q-1)^h)$. Furthermore,
\[ \frac{w_{\min}}{w_{\max}} \le \frac{q-1}{q},\]
if and only if $h\le 1+ \frac{1}{\log_{2}(\frac{q}{q-1})}$.
\begin{table}[htbp]
\centering
\caption{The weight distribution of the code  $\mathcal C_D$ of Theorem \ref{thm:x1xh=0}}
\label{table:x1xh=0}
\begin{tabular}{|c|c|}
  \hline
  Weight $w$ & No. of codewords $A_w$ \\
  \hline
  $0$& $1$\\
\hline
$(q-1) q^{k-h-1}\left (  q^h-(q-1)^h \right )$ & $q^k-q^h$\\
\hline
$(q-1) q^{k-h-1}\left (  q^h-(q-1)^h \right )$ & \\
$  +(-1)^{s} q^{k-h-1}(q-1)^{h-s+1}$ & $(q-1)^s \binom{h}{s}$\\
\text{ for } $s=1,2, \cdots, h$ & \\
\hline
\end{tabular}
\end{table}

\end{theorem}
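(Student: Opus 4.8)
The plan is to derive every assertion from the basic weight identity \eqref{eq:Supp-H}, which turns the whole statement into a combinatorial count. First I would identify $D$ with the set $D_S$ of \eqref{D-S} for $S=\{\mathbf e_1,\dots,\mathbf e_h\}$, the first $h$ standard basis vectors: since $H_{\mathbf e_i}=\{\mathbf x:x_i=0\}$, the condition $x_1\cdots x_h=0$ is exactly $\mathbf x\in\bigcup_{i=1}^h H_{\mathbf e_i}$, so $D=D_S$. As $\mathrm{Span}(S)$ has dimension $h\ge 3$, Theorem \ref{thm:projC-D} immediately yields that $\C_D$ is a minimal code of dimension $k$, so both minimality and the dimension come for free. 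The length is $n=\#D$, obtained by subtracting from $q^k$ the $(q-1)^h q^{k-h}$ vectors with $x_1\cdots x_h\ne 0$ together with the zero vector, giving $n=(q^h-(q-1)^h)q^{k-h}-1$.

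Next, for a fixed nonzero $\mathbf a\in\gf(q)^k$, identity \eqref{eq:Supp-H} gives $\mathrm{wt}(\mathbf c_{\mathbf a})=n-\#(H_{\mathbf a}\cap D)$, so everything hinges on $\#(H_{\mathbf a}\cap D)$. I would compute this by inclusion--exclusion over the events $x_i=0$: setting $A_i=\{\mathbf x:\langle\mathbf a,\mathbf x\rangle=0,\ x_i=0\}$ we have $(H_{\mathbf a}\cap D)\cup\{\mathbf 0\}=\bigcup_{i=1}^h A_i$, and $\#\bigcap_{i\in T}A_i=N(\mathbf a,T)$ is supplied by Lemma \ref{lem:N(a,T)}. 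Since $N(\mathbf a,T)$ equals $q^{k-t}$ precisely when $\mathrm{Supp}(\mathbf a)\subseteq T$ and $q^{k-t-1}$ otherwise, I would split $N(\mathbf a,T)$ into a main term $q^{k-t-1}$ valid for every $T$ plus a correction $(q-1)q^{k-t-1}$ present only when $\mathrm{Supp}(\mathbf a)\subseteq T$. Summing the main term over all nonempty $T\subseteq\{1,\dots,h\}$ collapses, by the binomial theorem applied to $(1-1/q)^h$, to $q^{k-1}-q^{k-h-1}(q-1)^h$; this already settles the case $\mathrm{Supp}(\mathbf a)\not\subseteq\{1,\dots,h\}$, where no correction ever occurs, and produces the weight $(q-1)q^{k-h-1}(q^h-(q-1)^h)$ realized by the $q^k-q^h$ such codewords. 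When $\mathrm{Supp}(\mathbf a)\subseteq\{1,\dots,h\}$ with $|\mathrm{Supp}(\mathbf a)|=s$, reindexing the correction terms as $T=\mathrm{Supp}(\mathbf a)\cup T'$ and applying the binomial theorem to $(1-1/q)^{h-s}$ contributes exactly $(-1)^s q^{k-h-1}(q-1)^{h-s+1}$ to the weight, giving the second nonzero row of Table \ref{table:x1xh=0} with multiplicity $\binom{h}{s}(q-1)^s$. A quick tally, using $\sum_{s=1}^h\binom{h}{s}(q-1)^s=q^h-1$, confirms that $1+(q^k-q^h)+(q^h-1)=q^k$ codewords are accounted for.

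Finally I would read the extremal weights off the table. The $s$-dependent term $(-1)^s q^{k-h-1}(q-1)^{h-s+1}$ is negative exactly for odd $s$ and has magnitude decreasing in $s$, so the minimum weight is attained at $s=1$ and the maximum at $s=2$; simplifying gives $w_{\min}=q^{k-h}((q-1)q^{h-1}-(q-1)^h)$ and $w_{\max}=(q-1)q^{k-h-1}(q^h-(q-1)^h)+q^{k-h-1}(q-1)^{h-1}$. Setting $P=(q-1)(q^h-(q-1)^h)$ and clearing the (positive) denominators, the inequality $\frac{w_{\min}}{w_{\max}}\le\frac{q-1}{q}$ reduces after cancellation to $P\le(q+1)(q-1)^h$, and then to $(q/(q-1))^{h-1}\le 2$, i.e.\ $h\le 1+1/\log_2(q/(q-1))$, which is the claimed equivalence.

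The step I expect to be the main obstacle is the inclusion--exclusion bookkeeping, specifically arranging the two support cases so the alternating binomial sums telescope into closed form; the splitting $N(\mathbf a,T)=q^{k-t-1}+(q-1)q^{k-t-1}$ (correction included only when $\mathrm{Supp}(\mathbf a)\subseteq T$) is the device that reduces both cases to a single use of the binomial theorem. I would also watch the degenerate case $q=2$, where $(q-1)^{h-s+1}=1$ makes several weights coincide, to confirm that the identifications of $w_{\min}$ and $w_{\max}$---and hence the final ratio criterion---remain valid.
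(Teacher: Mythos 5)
Your proposal is correct and follows essentially the same route as the paper's own proof: minimality and dimension via Theorem \ref{thm:projC-D}, the length by direct counting, the weight distribution via inclusion--exclusion over the coordinate hyperplanes using Lemma \ref{lem:N(a,T)} (with the same splitting of $N(\mathbf a,T)$ into a main term plus a correction present only when $\mathrm{Supp}(\mathbf a)\subseteq T$, collapsed by binomial identities), and the ratio criterion by clearing denominators. The extra sanity checks you include (the total codeword count and the $q=2$ tie degeneracy) are harmless additions the paper omits, not a different method.
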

\begin{proof}
By Theorem \ref{thm:projC-D}, $\mathcal C_D$ is a minimal code of dimension $k$.
Clearly, the length of $\mathcal C_D$ is $\# D= (q^h-(q-1)^h) q^{k-h}-1$.

Next, we consider the weight distribution of $\C_D$. Let $\mathbf c_{\mathbf a} = \left ( \langle \mathbf a , \mathbf x \rangle  \right )_{\mathbf x \in D}$ be a codeword in $\C_D$
corresponding to $\mathbf a \in \gf(q)^k \setminus \{\mathbf 0\}$.
Let $\mathrm{zr}(\mathbf c_{\mathbf a})$ be the number of solutions of the system of equations
$$
\left\{
  \begin{aligned}
    x_1 x_2 \cdots x_h  =0,  \\
    \langle \mathbf a , \mathbf x \rangle  =0.
  \end{aligned}
\right.
$$

Applying the inclusion-exclusion principle, we see that
\begin{align*}
\mathrm{zr}(\mathbf c_{\mathbf a})=& \sum_{i=1}^h (-1)^{i-1} \sum_{T\subseteq  \{1, \cdots, h\}, \# T =i} N(\mathbf a, T)\\
=&\sum_{T\subseteq  \{1, \cdots, h\}} (-1)^{\#T-1}N(\mathbf a, T)\\
=& \sum_{\mathrm{Supp}(\mathbf a) \not \subseteq  T\subseteq  \{1, \cdots, h\}} (-1)^{\#T-1}N(\mathbf a, T)\\
&+\sum_{ \mathrm{Supp}(\mathbf a)  \subseteq  T \subset \{1, \cdots, h\}} (-1)^{\#T-1}N(\mathbf a, T).
\end{align*}
where $N(\mathbf a, T)$
is defined as in Lemma \ref{lem:N(a,T)} if $T \neq  \emptyset$ and $N(\mathbf a, T)=0$ if $T= \emptyset$.
By Lemma \ref{lem:N(a,T)}, we deduce
\begin{align*}
\mathrm{zr}(\mathbf c_{\mathbf a})=& \sum_{\mathrm{Supp}(\mathbf a) \not \subseteq  T\subseteq  \{1, \cdots, h\}, T \neq \emptyset} (-1)^{\#T-1} q^{k-\#T-1}\\
&+\sum_{ \mathrm{Supp}(\mathbf a)  \subseteq  T \subset \{1, \cdots, h\}} (-1)^{\#T-1} q^{k-\#T}\\
=& \sum_{  T\subseteq  \{1, \cdots, h\}, T \neq \emptyset} (-1)^{\#T-1} q^{k-\#T-1}\\
&+(q-1)\sum_{ \mathrm{Supp}(\mathbf a)  \subseteq  T \subset \{1, \cdots, h\}} (-1)^{\#T-1} q^{k-\#T-1}\\
=&  \begin{cases}
  \sum_{i=1}^h (-1)^{i+1}  \binom{h}{i} q^{k-i-1},  & \text{if } \mathrm{Supp}(\mathbf a) \not \subseteq  \{1, \cdots, h\},\\
  \cr \sum_{i=1}^h (-1)^{i+1}  \binom{h}{i} q^{k-i-1}\\  +(q-1) \sum_{i=s}^{h} (-1)^{i+1}  \binom{h-s}{i-s}q^{k-i-1}, &\text{otherwise},
  \end{cases}
\end{align*}
where $s= \mathrm{wt}(\mathbf a)$. Then,
one obtains
\begin{align*}
\mathbf{wt}(\mathbf c_{\mathbf a})=& \# D- ( \mathrm{zr}(\mathbf c_{\mathbf a})-1) \nonumber \\
=& \begin{cases}
   (q^h-(q-1)^h) q^{k-h}-\sum_{i=1}^h (-1)^{i+1}  \binom{h}{i} q^{k-i-1}\\  -(q-1) \sum_{i=s}^{h} (-1)^{i+1}  \binom{h-s}{i-s}q^{k-i-1},
   & \text{if }  \mathrm{Supp}(\mathbf a)  \subseteq  \{1, \cdots, h\},\\
  \cr (q^h-(q-1)^h) q^{k-h}- \sum_{i=1}^h (-1)^{i+1}  \binom{h}{i} q^{k-i-1},  & \text{otherwise}.\\
  \end{cases}
  \end{align*}
By the identities $\sum_{i=s}^{h} (-1)^{i+1}  \binom{h-s}{i-s}q^{k-i-1}=(-1)^{s+1} q^{k-h-1}(q-1)^{h-s}$
and $\sum_{i=1}^h (-1)^{i+1} \binom{h}{i} q^{k-i+1}= q^{k-1}-q^{k-h-1}(q-1)^h $, one gets
\begin{align}\label{eq:wt-Dh}
\mathbf{wt}(\mathbf c_{\mathbf a})
= \begin{cases}
   (q-1) q^{k-h-1}\left (  q^h-(q-1)^h \right )\\  +(-1)^{s} q^{k-h-1}(q-1)^{h-s+1},
   & \text{if } \mathrm{Supp}(\mathbf a)  \subseteq  \{1, \cdots, h\},\\
  \cr (q-1) q^{k-h-1}\left (  q^h-(q-1)^h \right ),  & \text{otherwise},\\
  \end{cases}
  \end{align}
where $\mathbf a \neq \mathbf 0$ and $s= \mathrm{wt}(\mathbf a)$.
The weight distribution in Table \ref{table:x1xh=0} then follows from Equation (\ref{eq:wt-Dh}).

From the weight distribution of $\C_D$ in Table \ref{table:x1xh=0}, one gets
\begin{align*}
w_{\min}=(q-1)q^{k-h-1}(q^h-(q-1)^h-(q-1)^{h-1})
\end{align*}
and
\begin{align*}
w_{\max}=(q-1)q^{k-h-1}(q^h-(q-1)^h+(q-1)^{h-2}).
\end{align*}
Then
\begin{align}\label{eq:w/w}
\frac{w_{\min}}{w_{\max}}=\frac{q^{h-1}-(q-1)^{h-1}}{q^{h-1}-(q-2)(q-1)^{h-2}}.
\end{align}
Note that
\begin{align}\label{eq:A/B}
(q-1)B-qA=(q-1)^{h-1} \left (2-\left (\frac{q}{q-1} \right )^{h-1} \right ),
\end{align}
where $A=q^{h-1}-(q-1)^{h-1}$ and $B=q^{h-1}-(q-2)(q-1)^{h-2}$.
Combining Equations (\ref{eq:w/w}) and (\ref{eq:A/B}) gives that
$ \frac{w_{\min}}{w_{\max}} \le \frac{q-1}{q}$,
if and only if $h\le 1+ \frac{1}{\log_{2}(\frac{q}{q-1})}$.
This completes the proof.
\end{proof}
\begin{corollary}\label{thm:projx1xh=0}
Let $h$ and $k$ be two integers with $3\le h \le  k$ and
$D=\{(x_1, \cdots, x_k)\in \gf(q)^k \setminus \{0\}: x_1\cdots x_h=0\}$.
Let $\overline{D}$ be any projection of $D$.
Then the code $\C_{\overline{D}}$ in Equation (\ref{def:C-D}) is a minimal projective code
with parameters $\left [  \frac{q^k-1}{q-1} -q^{k-h}(q-1)^{h-1},k,  q^{k-h} (q^{h-1}-(q-1)^{h-1}) \right ]$
, whose weight distribution is listed in Table \ref{table:projx1xh=0}. Furthermore,
\[ \frac{w_{\min}}{w_{\max}} \le \frac{q-1}{q},\]
if and only if $h\le 1+ \frac{1}{\log_{2}(\frac{q}{q-1})}$.
\begin{table}[htbp]
\centering
\caption{The weight distribution of the code  $\mathcal C_{\overline{D}}$ of Corollary \ref{thm:projx1xh=0}}
\label{table:projx1xh=0}
\begin{tabular}{|c|c|}
  \hline
  Weight $w$ & No. of codewords $A_w$ \\
  \hline
  $0$& $1$\\
\hline
$ q^{k-h-1}\left (  q^h-(q-1)^h \right )$ & $q^k-q^h$\\
\hline
$ q^{k-h-1}\left (  q^h-(q-1)^h \right )$ & \\
$  +(-1)^{s} q^{k-h-1}(q-1)^{h-s}$ & $(q-1)^s \binom{h}{s}$\\
\text{ for } $s=1,2, \cdots, h$ & \\
\hline
\end{tabular}
\end{table}

\end{corollary}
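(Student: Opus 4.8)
The plan is to derive the entire statement from Theorem \ref{thm:x1xh=0} by exploiting the elementary relationship between $\C_D$ and its projective version $\C_{\overline D}$. The crucial structural fact is that the defining set $D=\{\mathbf x\in\gf(q)^k\setminus\{\mathbf 0\}:x_1\cdots x_h=0\}$ is stable under multiplication by nonzero scalars: if $x_1\cdots x_h=0$ then $(\lambda x_1)\cdots(\lambda x_h)=\lambda^h x_1\cdots x_h=0$ for every $\lambda\in\gf(q)^*$. Hence $D$ is a disjoint union of complete punctured one-dimensional subspaces, each of cardinality $q-1$, so any projection satisfies $\#\overline D=\#D/(q-1)$. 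Substituting the value $\#D=(q^h-(q-1)^h)q^{k-h}-1$ recorded in Theorem \ref{thm:x1xh=0} and simplifying yields the claimed length $\frac{q^k-1}{q-1}-q^{k-h}(q-1)^{h-1}$; this short manipulation is the only genuinely computational step.

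Next I would relate the weights. For a fixed $\mathbf a\in\gf(q)^k\setminus\{\mathbf 0\}$, the coordinate of $\bc_{\mathbf a}$ indexed by $\mathbf x$ equals $\langle\mathbf a,\mathbf x\rangle$, and since $\langle\mathbf a,\lambda\mathbf x\rangle=\lambda\langle\mathbf a,\mathbf x\rangle$, the entries attached to proportional points $\mathbf x$ and $\lambda\mathbf x$ vanish simultaneously. Therefore passing from $D$ to $\overline D$ deletes, from every support, all but one representative of each punctured line, so the weight of each codeword of $\C_{\overline D}$ is exactly $1/(q-1)$ times the weight of the corresponding codeword of $\C_D$, while the correspondence $\mathbf a\mapsto\bc_{\mathbf a}$, and hence all multiplicities, are preserved. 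Dividing every weight value in Table \ref{table:x1xh=0} by $q-1$ and keeping the frequencies unchanged reproduces Table \ref{table:projx1xh=0} line by line; in particular $w_{\min}(\C_{\overline D})=q^{k-h}(q^{h-1}-(q-1)^{h-1})$, the stated minimum distance.

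It remains to record that $\C_{\overline D}$ is a minimal projective code of dimension $k$ and to transfer the final equivalence. Projectivity is automatic since $\overline D$ is a projection; the dimension is $k$ because $\overline D$ spans the same space as $D$ (indeed every unit vector lies in $D$, as $h\ge 2$ forces a zero factor); and minimality follows from the observation made in the proof of Theorem \ref{thm:characterization} that $\C_D$ is minimal if and only if $\C_{\overline D}$ is, combined with the minimality of $\C_D$ already established via Theorem \ref{thm:projC-D}. Finally, because $w_{\min}$ and $w_{\max}$ are scaled by the common factor $1/(q-1)$, the ratio $w_{\min}/w_{\max}$ coincides with that of $\C_D$, so the criterion $\frac{w_{\min}}{w_{\max}}\le\frac{q-1}{q}$ if and only if $h\le 1+\frac{1}{\log_{2}(q/(q-1))}$ carries over verbatim from Theorem \ref{thm:x1xh=0}. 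No step poses a real obstacle; the only points demanding care are the algebraic simplification of the length and the clerical verification that halving (by $q-1$) each tabulated weight matches Table \ref{table:projx1xh=0} exactly.
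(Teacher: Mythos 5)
Your proposal is correct and follows exactly the route the paper intends: the corollary is stated without proof precisely because, as you observe, $D$ is invariant under $\gf(q)^*$-scaling, so passing to $\overline D$ divides the length and every codeword weight by $q-1$ while preserving multiplicities, and minimality transfers via Theorem \ref{thm:characterization} (or Theorem \ref{thm:projC-D}). Your length computation, the rescaled weight table, and the unchanged ratio $w_{\min}/w_{\max}$ all check out against Theorem \ref{thm:x1xh=0}.
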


\begin{corollary}\label{thm:2projx1xh=0}
Let $D=\{(x_1, \cdots, x_k)\in \gf(q)^k \setminus \{0\}: x_1 x_2 x_3=0\}$, where $q\ge 4$ and $k\ge 3$.
Let $\overline{D}$ be any projection of $D$.
Then the codes $\C_D$ and $\C_{\overline{D}}$ in Equation (\ref{def:C-D}) are minimal codes
with $ \frac{w_{\min}}{w_{\max}} < \frac{q-1}{q}.$
\end{corollary}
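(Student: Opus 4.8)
The plan is to read this off as the case $h=3$ of Theorem \ref{thm:x1xh=0} and Corollary \ref{thm:projx1xh=0}, the only genuinely new point being the upgrade of the ratio bound from $\le$ to $<$.

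First I would apply Theorem \ref{thm:x1xh=0} and Corollary \ref{thm:projx1xh=0} with $h=3$; since $3\le k$ these hypotheses are met, so $\C_D$ and $\C_{\overline{D}}$ are minimal linear codes and their weight ratio is obtained by specializing Equation (\ref{eq:w/w}):
\[ \frac{w_{\min}}{w_{\max}}=\frac{q^{2}-(q-1)^{2}}{q^{2}-(q-2)(q-1)}=\frac{2q-1}{3q-2}. \]

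Next I would establish the strict inequality. Setting $A=q^{2}-(q-1)^{2}$ and $B=q^{2}-(q-2)(q-1)$ as in the proof of Theorem \ref{thm:x1xh=0}, Equation (\ref{eq:A/B}) with $h=3$ reads
\[ (q-1)B-qA=(q-1)^{2}\left(2-\left(\tfrac{q}{q-1}\right)^{2}\right). \]
Hence $\frac{w_{\min}}{w_{\max}}<\frac{q-1}{q}$ is equivalent to $(q-1)B-qA>0$, i.e.\ to $(\tfrac{q}{q-1})^{2}<2$, i.e.\ to $q^{2}<2(q-1)^{2}$, which rearranges to $q^{2}-4q+2>0$. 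The roots of $q^{2}-4q+2$ are $2\pm\sqrt{2}$, so this quadratic is positive precisely when $q>2+\sqrt{2}\approx 3.414$. Since the hypothesis is $q\ge 4>2+\sqrt{2}$, the strict inequality follows at once.

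I expect no real obstacle here: every structural assertion (minimality, length, and the full weight distribution) is inherited verbatim from the earlier results, and the sole computation is the elementary estimate $q^{2}-4q+2>0$ for $q\ge 4$. The corollary is thus a clean numerical specialization, showing that the $h=3$ family lies strictly below the threshold $\frac{q-1}{q}$ of Lemma \ref{lem:AB} whenever $q\ge 4$.
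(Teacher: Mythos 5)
Your proposal is correct and follows essentially the same route as the paper: both deduce minimality and the weight data from Theorem \ref{thm:x1xh=0} and Corollary \ref{thm:projx1xh=0} with $h=3$ and then verify a numerical condition, the paper by checking $3 < 1+\frac{1}{\log_{2}(q/(q-1))}\approx 3.41$ for $q\ge 4$, you by reducing the ratio $\frac{2q-1}{3q-2}<\frac{q-1}{q}$ to the quadratic inequality $q^{2}-4q+2>0$. If anything, your version is slightly more careful on one point: Theorem \ref{thm:x1xh=0} as stated characterizes only the non-strict bound $\le$, whereas the corollary asserts strict inequality, and your return to Equation (\ref{eq:A/B}) makes the strictness explicit rather than leaving it implicit in the paper's one-line citation.
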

\begin{proof}
The conclusion follows from Theorem \ref{thm:x1xh=0}, Corollary \ref{thm:projx1xh=0} and the inequality  $1+\frac{1}{\log_{2}(\frac{q}{q-1})}\ge 1+\frac{1}{\log_{2}(\frac{4}{4-1})}\approx 3.41>3 $.
\end{proof}

The following numerical data is consistent with the conclusion of Theorem \ref{thm:x1xh=0}.
\begin{example}
Let $q=3$, $k=4$ and $h=3$. Then the set $\C_D$ in Theorem \ref{thm:x1xh=0} is a minimal code with parameters $[56,4,30]$ and weight enumerator
$1+6z^{30}+8z^{36}+54z^{38}+12z^{42}$.
Obviously, $h>1+\frac{1}{\log_{2}(\frac{q}{q-1})} \approx 2.71$ and $\frac{w_{\min}}{w_{\max}}=\frac{5}{7} >\frac{2}{3}$.
\end{example}

\begin{example}
Let $q=4$, $k=4$ and $h=3$. Then the set $\C_D$ in Theorem \ref{thm:x1xh=0} is a minimal code with parameters $[147,4,84]$ and weight enumerator
 $1+9z^{84}+27z^{108}+192z^{111}+27z^{120}$.
Obviously, $h<1+\frac{1}{\log_{2}(\frac{q}{q-1})} \approx 3.41$ and $\frac{w_{\min}}{w_{\max}}=\frac{7}{10} <\frac{3}{4}$.
\end{example}

\begin{lemma}\label{lem:lin-toric}
Let $h \ge 2$ and $b \in \gf(q)$. Let $N_{h,b}$ denote the cardinality of the set
$\{(x_1, \cdots, x_h)\in \gf(q)^h: x_1+\cdots x_h=b, x_i\neq 0 \text{ for } i=1, \cdots, h\}.$ Then
$$
N_{h,b}=
\left\{
  \begin{aligned}
     \frac{(q-1)^h+(-1)^h(q-1)}{q},   ~~\text{if}\quad   b=0, \\
   \frac{(q-1)^h-(-1)^h}{q},    ~~\text{if} \quad   b \neq 0.
  \end{aligned}
\right.
$$
\end{lemma}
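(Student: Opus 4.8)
The plan is to evaluate $N_{h,b}$ by orthogonality of the additive characters of $\gf(q)$, which turns the single linear constraint $x_1+\cdots+x_h=b$ into a character sum and lets both cases $b=0$ and $b\neq 0$ emerge from one computation. Fix a nontrivial additive character $\psi$ of $\gf(q)$; then for every $a\in\gf(q)$ the orthogonality relation gives $\frac{1}{q}\sum_{t\in\gf(q)}\psi(ta)=1$ if $a=0$ and $0$ otherwise. First I would insert this indicator for the event $x_1+\cdots+x_h-b=0$ into the definition of $N_{h,b}$ and interchange the order of summation. Because the condition $x_i\in\gf(q)^*$ is the same and independent across the $h$ coordinates, the inner sum factors as a product, yielding
$$N_{h,b}=\frac{1}{q}\sum_{t\in\gf(q)}\psi(-tb)\,S(t)^{h},\qquad S(t):=\sum_{x\in\gf(q)^*}\psi(tx).$$

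The key computation is the inner sum $S(t)$. For $t=0$ one has $S(0)=q-1$. For $t\neq 0$ the map $x\mapsto tx$ permutes $\gf(q)$, so $\sum_{x\in\gf(q)}\psi(tx)=0$ and therefore $S(t)=0-\psi(0)=-1$. Substituting back, the $t=0$ term contributes $\frac{(q-1)^h}{q}$, and the terms with $t\neq 0$ contribute $\frac{(-1)^h}{q}\sum_{t\in\gf(q)^*}\psi(-tb)$.

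It then remains only to evaluate $\sum_{t\in\gf(q)^*}\psi(-tb)$ in the two cases. If $b=0$ this sum equals $q-1$, giving $N_{h,0}=\frac{(q-1)^h+(-1)^h(q-1)}{q}$; if $b\neq 0$, then $-tb$ runs over all of $\gf(q)^*$ as $t$ does, so the sum equals $-1$ and $N_{h,b}=\frac{(q-1)^h-(-1)^h}{q}$. These are exactly the two claimed formulas.

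I do not expect any serious obstacle: this is a finite-field counting identity, and the only things to watch are the isolated $t=0$ term and the sign in $S(t)=-1$ for $t\neq 0$. As an alternative one could induct on $h$ using the recursion $N_{h,b}=\sum_{c\in\gf(q)^*}N_{h-1,b-c}$, together with the observation that $N_{h,b}$ depends only on whether $b=0$ (by the scaling $x_i\mapsto\lambda x_i$). Writing $a_h:=N_{h,0}$ and $b_h:=N_{h,b}$ for $b\neq 0$, this produces the coupled recurrences $a_h=(q-1)b_{h-1}$ and $b_h=a_{h-1}+(q-2)b_{h-1}$ with base values $a_1=0$, $b_1=1$, from which the closed forms follow by an easy induction; but the character-sum route is shorter and avoids solving the recurrence.
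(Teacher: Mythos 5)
Your proof is correct, but it takes a genuinely different route from the paper. You evaluate $N_{h,b}$ with additive character orthogonality: inserting the indicator $\frac{1}{q}\sum_{t}\psi\bigl(t(x_1+\cdots+x_h-b)\bigr)$, factoring over the independent coordinates to get $\frac{1}{q}\sum_t \psi(-tb)S(t)^h$ with $S(0)=q-1$ and $S(t)=-1$ for $t\neq 0$, and then splitting on whether $b=0$; all of these evaluations check out, including the final sums $\sum_{t\neq 0}\psi(-tb)=q-1$ or $-1$. The paper instead argues elementarily: it observes $N_{h,b}=N_{h,1}$ for all $b\neq 0$, uses the total count $N_{h,0}+(q-1)N_{h,1}=(q-1)^h$ together with $N_{h,0}=(q-1)N_{h-1,1}$ to derive the first-order recurrence $N_{h,1}+N_{h-1,1}=(q-1)^{h-1}$, and solves it by telescoping from $N_{1,1}=1$ --- essentially the recursion you sketch as your ``alternative'' at the end, though the paper decouples it into a single recurrence in $N_{h,1}$ rather than your coupled system in $(a_h,b_h)$. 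The trade-off: your character-sum argument is a one-shot computation that needs no induction and generalizes immediately to counting solutions of any system of linear constraints with coordinates restricted to $\gf(q)^*$; the paper's argument is self-contained and uses nothing beyond counting, which fits the elementary tone of the section. Either proof fully establishes the lemma.
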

\begin{proof}
By the definition of $N_{h,b}$, one has $N_{h,b}=N_{h,1}$ for $b \in \gf(q)^*$ and
\begin{align}\label{eq:0-1}
N_{h,0}+(q-1)N_{h,1}=(q-1)^h.
\end{align}
Plugging $N_{h,0}=\sum_{a \in \gf(q)^*} N_{h-1,a}=(q-1)N_{h-1,1}$ into Equation (\ref{eq:0-1}), we deduce that
\[ N_{h-1,1}+N_{h,1}=(q-1)^{h-1}. \]
That is
\[(N_{h,1}-q^{-1}(q-1)^h) + (N_{h-1,1} -q^{-1} (q-1)^{h-1})=0.\]
Then
\begin{align*}
N_{h,1}=&q^{-1}(q-1)^h+(-1)^{h-1}(N_{1,1}-q^{-1} (q-1)^{2-1})\\
=& \frac{(q-1)^h-(-1)^h}{q}.
\end{align*}
One gets
\begin{align*}
N_{h,0}=& (q-1)^h-(q-1) \frac{(q-1)^h-(-1)^h}{q}\\
=& \frac{(q-1)^h+(-1)^h(q-1)}{q}.
\end{align*}
This completes the proof.

\end{proof}

\begin{lemma}\label{lem:D1}
Let $h$ and $k$ be two integers with $2\le h \le  k$ and
$D=\{(x_1, \cdots, x_k)\in \gf(q)^k \setminus \{\mathbf 0\} : x_1\cdots x_h(x_1+\cdots +x_h)=0\}$.
Then, $\# D= q^{k-h-1}\left ( q^{h+1}-(q-1)^{h+1}+(-1)^{h}(q-1) \right )-1$.
\end{lemma}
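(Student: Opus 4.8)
The plan is to reduce the count to a computation involving only the first $h$ coordinates, since the defining condition $x_1\cdots x_h(x_1+\cdots+x_h)=0$ does not involve $x_{h+1},\ldots,x_k$. First I would observe that if $M$ denotes the number of tuples $(x_1,\ldots,x_h)\in\gf(q)^h$ satisfying $x_1\cdots x_h(x_1+\cdots+x_h)=0$, then each such tuple extends in exactly $q^{k-h}$ ways by choosing the free coordinates $x_{h+1},\ldots,x_k$ arbitrarily. Hence the number of $(x_1,\ldots,x_k)\in\gf(q)^k$ meeting the condition equals $Mq^{k-h}$, and removing the zero vector (which satisfies the condition but is excluded from $D$) gives $\#D=Mq^{k-h}-1$.

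Next I would compute $M$ by complementary counting. The tuples \emph{failing} the condition are precisely those with $x_1\cdots x_h\neq 0$ and $x_1+\cdots+x_h\neq 0$, i.e.\ all coordinates nonzero and with nonzero sum. There are $(q-1)^h$ tuples in $\gf(q)^h$ with all coordinates nonzero, of which exactly $N_{h,0}=\frac{(q-1)^h+(-1)^h(q-1)}{q}$ have zero sum by Lemma \ref{lem:lin-toric}. Subtracting yields $(q-1)^h-N_{h,0}=\frac{(q-1)^{h+1}-(-1)^h(q-1)}{q}$ tuples in the complement, so
$$M=q^h-\frac{(q-1)^{h+1}-(-1)^h(q-1)}{q}=\frac{q^{h+1}-(q-1)^{h+1}+(-1)^h(q-1)}{q}.$$

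Finally, substituting into $\#D=Mq^{k-h}-1$ gives $\#D=q^{k-h-1}\bigl(q^{h+1}-(q-1)^{h+1}+(-1)^h(q-1)\bigr)-1$, as claimed. There is no genuine obstacle here beyond bookkeeping: the only nontrivial input is the evaluation of $N_{h,0}$, which is already supplied by Lemma \ref{lem:lin-toric}, and the remainder is a single complementary count over the nonvanishing conditions followed by elementary algebra. The one point that merits a moment of care is the interplay between factoring off the $q^{k-h}$ free coordinates and excluding the single zero vector, to be sure the trailing $-1$ lands in the right place.
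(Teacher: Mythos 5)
Your proof is correct and follows essentially the same route as the paper: factor off the $q^{k-h}$ free coordinates, count the solutions in $\gf(q)^h$ via the quantity $N_{h,0}$ from Lemma \ref{lem:lin-toric}, and subtract $1$ for the zero vector. The only cosmetic difference is that you count the complement (all coordinates nonzero with nonzero sum) directly, whereas the paper runs inclusion--exclusion on the two events $x_1\cdots x_h=0$ and $x_1+\cdots+x_h=0$ and then cancels a term; both reduce to the identical intermediate expression $q^h-(q-1)^h+N_{h,0}$.
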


\begin{proof}
By the definition of the set $D$, we have the following decomposition of $D$ as  Cartesian product
\begin{align*}
D=\left ( D_1\times \gf(q)^{k-h} \right ) \setminus \{\mathbf 0\},
\end{align*}
where $D_1$ is defined as
\begin{align}\label{eq:D1}
D_1=\{(x_1, \cdots, x_h)\in \gf(q)^h : x_1\cdots x_h(x_1+\cdots +x_h)=0\}.
\end{align}
Then
\begin{align*}
\# D_1=& \# \{(x_1, \cdots, x_h)\in \gf(q)^h : x_1\cdots x_h=0\}\\
&+\# \{(x_1, \cdots, x_h)\in \gf(q)^h : x_1+\cdots +x_h=0\}\\
& - \# \{(x_1, \cdots, x_h)\in \gf(q)^h : x_1\cdots x_h=x_1+\cdots +x_h=0\}\\
=& q^h-(q-1)^h+q^{h-1}\\
&- \# \{(x_1, \cdots, x_h)\in \gf(q)^h : x_1\cdots x_h=x_1+\cdots +x_h=0\}\\
=&(q+1)q^{h-1}-(q-1)^{h}\\
&- \# \{(x_1, \cdots, x_h)\in \gf(q)^h :x_1+\cdots +x_h=0\}\\
&+  \# \{(x_1, \cdots, x_h)\in \left ( \gf(q)^* \right )^h :x_1+\cdots +x_h=0\}.
\end{align*}
From Lemma \ref{lem:lin-toric}, one has
\begin{align*}
\# D_1= q^{h}-(q-1)^{h}+\frac{(q-1)^h+(-1)^h(q-1)}{q}.
\end{align*}
Thus, $\# D=q^{k-h-1}\left ( q^{h+1}-(q-1)^{h+1}+(-1)^{h}(q-1) \right )-1$.
This completes the proof.

\end{proof}

\begin{theorem}\label{thm:x1xh(+)=0}
Let $h$ and $k$ be two integers with $3\le h \le  k$ and
$D=\{(x_1, \cdots, x_k)\in \gf(q)^k \setminus \{0\}: x_1\cdots x_h(x_1+\cdots +x_h)=0\}$.
Then the set $\C_D$ in Equation (\ref{def:C-D}) is an
$\left [ n,k,  w_{\min} \right ]$ minimal linear code, where
$n=q^{k-h-1}\left ( q^{h+1}-(q-1)^{h+1}+(-1)^{h}(q-1) \right )-1$
 and \[w_{\min}= q^{k-h-1}
 \left  ((q-1)q^{h}-(q-1)^{h+1}+(-1)^h(q-1) \right ).\] Furthermore,
\[ \frac{w_{\min}}{w_{\max}} \le \frac{q-1}{q},\]
provided that $2(q-1)^h-q^h+(-1)^h(q-2) \ge 0$.

\end{theorem}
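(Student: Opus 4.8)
The plan is to treat three parts separately: minimality, the numerical parameters $n$ and $w_{\min}$, and the ratio inequality. Minimality and length come essentially for free. I would first note that $D$ is exactly the set $D_S$ of (\ref{D-S}) associated to the $h+1$ linear forms defining the hyperplanes $x_1=0,\dots,x_h=0$ and $x_1+\cdots+x_h=0$; since these forms span an $h$-dimensional space and $h\ge 3$, Theorem~\ref{thm:projC-D} shows at once that $\C_D$ is a minimal code of dimension $k$. The length $n=\#D$ is supplied verbatim by Lemma~\ref{lem:D1}.

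The heart of the matter is the weight distribution. For nonzero $\mathbf a$ I would use $\wt(\mathbf c_{\mathbf a})=\#D-(\mathrm{zr}(\mathbf c_{\mathbf a})-1)$, where $\mathrm{zr}(\mathbf c_{\mathbf a})$ counts the $\mathbf x\in\gf(q)^k$ (including $\mathbf 0$) lying on the union of the $h+1$ hyperplanes and satisfying $\langle\mathbf a,\mathbf x\rangle=0$, and evaluate it by inclusion--exclusion, computing each subsystem's solution count from its rank as in Lemma~\ref{lem:N(a,T)}. The genuinely new ingredient compared with Theorem~\ref{thm:x1xh=0} is the single linear dependence $\ell_{h+1}=\ell_1+\cdots+\ell_h$ among the forms $\ell_i=x_i$ ($i\le h$) and $\ell_{h+1}=x_1+\cdots+x_h$: the family has rank $h$, so the rank of $\{\ell_i:i\in T\}$ is $\#T$ except for $T=\{1,\dots,h+1\}$, where it drops to $h$. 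I would therefore stratify the computation by (i) whether $\mathbf a\in V_h:=\mathrm{Span}(\ell_1,\dots,\ell_h)$, equivalently whether $\supp(\mathbf a)\subseteq\{1,\dots,h\}$; and (ii) for $\mathbf a\in V_h$, the multiplicities $m_c=\#\{i\le h:a_i=c\}$ of the values of $\mathbf a$ on the first $h$ coordinates, since for $T\ni h+1$ one has $\mathbf a\in\mathrm{Span}\{\ell_i:i\in T\}$ exactly when $\mathbf a$ is constant on the complement within $\{1,\dots,h\}$ of $T\setminus\{h+1\}$.

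Pushing the bookkeeping through, I expect a fixed ``generic'' weight $w_A$ for every $\mathbf a\notin V_h$, and for $\mathbf a\in V_h$ a correction to $w_A$ governed by the alternating sum $\sum_{c\in\gf(q)}(1-q)^{m_c}$ (with the $c=0$ bin carrying one extra power). The extreme weights occur inside $V_h$. I would verify that the weight-one vectors $\mathbf a=\lambda\mathbf e_1$ (and, equally, those proportional to $\ell_{h+1}$) minimise the weight and yield the stated $w_{\min}=q^{k-h-1}\big((q-1)q^{h}-(q-1)^{h+1}+(-1)^h(q-1)\big)$. The delicate point---the main obstacle---is to prove that no value-distribution gives a smaller weight: concretely, that among all distributions $(m_c)$ the relevant alternating sum is extremised by making a single level set as large as possible (all first $h$ coordinates equal, or all but one of them zero), which for $q\ge 3$ rests on the estimate $(q-1)^h\ge 2(q-1)^{h-1}$ but needs care with the alternating signs.

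For the final inequality I would not pin down $w_{\max}$ exactly; a lower bound suffices. Choosing $\mathbf a=(a_1,a_2,0,\dots,0)$ with $a_1,a_2$ two \emph{distinct} nonzero scalars (which exist precisely when $q\ge 3$; for $q=2$ the hypothesis $2(q-1)^h-q^h+(-1)^h(q-2)\ge 0$ already fails, so nothing is to be shown), I would compute its weight $w'$ by the same inclusion--exclusion and check the identity
\[
(q-1)\,w'-q\,w_{\min}=q^{k-h-1}(q-1)\big(2(q-1)^h-q^h+(-1)^h(q-2)\big).
\]
Since $w'\le w_{\max}$, the hypothesis makes the right-hand side nonnegative, whence $q\,w_{\min}\le(q-1)w'\le(q-1)w_{\max}$, i.e.\ $\frac{w_{\min}}{w_{\max}}\le\frac{q-1}{q}$. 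Once the weight analysis of step two is in hand, this last identity is only a routine (if lengthy) simplification.
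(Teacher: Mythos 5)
Your treatment of minimality (via Theorem \ref{thm:projC-D}) and of the length (via Lemma \ref{lem:D1}) coincides with the paper's, and your final step for the ratio is also essentially the paper's argument: the paper takes the codeword $(a x_{h-1}-x_h)_{\mathbf x\in D}$ with $a\neq 0,-1$ (a two-coordinate form with distinct nonzero coefficients, exactly like your $(a_1,a_2,0,\dots,0)$), reduces its weight to Lemma \ref{lem:D1} by a linear substitution, and verifies precisely your identity $(q-1)w'-qw_{\min}=(q-1)q^{k-h-1}\left(2(q-1)^h-q^h+(-1)^h(q-2)\right)$; your explicit remark that $q=2$ is vacuous is a point the paper leaves implicit.

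The gap is in the middle step, the determination of $w_{\min}$, which is where you yourself flag ``the main obstacle.'' Your plan is to compute the full weight distribution by stratifying over the value multiplicities $(m_c)$ of $\mathbf a$ on the first $h$ coordinates and then to show that the relevant alternating sum $\sum_{c}(1-q)^{m_c}$ is extremised at the weight-one vectors; that extremization is never carried out, and it is genuinely delicate (the paper itself only works out the full distribution in the special case $h=3$, as a separate theorem). As it stands, your proposal shows that $\lambda\mathbf e_1$ attains the claimed value but not that nothing attains less, so the formula for $w_{\min}$ is unproven. The missing idea is that no distribution analysis is needed at all: for every nonzero $\mathbf a$ the zeros of $\mathbf c_{\mathbf a}$ inside $D$ lie in $H_{\mathbf a}\setminus\{\mathbf 0\}$, whence
\[
\mathrm{wt}(\mathbf c_{\mathbf a}) \;=\; n-\#\left(D\cap H_{\mathbf a}\right) \;\ge\; n-(q^{k-1}-1),
\]
uniformly in $\mathbf a$; and equality holds for $\mathbf a=\mathbf e_1$, because the entire hyperplane $x_1=0$ is contained in $D\cup\{\mathbf 0\}$. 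Since $n+1-q^{k-1}$ equals the stated $w_{\min}$ after simplification, this one observation replaces your whole stratified inclusion--exclusion. With that substitution your outline closes; without it, it does not.
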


\begin{proof}
By Theorem \ref{thm:projC-D}, $\mathcal C_D$ is a minimal code of dimension $k$.
The length of the code $\C_D$ follows from Lemma \ref{lem:D1}.
Let $\mathbf c_{\mathbf a} = \left ( \langle \mathbf a , \mathbf x \rangle  \right )_{\mathbf x \in D}$ be a codeword in $\C_D$
corresponding to $\mathbf a \in \gf(q)^k \setminus \{\mathbf 0\}$.
Note that
\begin{align*}
w_{\min}=&n-\# \{ \mathbf x \in  D:   \langle \mathbf a , \mathbf x \rangle =0\}\\
\ge & n+1- \# \{ \mathbf x \in  \gf(q)^k:   \langle \mathbf a , \mathbf x \rangle =0\}\\
=& n+1-q^{k-1}.
\end{align*}
The desired value of $w_{\min}$ then follows from $\mathrm{wt}(\mathbf c)= n+1-q^{k-1}$,
 where $\mathbf c= \left ( x_1  \right )_{(x_1,\cdots, x_k) \in D}$.

 Next, consider the weight $w_2$ of the codeword $\mathbf c= \left ( ax_{h-1}-x_h  \right )_{(x_1,\cdots, x_k) \in D}$, where $a\neq 0 \text{ or } -1$.
 Then
 \begin{align*}
 w_2= & n+1-\# \{ (x_1, \cdots, x_k)\in D: a x_{h-1}-x_h=0 \}\\
 =& n+1- q^{k-h} \# D_2,
 \end{align*}
where $D_2$ denotes the set of solutions $(x_1, \cdots, x_h)\in \gf(q)^h$ of the system of linear equations
\begin{align*}
\begin{cases}
 x_1 \cdots x_h (x_1+ \cdots +x_h)=0
 \cr a x_{h-1}-x_h=0
 \end{cases}.
\end{align*}
Thus, $\# D_2$ equals the number of solutions $(x_1, \cdots, x_{h-1})\in \gf(q)^{h-1}$ of the equation
\[x_1 \cdots x_{h-1} (x_1+ \cdots +x_{h-2}+(1+a)x_{h-1})=0.\]
Using the linear transformation $\begin{cases}
x_j'=x_j, & 1 \le j \le h-2
\cr x_{j}'=(1+a) x_{j},  & j=h-1
\end{cases}
$ over $\gf(q)^{h-1}$, we deduce that the equations $x_1 \cdots x_{h-1} (x_1+ \cdots+(1+a)x_{h-1})=0$ and
$x_1 \cdots x_{h-1} (x_1+ \cdots +x_{h-1})=0$ have the same number  of solutions over $\gf(q)^{h-1}$.
By Lemma \ref{lem:D1}, $\# D2=q^{k-h-1}  (  q^{h}-(q-1)^k +(-1)^{h-1}(q-1))$, which implies
$w_2=q^{k-h-1} \left (  (q-1)q^{h}-(q-2)(q-1)^h +2(-1)^{h}(q-1)\right )$.
It's easily checked that
\[(q-1)w_2-qw_{\min}=(q-1)q^{k-h-1}\left (2(q-1)^h-q^h+(-1)^h(q-2)\right ).\]
Hence, if $2(q-1)^h-q^h+(-1)^h(q-2)\ge 0$, then $\frac{w_{\min}}{w_{\max}}\le \frac{w_{\min}}{w_{2}} \le \frac{q-1}{q}$.
This completes the proof.

\end{proof}

Let $h\ge 3$ be an integer. Theorem \ref{thm:x1xh(+)=0} shows that  there exists a constant $q_0$
such that the code $C_D$ in Theorem \ref{thm:x1xh(+)=0} is a minimal code with $\frac{w_{\min}}{w_{\max}} \le \frac{q-1}{q}$
for any power $q\ge q_0$ of prime. Finally, we settle the weight distribution of the codes
in Theorem \ref{thm:x1xh(+)=0} for the case $h=3$.

\begin{theorem}\label{thm:x1x3(+)=0}
Let $k$ be an integer with $  k\ge 3$ and
$D=\{(x_1, \cdots, x_k)\in \gf(q)^k \setminus \{0\}: x_1 x_2 x_3(x_1+x_2+x_3)=0\}$.
Then the set $\C_D$ in Equation (\ref{def:C-D}) is an
$\left [ n,k,  w_{\min} \right ]$ minimal linear code
with the weight distribution in Table \ref{table:x1x3(+)=0}, where $n=4q^{k-1}-6q^{k-2}+3q^{k-3}-1$
and  $w_{\min}=3q^{k-1}-6q^{k-2}+3q^{k-3}$. Furthermore,
\[ \frac{w_{\min}}{w_{\max}} \le \frac{q-1}{q},\]
if and only if $q\ge 4$.
\begin{table}[htbp]
\centering
\caption{The weight distribution of the code  $\mathcal C_D$ of Theorem \ref{thm:x1x3(+)=0}}
\label{table:x1x3(+)=0}
\begin{tabular}{|c|c|}
  \hline
  Weight $w$ & No. of codewords $A_w$ \\
  \hline
  $0$& $1$\\
\hline
$3q^{k-1}-6q^{k-2}+3q^{k-3}$ & $4(q-1)$\\
\hline
$4q^{k-1}-10q^{k-2}+6q^{k-3}$ & $(q-1)(q-2)(q-3)$\\
\hline
$4q^{k-1}-10q^{k-2}+9q^{k-3}-3q^{k-4}$ & $q^k-q^3$\\
\hline
$4q^{k-1}-9q^{k-2}+5q^{k-3}$ & $6(q-1)(q-2)$\\
\hline
$4q^{k-1}-8q^{k-2}+4q^{k-3}$ & $3(q-1)$\\
\hline
\end{tabular}
\end{table}

\end{theorem}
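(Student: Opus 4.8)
The plan is to recognize that $\C_D$ is exactly the $h=3$ instance of the family treated in Theorem~\ref{thm:x1xh(+)=0}, so that minimality, dimension $k$, the length $n=4q^{k-1}-6q^{k-2}+3q^{k-3}-1$, and the value $w_{\min}$ come for free: minimality and dimension follow from Theorem~\ref{thm:projC-D} (here $D=(\cup_{\mathbf a\in S}H_{\mathbf a})\setminus\{\mathbf 0\}$ with $S=\{e_1,e_2,e_3,e_1+e_2+e_3\}$, whose span is $3$-dimensional), while the length is Lemma~\ref{lem:D1} with $h=3$. The genuine new content is the full weight distribution of Table~\ref{table:x1x3(+)=0}. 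I would begin by writing, for each nonzero $\mathbf a\in\gf(q)^k$, the identity $\mathrm{wt}(\mathbf c_{\mathbf a})=N_0-N(\mathbf a)$, where $N_0=\#\widetilde D_1\cdot q^{k-3}$ counts the solutions of $x_1x_2x_3(x_1+x_2+x_3)=0$ in $\gf(q)^k$ (including $\mathbf 0$), $\widetilde D_1\subseteq\gf(q)^3$ is the corresponding solution set in the first three coordinates with $\#\widetilde D_1=4q^2-6q+3$, and $N(\mathbf a)$ additionally imposes $\langle\mathbf a,\mathbf x\rangle=0$.

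Next I would split on whether $\mathbf a$ involves coordinates beyond the first three. Writing $\mathbf a=(\mathbf a',\mathbf a'')$ with $\mathbf a'\in\gf(q)^3$: if $\mathbf a''\neq\mathbf 0$, then for each fixed $\mathbf x'\in\widetilde D_1$ the equation $\langle\mathbf a'',\mathbf x''\rangle=-\langle\mathbf a',\mathbf x'\rangle$ has $q^{k-4}$ solutions, so $N(\mathbf a)=\#\widetilde D_1\cdot q^{k-4}$ independently of $\mathbf a'$; this yields the single weight $w_3=(q-1)(4q^2-6q+3)q^{k-4}=4q^{k-1}-10q^{k-2}+9q^{k-3}-3q^{k-4}$ with multiplicity $q^k-q^3$, matching the third row of the table. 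If $\mathbf a''=\mathbf 0$ and $\mathbf a'\neq\mathbf 0$ (there are $q^3-1$ such vectors), the constraint $\langle\mathbf a,\mathbf x\rangle=0$ depends only on $\mathbf x'$, so $N(\mathbf a)=M(\mathbf a')\,q^{k-3}$ with $M(\mathbf a')=\#\{\mathbf x'\in\widetilde D_1:\langle\mathbf a',\mathbf x'\rangle=0\}$, and hence $\mathrm{wt}(\mathbf c_{\mathbf a})=(\#\widetilde D_1-M(\mathbf a'))q^{k-3}$. Everything now reduces to a planar count.

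The geometric core is to view $\widetilde D_1$ as the affine cone over the four lines $L_1,L_2,L_3,L_4$ of $\mathrm{PG}(2,q)$ with forms $x_1,x_2,x_3,x_1+x_2+x_3$. A direct check shows these form a non-degenerate complete quadrilateral: the six vertices $V_{ij}=L_i\cap L_j$ are distinct and no three of the lines are concurrent (each omitted incidence evaluates to the nonzero constant $1$). Each nonzero $\mathbf a'$ corresponds to a line $\ell\subseteq\mathrm{PG}(2,q)$, and counting fibres of size $q-1$ over projective points gives $\#\widetilde D_1=1+(4q-2)(q-1)$ and $M(\mathbf a')=1+p(q-1)$, where $p=\#\bigl(\ell\cap(L_1\cup L_2\cup L_3\cup L_4)\bigr)$; therefore $\mathrm{wt}(\mathbf c_{\mathbf a})=(4q-2-p)(q-1)q^{k-3}$. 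The remaining task is the classification of the $q^2+q+1$ lines $\ell$ by $p$: the four lines $\ell=L_i$ give $p=q+1$ and weight $3(q-1)^2q^{k-3}=w_{\min}$ (multiplicity $4(q-1)$); the three diagonals joining opposite vertices $V_{12}V_{34}$, $V_{13}V_{24}$, $V_{14}V_{23}$ (forms $x_1+x_2$, $x_1+x_3$, $x_2+x_3$) give $p=2$ and weight $4(q-1)^2q^{k-3}$ (multiplicity $3(q-1)$); the $6(q-2)$ lines through exactly one vertex give $p=3$ and weight $4q^{k-1}-9q^{k-2}+5q^{k-3}$ (multiplicity $6(q-1)(q-2)$); and the remaining $(q-2)(q-3)$ lines meet no vertex, giving $p=4$ and weight $4q^{k-1}-10q^{k-2}+6q^{k-3}$ (multiplicity $(q-1)(q-2)(q-3)$). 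I would then confirm that $4+3+6(q-2)+(q-2)(q-3)=q^2+q+1$ and that the vector multiplicities sum to $q^3-1$, certifying the table is complete.

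Finally, I would compare the five weights to identify $w_{\min}=3(q-1)^2q^{k-3}$ and $w_{\max}=4(q-1)^2q^{k-3}$ (the differences $w_{\max}-w_i$ each factor as a manifestly positive polynomial times $q^{k-4}$), so that $\tfrac{w_{\min}}{w_{\max}}=\tfrac34$ for all $q$ and $k$; then $\tfrac34\le\tfrac{q-1}{q}$ holds precisely when $q\ge 4$, which gives the final equivalence. I expect the main obstacle to be the line-by-line classification with respect to the complete quadrilateral, especially bookkeeping the lines through exactly one vertex and verifying that the configuration stays non-degenerate (distinct vertices, no three concurrent, and the three diagonals distinct), including checking that the classes which should be empty for small $q$ correspond exactly to the factors $(q-2)$ and $(q-3)$ vanishing; the fibre argument of Case~A and the reduction to $\gf(q)^3$ in Case~B are routine.
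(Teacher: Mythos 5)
Your proposal is correct: every count checks out (the classification of lines with respect to the quadrilateral, the weight formula $(4q-2-p)(q-1)q^{k-3}$, the multiplicities, and the completeness checks $4+3+6(q-2)+(q-2)(q-3)=q^2+q+1$ and $(q-1)(q^2+q+1)+(q^k-q^3)=q^k-1$), and it reproduces Table \ref{table:x1x3(+)=0} exactly. The overall skeleton matches the paper's proof --- minimality, dimension, length and $w_{\min}$ are quoted from Theorem \ref{thm:x1xh(+)=0} (equivalently Theorem \ref{thm:projC-D} and Lemma \ref{lem:D1}), the codewords with $\mathrm{Supp}(\mathbf a)\not\subseteq\{1,2,3\}$ are handled by a fibration argument giving the row of multiplicity $q^k-q^3$, and the rest are classified by the restriction $\mathbf a'$ of $\mathbf a$ to the first three coordinates --- but the core computation is carried out by a genuinely different technique. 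The paper computes $\mathrm{zr}(\mathbf c_{\mathbf a})$ by inclusion--exclusion over subsets $T\subseteq\{1,2,3\}$, using Lemma \ref{lem:N(a,T)} together with the auxiliary counts $N'(\mathbf a,T)$, and then tabulates the answer according to the explicit algebraic shape of $\langle\mathbf a,\mathbf x\rangle$ (namely $ax_i$ or $a(x_1+x_2+x_3)$; $a(x_i+x_j)$; $ax_i+bx_j$ or $ax_i+ax_j+bx_k$ with $a\ne b$; $ax_1+bx_2+cx_3$ with $a,b,c$ pairwise distinct). You instead read $D_1\cup\{\mathbf 0\}$ as the affine cone over the four sides of a complete quadrilateral in $\mathrm{PG}(2,q)$ and classify the dual line $\ell$ by the number $p$ of points in which it meets the union of the sides ($p=q+1$ for a side, $p=2$ for a diagonal, $p=3$ through exactly one vertex, $p=4$ otherwise); these incidence classes coincide exactly with the paper's algebraic classes (for instance, the lines through exactly one vertex are precisely those with form $ax_i+bx_j$ or $a(x_i+x_j)+bx_k$, $a\ne b$), so the two tables agree. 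What your geometric route buys is that the multiplicity bookkeeping and the completeness of the classification become essentially self-verifying, and the argument would extend to any union of hyperplanes whose projective incidence structure is understood; what the paper's inclusion--exclusion buys is uniformity, since it reuses verbatim the machinery already set up for Theorem \ref{thm:x1xh=0}.
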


\begin{proof}
By Theorem \ref{thm:x1xh(+)=0}, $\mathcal C_D$ is a minimal code of dimension $k$
with length $n=4q^{k-1}-6q^{k-2}+3q^{k-3}-1$ and minimum weight $w_{\min}=3q^{k-1}-6q^{k-2}+3q^{k-3}$.

Next, we consider the weight distribution of $\C_D$. Let $\mathbf c_{\mathbf a} = \left ( \langle \mathbf a , \mathbf x \rangle  \right )_{\mathbf x \in D}$ be a codeword in $\C_D$
corresponding to $\mathbf a \in \gf(q)^k \setminus \{\mathbf 0\}$.
Let $\mathrm{zr}(\mathbf c_{\mathbf a})$ be the number of solutions of the system of equations
$$
\left\{
  \begin{array}{l}
    x_1 x_2 x_3(x_1+ x_2+ x_3)=0 \\
    \langle \mathbf a , \mathbf x \rangle =0
  \end{array}
\right.
$$

Let   $N'(\mathbf a, T)$  denote the number of solutions $\mathbf x = (x_1, \cdots, x_k)\in \gf(q)^k$  of the following  system of linear equations
$$
\left\{
  \begin{array}{l}
    x_j =0, \text{ for } j \in T, \\
    x_1+x_2+x_3=0,\\
     \langle \mathbf a , \mathbf x \rangle  =0
  \end{array}
\right.
$$
Employing the inclusion-exclusion principle, we have
\begin{align}\label{eq:zr(+)}
\mathrm{zr}(\mathbf c_{\mathbf a})=& \sum_{i=0}^3 (-1)^{i-1} \sum_{T\subseteq  \{1, 2, 3\}, \# T =i} (N(\mathbf a, T)-N'(\mathbf a, T))\nonumber \\
=&\sum_{T\subseteq  \{1, 2, 3\}} (-1)^{\#T-1}(N(\mathbf a, T)-N'(\mathbf a, T)),
\end{align}
where $N(\mathbf a, T)$
is defined as in Lemma \ref{lem:N(a,T)} if $T \neq  \emptyset$ and $N(\mathbf a, T)=0$ if $T= \emptyset$.
Then, we deduce
\begin{align*}
\mathrm{zr}(\mathbf c_{\mathbf a})
=&  \begin{cases}
  q^{k-1},  & \text{if } \langle \mathbf a , \mathbf x \rangle= a x_i , a(x_1+x_2+x_3),
  \cr 2 q^{k-2}-q^{k-3},& \text{if } \langle \mathbf a , \mathbf x \rangle= a (x_i+x_j),
  \cr 3q^{k-2}-2q^{k-3}, & \text{if } \langle \mathbf a , \mathbf x \rangle= a x_i+bx_j, a x_i+ax_j+b x_k,
  \cr 4q^{k-2}-3q^{k-3}, & \text{if } \langle \mathbf a , \mathbf x \rangle= a x_1+bx_2+c x_3,
    \cr 4q^{k-2}-6q^{k-3}+3q^{k-4}, &\text{if }  \mathrm{Supp}(\mathbf a)  \not \subseteq  \{1, 2, 3\},
  \end{cases}
\end{align*}
where $i,j,k$ are pairwise distinct integers in $\{1,2,3\}$  and $a, b ,c \in \gf(q)^*$ are pairwise distinct. Then,
one obtains
\begin{align}\label{eq:wt-Dh+}
\mathbf{wt}(\mathbf c_{\mathbf a})=& n- ( \mathrm{zr}(\mathbf c_{\mathbf a})-1) \nonumber \\
=&  \begin{cases}
  3q^{k-1}-6q^{k-2}+3q^{k-3},  &\text{if }  \langle \mathbf a , \mathbf x \rangle= a x_i , a(x_1+x_2+x_3),
  \cr 4q^{k-1}-8q^{k-2}+4q^{k-3} ,& \text{if } \langle \mathbf a , \mathbf x \rangle= a (x_i+x_j),
  \cr 4q^{k-1}-9q^{k-2}+5q^{k-3} , & \text{if } \langle \mathbf a , \mathbf x \rangle= a x_i+bx_j, a x_i+ax_j+b x_k,
  \cr 4q^{k-1}-10q^{k-2}+6q^{k-3}, & \text{if } \langle \mathbf a , \mathbf x \rangle= a x_1+bx_2+c x_3,
    \cr 4q^{k-1}-10q^{k-2}+9q^{k-3}-3q^{k-4}, &\text{if }  \mathrm{Supp}(\mathbf a)  \not \subseteq  \{1, 2, 3\},
  \end{cases}
  \end{align}
  where $i,j,k$ are pairwise distinct integers in $\{1,2,3\}$  and $a, b ,c \in \gf(q)^*$ are pairwise distinct.
The weight distribution in Table \ref{table:x1x3(+)=0} then follows from Equation (\ref{eq:wt-Dh+}).

From the weight distribution of $\C_D$ in Table \ref{table:x1x3(+)=0}, one gets
\begin{align*}
w_{\max}=4q^{k-1}-8q^{k-2}+4q^{k-3}.
\end{align*}
Then
\begin{align}\label{eq:w/w(+)}
\frac{w_{\min}}{w_{\max}}=\frac{3}{4}.
\end{align}
Equation (\ref{eq:w/w(+)}) implies that
$ \frac{w_{\min}}{w_{\max}} \le \frac{q-1}{q}$,
if and only if $q\ge 4$.
This completes the proof.
\end{proof}

The following numerical data is consistent with the conclusion of Theorem \ref{thm:x1x3(+)=0}.
\begin{example}
Let $q=3$, $k=4$ and $h=3$. Then the set $\C_D$ in Theorem \ref{thm:x1x3(+)=0} is a minimal code with parameters $[62,4,36]$ and weight enumerator
$1+8z^{36}+66z^{42}+6z^{48}$.
Obviously, $\frac{w_{\min}}{w_{\max}}=\frac{3}{4} >\frac{2}{3}$.
\end{example}

\begin{example}
Let $q=4$, $k=4$ and $h=3$. Then the set $\C_D$ in Theorem \ref{thm:x1x3(+)=0} is a minimal code with parameters $[171,4,108]$ and weight enumerator
 $1+12z^{108}+6z^{120}+192z^{129}+36z^{132}+9z^{144}$.
Obviously,  $\frac{w_{\min}}{w_{\max}}=\frac{3}{4}$.
\end{example}

The following open problems would be challenging.
\begin{open}
Determine the weight distribution of the minimal code $\C_D$ for the case that
\[ D:=\left \{(x_1, \cdots, x_k)\in \gf(q)^k \setminus \{0\}:\left (\sum_{i=1}^h x_i \right ) \prod_{i=1}^h x_i =0 \right \},\]
where $4\le h\le  k$.
\end{open}

\begin{open}
Determine the parameters of the linear code $\C_D$ for the case that
\[ D:=\left \{(x_1, \cdots, x_k)\in \gf(q)^k \setminus \{0\}:\prod_{1\le i <j \le h}(x_i+x_j)=0 \right \},\]
where $3\le h\le  k$.
\end{open}

\begin{open}
Determine the parameters of the minimal code $\C_D$ for the case that
\[ D:=\left \{(x_1, \cdots, x_k)\in \gf(q)^k \setminus \{0\}: \prod_{i=1}^h x_i \prod_{1\le i <j \le h}(x_i+x_j)=0 \right \},\]
where $3\le h\le  k$.
\end{open}

\begin{open}
Determine the parameters of the minimal code $\C_D$ for the case that
\[ D:=\left \{(x_1, \cdots, x_k)\in \gf(q)^k \setminus \{0\}:\left (\sum_{i=1}^h x_i \right ) \prod_{i=1}^h x_i \prod_{1\le i <j \le h}(x_i+x_j)=0\right \},\]
where $3\le h\le  k$.
\end{open}

Other good minimal codes may be produced by choosing an appropriate union of hyperplanes.

\subsection{Secondary constructions of minimal codes}
We introduce now a secondary construction of minimal linear codes,
 which  allows constructing minimal codes of dimension $(k+1)$ from $k$-dimensional minimal codes.

\begin{lemma}\label{lem:cutting-lineq}
Let $k \ge 2$. Let ${D}$  be a vectorial  cutting blocking set in $\gf(q)^k$ such that $D=a \cdot  D$ for any $a \in \gf(q)^*$.
Then, for any $\mathbf a \in \gf(q)^*$ and $c\in \gf(q)$, there exists $\mathbf x\in D$ such that $\langle \mathbf a , \mathbf x \rangle =c$.
\end{lemma}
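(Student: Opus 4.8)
The plan is to split into the two cases $c=0$ and $c\neq 0$, handling the former with the blocking property of $D$ and the latter with the cutting property together with the scalar-invariance hypothesis $D=a\cdot D$. Throughout I would write $H_{\mathbf a}=\{\mathbf x\in\gf(q)^k:\langle\mathbf a,\mathbf x\rangle=0\}$ for the hyperplane determined by the nonzero vector $\mathbf a$, which is a $(k-1)$-dimensional subspace since $\mathbf a\neq\mathbf 0$.

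For $c=0$ the conclusion is immediate: since $D$ is in particular a vectorial blocking set it meets every hyperplane, so $D\cap H_{\mathbf a}\neq\emptyset$, and any $\mathbf x$ in this intersection satisfies $\langle\mathbf a,\mathbf x\rangle=0$.

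The heart of the argument is the claim that $D\not\subseteq H_{\mathbf a}$, equivalently that $D$ spans $\gf(q)^k$. I would prove this by contradiction: assuming $D\subseteq H_{\mathbf a}$, pick any hyperplane $H'\neq H_{\mathbf a}$; then $D\cap H'\subseteq H_{\mathbf a}\cap H'$, and since $H_{\mathbf a}$ and $H'$ are distinct hyperplanes this intersection has dimension $k-2$, whence $\dim\mathrm{Span}(D\cap H')\le k-2$. But Proposition \ref{prop:cutting-fold}, applied to the cutting blocking set $D$, forces $D\cap H'$ to be $(k-1)$-dimensional, a contradiction, with the hypothesis $k\ge 2$ being exactly what makes $k-2<k-1$. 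I expect this to be the main (indeed the only) subtle step, and it is the sole place where the cutting hypothesis is used rather than mere blocking; note that for $k\ge 3$ a set such as $H\setminus\{\mathbf 0\}$ is blocking yet lies in a hyperplane, so cutting is genuinely needed here.

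Finally, for $c\neq 0$ the claim supplies some $\mathbf x_0\in D$ with $\langle\mathbf a,\mathbf x_0\rangle=c'\neq 0$. Then $\lambda:=c/c'\in\gf(q)^*$, so the invariance $D=\lambda\cdot D$ yields $\lambda\mathbf x_0\in D$, and by linearity $\langle\mathbf a,\lambda\mathbf x_0\rangle=\lambda c'=c$, which completes the proof. Everything beyond the spanning claim is a short consequence of blocking (for $c=0$) and of the homogeneity $D=a\cdot D$ (for $c\neq 0$).
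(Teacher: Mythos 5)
Your proof is correct and follows essentially the same route as the paper's: the case $c=0$ is handled by the blocking property, and the case $c\neq 0$ by showing $D\not\subseteq H_{\mathbf a}$ via the cutting property (Proposition \ref{prop:cutting-fold}) and then rescaling with the hypothesis $D=a\cdot D$. In fact your write-up is slightly more careful than the paper's, which asserts ``if $c\neq 0$ then $D\subseteq H_{\mathbf a}$'' without spelling out that this step is exactly where the scalar-invariance hypothesis is used.
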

\begin{proof}
Suppose that there exists $\mathbf a \in \gf(q)^*$ and $c\in \gf(q)$ such that $\langle \mathbf a , \mathbf x \rangle \neq c$ for any  $\mathbf x\in D$.
Let $H_{\mathbf a} $ be the hyperplane corresponding to $\mathbf a$.
If $c=0$, then $H_{\mathbf a} \cap D =\emptyset$, which is contrary to Proposition \ref{prop:cutting-fold}.
If $c\neq 0$, then $D\subseteqq H_{\mathbf a}$. Thus, for any other  hyperplane $H'$, $H' \cup D\subseteqq H_{\mathbf a}$, which is contrary to
the definition of cutting blocking set.
This completes the proof.
\end{proof}

The following follows from the definition of cutting blocking sets directly.
\begin{lemma}\label{lem:cutting-lineq*}
Let $k \ge 2$. Let ${D}$ be a vectorial  cutting blocking set in $\gf(q)^k$. Let $\mathbf a_1, \mathbf a_2 \in \gf(q)^k$
such that they are linearly independent.
\begin{enumerate}
\item  There exist $\mathbf x, \mathbf x' \in D$ such that $\langle \mathbf a_1 , \mathbf x \rangle =0$
and $\langle \mathbf a_1 , \mathbf x' \rangle \neq 0$.
\item There exists $\mathbf x \in D$ such that $\langle \mathbf a_1 , \mathbf x \rangle  \neq 0$
and $\langle \mathbf a_2 , \mathbf x \rangle \neq 0$.
\end{enumerate}
\end{lemma}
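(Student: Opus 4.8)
The plan is to translate both claims into statements about the incidence of the points of $D$ with the hyperplanes $H_{\mathbf a} := \{\mathbf x \in \gf(q)^k : \langle \mathbf a , \mathbf x \rangle = 0\}$, and then to exploit nothing more than the blocking property of $D$ together with Definition \ref{def:cutting}. Indeed, $\langle \mathbf a , \mathbf x \rangle = 0$ means precisely $\mathbf x \in H_{\mathbf a}$, so producing the required vectors amounts to locating points of $D$ inside or outside suitable hyperplanes. Since $\mathbf a_1, \mathbf a_2$ are linearly independent, $H_{\mathbf a_1}$ and $H_{\mathbf a_2}$ are distinct hyperplanes and $W := H_{\mathbf a_1} \cap H_{\mathbf a_2}$ is a subspace of codimension $2$; this space will carry the whole argument for part (2).

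For the first assertion, the existence of $\mathbf x \in D$ with $\langle \mathbf a_1, \mathbf x \rangle = 0$ is immediate: as $H_{\mathbf a_1}$ is a hyperplane and $D$ is in particular a vectorial blocking set, $H_{\mathbf a_1}$ meets $D^*$, which supplies such an $\mathbf x$. For the existence of $\mathbf x'$ with $\langle \mathbf a_1, \mathbf x' \rangle \neq 0$, I would argue by contradiction: if every point of $D$ lay in $H_{\mathbf a_1}$, then for any hyperplane $H' \neq H_{\mathbf a_1}$ (which exists since $k \ge 2$) we would have $D \cap H' \subseteq D \subseteq H_{\mathbf a_1}$, so $D \cap H'$ is contained in a codimension-$1$ subspace other than $H'$, contradicting Definition \ref{def:cutting}. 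Hence $D \not\subseteq H_{\mathbf a_1}$ and the desired $\mathbf x'$ exists.

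The second assertion is the substantive part, and rephrased it asserts $D \not\subseteq H_{\mathbf a_1} \cup H_{\mathbf a_2}$. I would assume, for contradiction, that $D \subseteq H_{\mathbf a_1} \cup H_{\mathbf a_2}$ and introduce a third hyperplane in the pencil through $W$, namely $H_{\mathbf a_3}$ with $\mathbf a_3 = \mathbf a_1 + \mathbf a_2$; linear independence of $\mathbf a_1, \mathbf a_2$ makes $H_{\mathbf a_3}$ distinct from both $H_{\mathbf a_1}$ and $H_{\mathbf a_2}$, and clearly $W \subseteq H_{\mathbf a_3}$. The key computation, which I expect to be the main obstacle, is to verify that $(H_{\mathbf a_1} \cup H_{\mathbf a_2}) \cap H_{\mathbf a_3} = W$: for each $i \in \{1,2\}$ the subspace $H_{\mathbf a_i} \cap H_{\mathbf a_3}$ has codimension $2$ and contains the codimension-$2$ space $W$, which forces $H_{\mathbf a_i} \cap H_{\mathbf a_3} = W$. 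Consequently $D \cap H_{\mathbf a_3} \subseteq W \subseteq H_{\mathbf a_1}$ with $H_{\mathbf a_1} \neq H_{\mathbf a_3}$, again contradicting Definition \ref{def:cutting}. Therefore some $\mathbf x \in D$ avoids both $H_{\mathbf a_1}$ and $H_{\mathbf a_2}$, i.e. satisfies $\langle \mathbf a_1, \mathbf x \rangle \neq 0$ and $\langle \mathbf a_2, \mathbf x \rangle \neq 0$. If convenient one may instead pass to a projection of $D$ and phrase these contradictions through Proposition \ref{prop:cutting-fold} in terms of the dimension of $D \cap H_{\mathbf a_3}$ dropping to at most $k-2$; but the pencil argument uses only the defining property of cutting blocking sets, matching the remark that the lemma follows directly from the definition.
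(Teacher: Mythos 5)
Your proof is correct. There is, however, nothing in the paper to compare it against step by step: the authors state this lemma with only the remark that it ``follows from the definition of cutting blocking sets directly'' and give no proof, so your write-up supplies exactly the details they omit. Part (1) is indeed immediate, and you handle it the right way: the blocking property of $D$ produces $\mathbf x \in D^* \cap H_{\mathbf a_1}$, and the cutting property applied to any hyperplane $H' \neq H_{\mathbf a_1}$ forces $D \cap H' \not\subseteq H_{\mathbf a_1}$, hence some $\mathbf x' \in D$ outside $H_{\mathbf a_1}$. Part (2) is the one place where a genuine argument is needed, since the claim $D \not\subseteq H_{\mathbf a_1} \cup H_{\mathbf a_2}$ does not follow from the blocking property alone (for $k \ge 3$ the union of two hyperplanes with the origin removed is a blocking set but not a cutting one), and your pencil argument is a clean and complete way to get it: with $\mathbf a_3 = \mathbf a_1 + \mathbf a_2$ one has $\mathbf a_3 \neq \mathbf 0$, the hyperplane $H_{\mathbf a_3}$ distinct from $H_{\mathbf a_1}$ and $H_{\mathbf a_2}$, and the codimension count $W \subseteq H_{\mathbf a_i} \cap H_{\mathbf a_3}$ with both sides of codimension $2$ forces $H_{\mathbf a_i} \cap H_{\mathbf a_3} = W$ for $i = 1,2$; so if $D \subseteq H_{\mathbf a_1} \cup H_{\mathbf a_2}$ then $D \cap H_{\mathbf a_3} \subseteq W \subseteq H_{\mathbf a_1}$, contradicting Definition \ref{def:cutting}. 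All the side conditions you invoke hold precisely because $\mathbf a_1, \mathbf a_2$ are linearly independent, and the argument also works in the degenerate case $k = 2$ (where $W = \{\mathbf 0\}$). In short: the proposal is a valid, self-contained proof of a statement the paper left unproved, and part (2) of your argument is the content hidden behind the paper's word ``directly.''
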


We now present the secondary construction of minimal linear codes in the next theorem.
\begin{theorem}\label{thm:D1-D2}
Let $k \ge 2$. Let ${D_1}$ and ${D_2}$ be two vectorial  cutting blocking sets in $\gf(q)^k$ such that $D_1=a \cdot  D_1$ for any $a\in \gf(q)^*$.
Define the subset of $\gf(q)^{k+1}$ by
\begin{align}\label{eq:D1-D2}
\widetilde{[D_1, D_2]}:=\left \{ (\mathbf x,1)\in \gf(q)^{k+1}: \mathbf x \in D_1 \right \} \bigcup \left \{ ( \mathbf x,0)\in \gf(q)^{k+1}: \mathbf x \in D_2 \right \}.
\end{align}
Then, $\widetilde{[D_1, D_2]}$ is a vectorial cutting blocking set in $\gf(q)^{k+1}$.
In particular, $\C_{\widetilde{[D_1, D_2]}}$ is a minimal code of length $(\# D_1+ \#D_2)$ and dimension $(k+1)$.
\end{theorem}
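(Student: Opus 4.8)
The plan is to verify directly the defining condition of a vectorial cutting blocking set for $\widetilde{[D_1,D_2]}$ in $\gf(q)^{k+1}$, using the dimension criterion behind Proposition \ref{prop:cutting-fold}: I will show that for \emph{every} hyperplane $\widetilde H$ of $\gf(q)^{k+1}$ the intersection $\widetilde{[D_1,D_2]}\cap\widetilde H$ spans $\widetilde H$, i.e.\ is $k$-dimensional. The argument of Proposition \ref{prop:cutting-fold} is purely dimension-theoretic and does not use projectivity, so it applies even though $\widetilde{[D_1,D_2]}$ need not be projective when $D_2$ is not; moreover the spanning condition immediately yields both the blocking and the cutting properties, since a subset spanning $\widetilde H$ cannot lie in any other hyperplane. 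I write a general hyperplane as $\widetilde H=\{(\mathbf y,y_0):\langle\mathbf a,\mathbf y\rangle+a_0y_0=0\}$ for a nonzero $(\mathbf a,a_0)\in\gf(q)^k\times\gf(q)$. The points of $\widetilde{[D_1,D_2]}$ on $\widetilde H$ then split into $A=\{(\mathbf x,1):\mathbf x\in D_1,\ \langle\mathbf a,\mathbf x\rangle=-a_0\}$ and $B=\{(\mathbf x,0):\mathbf x\in D_2,\ \langle\mathbf a,\mathbf x\rangle=0\}$.

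The main case is $\mathbf a\neq\mathbf 0$. Here $B$ equals $(D_2\cap H_{\mathbf a})\times\{0\}$, where $H_{\mathbf a}$ is a genuine hyperplane of $\gf(q)^k$; since $D_2$ is cutting, Proposition \ref{prop:cutting-fold} gives that $D_2\cap H_{\mathbf a}$ spans $H_{\mathbf a}$, so $\mathrm{Span}(B)=H_{\mathbf a}\times\{0\}$ has dimension $k-1$. To gain the missing dimension I produce one point of $A$: by Lemma \ref{lem:cutting-lineq}, applied to the cutting, scalar-closed set $D_1$ with the functional $\mathbf a$ and the value $-a_0$, there is $\mathbf x_0\in D_1$ with $\langle\mathbf a,\mathbf x_0\rangle=-a_0$, hence $(\mathbf x_0,1)\in A\subseteq\widetilde H$. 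This vector has nonzero last coordinate and therefore lies outside $H_{\mathbf a}\times\{0\}$, so $\mathrm{Span}(A\cup B)$ has dimension $k$; being contained in the $k$-dimensional space $\widetilde H$, it equals $\widetilde H$.

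It remains to treat the degenerate hyperplane $\mathbf a=\mathbf 0$ (so $a_0\neq0$ and $\widetilde H=\gf(q)^k\times\{0\}$): then $A=\emptyset$ while $B=D_2\times\{0\}$, and this spans $\widetilde H$ because a cutting blocking set $D_2$ spans $\gf(q)^k$ (if it lay in some hyperplane $H_0$, then for any $H_1\neq H_0$ the set $D_2\cap H_1\subseteq H_0\cap H_1$ would have dimension at most $k-2$, contradicting Proposition \ref{prop:cutting-fold}). Having checked all hyperplanes, $\widetilde{[D_1,D_2]}$ is a vectorial cutting blocking set in $\gf(q)^{k+1}$. For the concluding sentence, the two defining pieces are disjoint (distinct last coordinates), so the length is $\#D_1+\#D_2$; the cutting property forces $\widetilde{[D_1,D_2]}$ to span $\gf(q)^{k+1}$, giving dimension $k+1$; and minimality of $\C_{\widetilde{[D_1,D_2]}}$ follows from Theorem \ref{thm:characterization}.

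I expect the main obstacle to be the case $\mathbf a\neq\mathbf 0$, where full-dimensionality of the intersection is not automatic and must be assembled from two different sources: the $(k-1)$-dimensional ``horizontal'' span furnished by $D_2$ via its cutting property, and the single ``vertical'' lift furnished by $D_1$ via Lemma \ref{lem:cutting-lineq}, which is exactly where the scalar-closedness hypothesis on $D_1$ enters. Care is also needed at the degenerate hyperplane $\mathbf a=\mathbf 0$, and in noting that the criterion of Proposition \ref{prop:cutting-fold} stays valid without assuming $\widetilde{[D_1,D_2]}$ projective.
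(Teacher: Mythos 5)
Your proof is correct, and it verifies the conclusion by a genuinely different (though closely related) criterion than the paper. The paper works straight from Definition \ref{def:cutting}: after the same reduction via Theorem \ref{thm:characterization} and the scalar-closedness of $D_1$, it shows that for every pair of linearly independent functionals $(\mathbf a_1,b_1),(\mathbf a_2,b_2)$ on $\gf(q)^{k+1}$ there is a point of $\widetilde{[D_1,D_2]}$ annihilated by the second but not the first, splitting into three cases: $\mathbf a_1,\mathbf a_2$ independent (a separating point $(\mathbf x,0)$ supplied by the cutting property of $D_2$), $\mathbf a_1=\lambda\mathbf a_2\neq\mathbf 0$ (a point $(\mathbf x,1)$ supplied by Lemma \ref{lem:cutting-lineq} applied to $D_1$, with independence forcing $b_1\neq\lambda b_2$), and $\mathbf a_2=\mathbf 0$ (a point $(\mathbf x,0)$ supplied by Lemma \ref{lem:cutting-lineq*}). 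You instead prove the stronger per-hyperplane statement that $\widetilde{[D_1,D_2]}\cap\widetilde H$ spans $\widetilde H$, which needs only two cases and from which blocking, cutting for all pairs of hyperplanes, and the dimension $k+1$ of the code all fall out at once; your use of Lemma \ref{lem:cutting-lineq} to produce the lift $(\mathbf x_0,1)$ coincides exactly with the paper's case (ii), and is indeed the only place where $D_1=a\cdot D_1$ is needed. The trade-off is that your route leans on Proposition \ref{prop:cutting-fold}, which the paper states only for \emph{projective} subsets, so your argument is complete only because you correctly observe (and justify) that its proof is purely dimension-theoretic and never uses projectivity --- a point worth making explicit, since $\widetilde{[D_1,D_2]}$ can fail to be projective when $D_2$ is not. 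The paper's proof avoids this issue entirely by never invoking the proposition, at the cost of a finer case analysis; your proof is arguably cleaner and also makes the dimension claim self-contained rather than implicit.
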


\begin{proof}
By Theorem \ref{thm:characterization}, the definition of cutting blocking sets and the assumption that $D_1=a \cdot  D_1$ for any $a\in \gf(q)^*$, we only need to prove that
for any $(\mathbf a_i, b_i)\in \gf(q)^k \times \gf(q)$  $(i=1,2)$, where $(\mathbf a_1, b_1)$ and
$(\mathbf a_2, b_2)$ are linearly   independent over $\gf(q)$, there exists $(\mathbf x, y)\in D_1 \times \gf(q)^* \bigcup D_2 \times \{\mathbf 0\}$
such that
\begin{align}\label{eq:sep}
\begin{cases}
\langle \mathbf a_1, \mathbf x \rangle +b_1y \neq 0,
\cr \langle \mathbf a_2, \mathbf x \rangle +b_2y = 0.
\end{cases}
\end{align}
The proof is carried out by considering the  following three cases.

If $\mathbf a_1$ and $\mathbf a_2$ are linearly independent, by the definition of cutting blocking sets,
there exists an $\mathbf x  \in D_2$ such that $\langle \mathbf a_1, \mathbf x \rangle  \neq 0$ and $\langle \mathbf a_2, \mathbf x \rangle  = 0$.
Then, $(\mathbf x , 0) \in D_2 \times \{\mathbf 0\}$ satisfies (\ref{eq:sep}).

If $\mathbf a_1 = \lambda \mathbf a_2$ and $\mathbf a_2 \neq \mathbf 0$, by Lemma \ref{lem:cutting-lineq}, there
exists an $\mathbf x \in D_1$ such that $\langle \mathbf a_2, \mathbf x \rangle +b_2 = 0$.
Then $(\mathbf x ,1) \in D_1 \times \gf(q)^*$ satisfies (\ref{eq:sep}) from the fact  $b_1\neq \lambda b_2$.

If $\mathbf a_2 = \mathbf 0$, then $\mathbf a_1 \neq 0$. By Lemma \ref{lem:cutting-lineq*},
there
exists an $\mathbf x \in D_2$ such that $\langle \mathbf a_1, \mathbf x \rangle \neq 0$.
Thus, $(\mathbf x , 0) \in D_2 \times \{\mathbf 0\}$ satisfies (\ref{eq:sep}).

This completes the proof.

\end{proof}

For any cutting blocking set $D$, if $D$ does not satisfy the condition $D=aD$ for any $a\in \gf(q)^*$,
then the cutting blocking set $D'$ satisfies the previous condition, where $D'=\{a \mathbf x: a\in \gf(q), \mathbf x \in D\}$.
Let $D_1$ and $D_2$ be two vectorial  cutting blocking sets in $\gf(q)$. Define
\begin{align*}
\widetilde{\widetilde{[D_1, D_2]}}:=\left \{ (\mathbf x, y_{\mathbf x})\in \gf(q)^{k+1}: \mathbf x \in D_1 \right \} \bigcup \left \{ ( \mathbf x,0)\in \gf(q)^{k+1}: \mathbf x \in D_2 \right \},
\end{align*}
where $y_{\mathbf x} \in \gf(q)^*$ and $y_{a\mathbf x} =y_{\mathbf x}$ for any $a\in \gf(q)^*$.
Then, the code $\C_{\widetilde{\widetilde{[D_1, D_2]}}}$ is also a minimal code
as $\C_{\widetilde{\widetilde{[D_1, D_2]}}}$  and $\C_{{\widetilde{[D_1, D_2]}}}$ are equivalent up to a
monomial transformation.

\begin{corollary}
Let ${D}$ be a vectorial  cutting blocking set in $\gf(q)^k$ such that $D=a \cdot  D$ for any $a\in \gf(q)^*$.
Then $\C_{\widetilde{[D, D]}}$ is a minimal code of length $2\# D$ and dimension $(k+1)$, where $\widetilde{[D, D]}$ is given by (\ref{eq:D1-D2}).
Furthermore, if $\mathcal C_D$ satisfies the condition $\frac{w_{\min}}{w_{\max}} \le \frac{q-1}{q}$, then so does $\C_{\widetilde{[D, D]}}$.
\end{corollary}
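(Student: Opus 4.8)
The plan is to separate the two assertions and, crucially, to settle the ratio condition by exhibiting just two explicit codewords rather than computing the whole weight spectrum. The first assertion is immediate: I would specialize Theorem~\ref{thm:D1-D2} to $D_1 = D_2 = D$. The standing hypothesis $D = a\cdot D$ for all $a \in \gf(q)^*$ is exactly the condition imposed on $D_1$ there, so Theorem~\ref{thm:D1-D2} yields at once that $\widetilde{[D,D]}$ is a vectorial cutting blocking set in $\gf(q)^{k+1}$ and hence, by Theorem~\ref{thm:characterization}, that $\C_{\widetilde{[D,D]}}$ is minimal; its length is $\#D + \#D = 2\#D$ (the two blocks are disjoint, distinguished by their last coordinate) and its dimension is $k+1$.

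For the second assertion, write $w := w_{\min}$ and $W := w_{\max}$ for the extreme weights of $\C_D$, so that the hypothesis reads $qw \le (q-1)W$. The key observation is that I need not determine the full weight distribution of $\C_{\widetilde{[D,D]}}$: a single codeword of small weight and one of large weight suffice. A codeword of $\C_{\widetilde{[D,D]}}$ is indexed by a pair $(\mathbf a, b) \in \gf(q)^k \times \gf(q)$, and I would restrict attention to the slice $b = 0$. For such $(\mathbf a, 0)$ with $\mathbf a \neq \mathbf 0$, the codeword takes the value $\langle \mathbf a, \mathbf x\rangle$ on \emph{both} blocks $\{(\mathbf x,1): \mathbf x \in D\}$ and $\{(\mathbf x,0): \mathbf x \in D\}$, so its weight is exactly $2\,\wt(\mathbf c_{\mathbf a})$, twice the weight of the matching codeword of $\C_D$. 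Choosing $\mathbf a, \mathbf a' \neq \mathbf 0$ with $\wt(\mathbf c_{\mathbf a}) = w$ and $\wt(\mathbf c_{\mathbf a'}) = W$ — which exist because $w$ and $W$ are by definition weights of nonzero codewords of $\C_D$ — exhibits codewords of $\C_{\widetilde{[D,D]}}$ of weights $2w$ and $2W$. These two nonzero weights give the bounds $w_{\min}(\C_{\widetilde{[D,D]}}) \le 2w$ and $w_{\max}(\C_{\widetilde{[D,D]}}) \ge 2W$.

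The conclusion then follows from a short monotone chain, bounding the minimum from above and the maximum from below:
\[ q\,w_{\min}\bigl(\C_{\widetilde{[D,D]}}\bigr) \le q(2w) = 2(qw) \le 2(q-1)W = (q-1)(2W) \le (q-1)\,w_{\max}\bigl(\C_{\widetilde{[D,D]}}\bigr), \]
which rearranges to $\frac{w_{\min}}{w_{\max}} \le \frac{q-1}{q}$ for $\C_{\widetilde{[D,D]}}$. I expect the main ``obstacle'' to be purely strategic: the temptation to compute the entire weight distribution of $\C_{\widetilde{[D,D]}}$, which splits into the constant weight $\#D$ coming from $\mathbf a = \mathbf 0$, the weights $2\,\wt(\mathbf c_{\mathbf a})$ from the slice $b = 0$, and the weights $\#D + \tfrac{q-2}{q-1}\,\wt(\mathbf c_{\mathbf a})$ from $b \neq 0$. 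The last family is the only delicate one, since establishing it genuinely uses $D = a\cdot D$ to show each nonzero value of $\langle \mathbf a, \cdot\rangle$ on $D$ is attained equally often. Carrying that out is feasible but unnecessary here: bounding the minimum above by $2w$ and the maximum below by $2W$ is all the ratio estimate requires, so the slice $b = 0$ alone closes the argument.
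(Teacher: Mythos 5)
Your proposal is correct and takes essentially the same approach as the paper: the paper's one-line proof cites Theorem~\ref{thm:D1-D2} for minimality, length and dimension, and then uses the observation that $\mathbf c \in \C_D$ implies $(\mathbf c, \mathbf c) \in \C_{\widetilde{[D, D]}}$, which is precisely your slice $b=0$ producing codewords of weights $2w_{\min}$ and $2w_{\max}$. Your explicit inequality chain just spells out the ratio comparison that the paper leaves implicit.
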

\begin{proof}
The conclusions follow from Theorem \ref{thm:D1-D2} and the fact that
if $\mathbf c \in \C_D$ then $(\mathbf c, \mathbf c) \in \C_{\widetilde{[D, D]}}$.
\end{proof}

Similarly, one can prove the following corollary.
\begin{corollary}
Let ${D}$ be a vectorial  cutting blocking set in $\gf(q)^k\setminus \{\mathbf 0\}$ such that $D=a \cdot  D$ for any $a\in \gf(q)^*$.
Let $\overline{D}$ be any projection of $D$.
Then $\C_{\widetilde{[D, \overline{D}]}}$ is a minimal projective code of length $\frac{q\# D}{q-1}$ and dimension $(k+1)$,
where $\widetilde{[D, \overline{D}]}$ is given by (\ref{eq:D1-D2}). Furthermore, if $\mathcal C_D$ satisfies  the condition
$\frac{w_{\min}}{w_{\max}} \le \frac{q-1}{q}$, then so does $\C_{\widetilde{[D, \overline{D}]}}$.
\end{corollary}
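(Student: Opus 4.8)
The plan is to mirror the proof of the preceding corollary, replacing the ``doubling'' map $\mathbf c \mapsto (\mathbf c,\mathbf c)$ by a ``projective doubling'' that rescales weights by the factor $\frac{q}{q-1}$. First I would establish minimality. Since $D$ is a vectorial cutting blocking multiset, its projection $\overline{D}$ is itself a vectorial cutting blocking set (by the equivalence noted just after Definition \ref{def:cutting}). Taking $D_1=D$ and $D_2=\overline{D}$ in Theorem \ref{thm:D1-D2}, whose hypothesis $D_1=a\cdot D_1$ holds by assumption, immediately yields that $\widetilde{[D,\overline{D}]}$ is a vectorial cutting blocking set in $\gf(q)^{k+1}$, and Theorem \ref{thm:characterization} then gives that $\C_{\widetilde{[D,\overline{D}]}}$ is minimal of dimension $k+1$.

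Next I would compute the length and verify projectivity. Because $D=a\cdot D$ for every $a\in\gf(q)^*$, the set $D$ is a disjoint union of punctured lines $\langle \mathbf y\rangle\setminus\{\mathbf 0\}$, each represented exactly once in $\overline{D}$; hence $\#\overline{D}=\frac{\#D}{q-1}$ and the length is $\#D+\#\overline{D}=\frac{q\,\#D}{q-1}$. For projectivity one inspects the three types of point pairs of $\widetilde{[D,\overline{D}]}$: two points $(\mathbf x,1),(\mathbf x',1)$ are proportional only if $\mathbf x=\mathbf x'$; two points $(\mathbf y,0),(\mathbf y',0)$ are proportional only if $\mathbf y=\lambda\mathbf y'$, which forces $\mathbf y=\mathbf y'$ since $\overline{D}$ keeps one representative per line; and a point with last coordinate $1$ can never be proportional to one with last coordinate $0$. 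Thus $\widetilde{[D,\overline{D}]}$ is projective and $\C_{\widetilde{[D,\overline{D}]}}$ is a minimal projective code.

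The final and most delicate step is the weight-ratio claim. The key observation is an analogue of the fact exploited in the previous corollary: for each $\mathbf a\in\gf(q)^k$, the codeword of $\C_{\widetilde{[D,\overline{D}]}}$ attached to $(\mathbf a,0)$ decomposes into a ``$D$-part'' equal to the codeword $\mathbf c_{\mathbf a}$ of $\C_D$ and an ``$\overline{D}$-part'' equal to the corresponding codeword of $\C_{\overline{D}}$. Since each punctured line contributes $q-1$ coordinates to the $D$-part but only $1$ to the $\overline{D}$-part, and $\langle\mathbf a, a\mathbf y\rangle=a\langle\mathbf a,\mathbf y\rangle$ vanishes simultaneously for all $a\in\gf(q)^*$, the two parts have weights $\wt(\mathbf c_{\mathbf a})$ and $\frac{1}{q-1}\wt(\mathbf c_{\mathbf a})$, so the total weight of this codeword is $\frac{q}{q-1}\wt(\mathbf c_{\mathbf a})$. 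Applying this to a minimum-weight and to a maximum-weight codeword of $\C_D$ gives
\[
w_{\min}\left(\C_{\widetilde{[D,\overline{D}]}}\right)\le \frac{q}{q-1}\,w_{\min}(\C_D),
\qquad
w_{\max}\left(\C_{\widetilde{[D,\overline{D}]}}\right)\ge \frac{q}{q-1}\,w_{\max}(\C_D),
\]
whence the ratio $\frac{w_{\min}}{w_{\max}}$ of $\C_{\widetilde{[D,\overline{D}]}}$ is at most that of $\C_D$, and the hypothesis $\frac{w_{\min}}{w_{\max}}\le\frac{q-1}{q}$ is inherited.

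I expect the main obstacle to be bookkeeping the two inequalities in the correct directions, namely an \emph{upper} bound on the new $w_{\min}$ against a \emph{lower} bound on the new $w_{\max}$, so that the common factor $\frac{q}{q-1}$ cancels and the inequality $\frac{w_{\min}}{w_{\max}}\le\frac{q-1}{q}$ survives; everything else is a direct transcription of Theorem \ref{thm:D1-D2} together with the line-counting already used for the length.
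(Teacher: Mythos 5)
Your proof is correct and follows essentially the same route as the paper, which disposes of this corollary by pointing to Theorem \ref{thm:D1-D2} together with the embedding of each codeword $\mathbf c_{\mathbf a}\in\C_D$ as the codeword $(\mathbf c_{\mathbf a},\overline{\mathbf c}_{\mathbf a})$ of $\C_{\widetilde{[D,\overline{D}]}}$, whose weight is $\frac{q}{q-1}\wt(\mathbf c_{\mathbf a})$. Your write-up simply makes explicit the details (projectivity, line counting, and the direction of the two weight inequalities) that the paper leaves to the reader.
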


The following is a consequence of Proposition \ref{prop:DS-cutting} and Theorem  \ref{thm:D1-D2}.
\begin{theorem}\label{thm:DS-DS}
Let $S_1$ and $S_2$ be two nonempty proper subsets of $\gf(q)^k \setminus \{\mathbf 0\}$ such that the dimension of $\mathrm{dim}_{\gf(q)}(\mathrm{Span}(S_i))$ ($i=1,2$) is at least $3$.
Let  $D_{S_i}$ be the set  defined by (\ref{D-S}).  Then   $\C_{\widetilde{[D_{S_1}, {D_{S_2}}]}}$ is a minimal linear   code of dimension $(k+1)$,
where $\widetilde{[D_{S_1}, D_{S_2}]}$ is given by (\ref{eq:D1-D2}).
\end{theorem}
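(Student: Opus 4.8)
The plan is to recognize this statement as an immediate combination of the two cited results, so the task reduces to verifying that the pair $(D_{S_1}, D_{S_2})$ meets the hypotheses of Theorem~\ref{thm:D1-D2} and then invoking that theorem. First I would record that the standing assumption $\mathrm{dim}_{\gf(q)}(\mathrm{Span}(S_i)) \ge 3$ forces $k \ge 3 \ge 2$, so the dimension hypothesis of both Proposition~\ref{prop:DS-cutting} and Theorem~\ref{thm:D1-D2} is automatically in force. Applying Proposition~\ref{prop:DS-cutting} to each $S_i$ then gives that both $D_{S_1}$ and $D_{S_2}$ are vectorial cutting blocking sets in $\gf(q)^k$.

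The only genuinely new point to check is the scalar-invariance condition $D_{S_1} = a \cdot D_{S_1}$ for every $a \in \gf(q)^*$, which Theorem~\ref{thm:D1-D2} requires of its first argument. This follows directly from the defining formula (\ref{D-S}): each $H_{\mathbf a}$ is a linear subspace of $\gf(q)^k$, hence $a \cdot H_{\mathbf a} = H_{\mathbf a}$ for all $a \in \gf(q)^*$; consequently the union $\cup_{\mathbf a \in S_1} H_{\mathbf a}$ is invariant under multiplication by $a$, and deleting the zero vector preserves this invariance. Thus $D_{S_1}$ satisfies $D_{S_1} = a \cdot D_{S_1}$, and the same argument shows $D_{S_2}$ is invariant as well.

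With both hypotheses in hand, I would set $D_1 = D_{S_1}$ and $D_2 = D_{S_2}$ and apply Theorem~\ref{thm:D1-D2} verbatim: it yields that $\widetilde{[D_{S_1}, D_{S_2}]}$, defined by (\ref{eq:D1-D2}), is a vectorial cutting blocking set in $\gf(q)^{k+1}$, and therefore that $\C_{\widetilde{[D_{S_1}, D_{S_2}]}}$ is a minimal linear code of dimension $(k+1)$. I do not expect any serious obstacle here, since the result is effectively a corollary; the one item warranting explicit attention is the asymmetry of Theorem~\ref{thm:D1-D2}, which demands invariance only of its first slot. Because both $D_{S_i}$ are in fact scalar-invariant, either assignment of the two sets to the slots works, so the conclusion is robust to the choice of ordering.
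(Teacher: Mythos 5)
Your proposal is correct and follows exactly the route the paper intends: the paper states Theorem~\ref{thm:DS-DS} as an immediate consequence of Proposition~\ref{prop:DS-cutting} and Theorem~\ref{thm:D1-D2} without writing out a proof. Your verification of the scalar-invariance hypothesis $D_{S_1} = a \cdot D_{S_1}$ (from the fact that $D_{S_1}$ is a union of linear hyperplanes with the origin removed) is precisely the detail the paper leaves implicit, and your observation that $\mathrm{dim}(\mathrm{Span}(S_i)) \ge 3$ forces $k \ge 3$ correctly discharges the remaining hypotheses.
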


\begin{example}
Let $q=4$, $k=5$ and $D=\{(x_1, \cdots, x_5)\in \gf(q)^5 \setminus \{\mathbf 0\} : x_1 x_2 x_3=0\}$. Then the set $C_{\widetilde{[D, D]}}$ in Theorem \ref{thm:DS-DS} is a minimal code with parameters $[1182,6,591]$ and maximum weight $w_{\max}=960$.
Obviously,  $\frac{w_{\min}}{w_{\max}}=\frac{197}{320}<\frac{3}{4}$.
\end{example}

It would be interesting to settle the following problem.
\begin{open}
Determine the weight distribution of the minimal code $\C_{\widetilde{[D, {D}]}}$ for the case that
\[ D:=\left \{(x_1, \cdots, x_k)\in \gf(q)^k \setminus \{0\}: \prod_{i=1}^h x_i =0 \right \},\]
where $3 \le h\le  k$.
\end{open}

\begin{theorem}\label{thm:mc-wt}
Let $h$, $\ell$ and $k$ be integers such that $h\ge 2$ and $\ell  \le (k-2)$.  Let $D_{\le h}$ and $D_{\ge \ell}$  be the subsets of $\gf(q)^k$ defined by
\begin{align*}
D_{\le h}=\left \{ \mathbf x \in  \gf(q)^k : 1\le \mathrm{wt}(\mathbf x)\le h  \right  \},
\end{align*}
and
\begin{align*}
D_{\ge \ell}=\left \{ \mathbf x \in  \gf(q)^k : \ell \le \mathrm{wt}(\mathbf x)\le k  \right  \}.
\end{align*}
Then the codes   $\C_{\widetilde{[D_{\le h}, {D_{\le h}}]}}$, $\C_{\widetilde{[D_{\ge \ell}, {D_{\ge \ell}}]}}$  and
$\C_{\widetilde{[D_{\le h}, {D_{\ge \ell}}]}}$  are  minimal linear   codes of dimension $(k+1)$,
\end{theorem}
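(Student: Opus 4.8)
The plan is to exhibit all three codes as instances of the secondary construction of Theorem~\ref{thm:D1-D2} and to verify, in each case, the two hypotheses of that theorem for the relevant pair $(D_1,D_2)$ drawn from $\{D_{\le h}, D_{\ge \ell}\}$: first that both $D_1$ and $D_2$ are vectorial cutting blocking sets in $\gf(q)^k$, and second that the set $D_1$ occupying the first slot satisfies the scalar-invariance condition $D_1 = a\cdot D_1$ for every $a\in\gf(q)^*$. Granting these, Theorem~\ref{thm:D1-D2} shows directly that $\widetilde{[D_1,D_2]}$ is a vectorial cutting blocking set in $\gf(q)^{k+1}$, whence by Theorem~\ref{thm:characterization} the associated code $\C_{\widetilde{[D_1,D_2]}}$ is minimal of dimension $(k+1)$. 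The three pairings to treat are $(D_{\le h},D_{\le h})$, $(D_{\ge \ell},D_{\ge \ell})$ and $(D_{\le h},D_{\ge \ell})$.

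The scalar-invariance is immediate: since $\mathrm{wt}(a\mathbf{x})=\mathrm{wt}(\mathbf{x})$ for every $a\in\gf(q)^*$, both $D_{\le h}$ and $D_{\ge \ell}$ are closed under multiplication by nonzero scalars, so $D=a\cdot D$ holds for $D\in\{D_{\le h},D_{\ge \ell}\}$ and all $a\in\gf(q)^*$. Since the first slot in each of the three constructions is always a member of $\{D_{\le h},D_{\ge \ell}\}$, this hypothesis is met in every case.

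The substantive step is to confirm that $D_{\le h}$ and $D_{\ge \ell}$ are cutting blocking sets. For $D_{\le h}$ with $h\ge 2$ I would first show that $D_{\le 2}$ is cutting over an arbitrary $\gf(q)$: by Proposition~\ref{prop:cutting-fold} it suffices to prove $\mathrm{Span}(D_{\le 2}\cap H_{\mathbf{a}})$ is $(k-1)$-dimensional for every hyperplane $H_{\mathbf{a}}$. Writing $w=\mathrm{wt}(\mathbf{a})$, the weight-one and weight-two vectors lying in $H_{\mathbf{a}}$ split into a $(k-w)$-dimensional contribution from coordinates outside $\mathrm{Supp}(\mathbf{a})$ and a $(w-1)$-dimensional contribution from vectors $a_j\mathbf{e}_i-a_i\mathbf{e}_j$ with $i,j\in\mathrm{Supp}(\mathbf{a})$, together spanning all of $H_{\mathbf{a}}$. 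Then $D_{\le 2}\subseteq D_{\le h}$ together with the fact (recorded after Definition~\ref{def:cutting}) that any superset of a cutting blocking set is cutting shows $D_{\le h}$ is cutting. For $D_{\ge \ell}$ with $\ell\le k-2$, the cutting property follows from the earlier example establishing that $D_{\ge h}$ is a vectorial cutting blocking set, applied with $h=\ell$. Feeding these two facts into Theorem~\ref{thm:D1-D2} for the three pairings completes the proof.

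I expect the principal obstacle to be the cutting verification of $D_{\le 2}$ over a general field, since the example cited for $D_{\le h}$ treats only the binary case; the dimension bookkeeping of $D_{\le 2}\cap H_{\mathbf{a}}$ must be done with care so that the two contributions of sizes $k-w$ and $w-1$ are seen to be independent and to sum to exactly $k-1$. A secondary caveat is that the cutting property invoked for $D_{\ge \ell}$ is established in the earlier example under the hypothesis that $q$ is odd, so the stated theorem is naturally read with that restriction (or requires an independent verification of the cutting property of $D_{\ge \ell}$ in even characteristic).
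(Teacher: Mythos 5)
Your proposal is correct and takes essentially the same route as the paper: the paper's entire proof of Theorem \ref{thm:mc-wt} is to quote \cite{BB19} and \cite{MQRT} for the fact that $D_{\le h}$ and $D_{\ge \ell}$ are vectorial cutting blocking sets and then to invoke Theorem \ref{thm:D1-D2}, exactly your skeleton (the scalar-invariance check, which the paper leaves implicit, is trivial as you say). The one substantive difference is sourcing: where the paper outsources the cutting property, you prove it in-house for $D_{\le h}$ via the spanning argument for $D_{\le 2}\cap H_{\mathbf a}$, and that argument is sound --- the $k-w$ vectors $\mathbf e_i$ with $i\notin \mathrm{Supp}(\mathbf a)$ together with the $w-1$ vectors $a_{j_0}\mathbf e_i-a_i\mathbf e_{j_0}$, $i\in\mathrm{Supp}(\mathbf a)\setminus\{j_0\}$, are linearly independent elements of $D_{\le 2}\cap H_{\mathbf a}$; moreover the direction of Proposition \ref{prop:cutting-fold} you use (spanning implies cutting) holds without the projectivity hypothesis, so applying it to the non-projective set $D_{\le 2}$ is harmless. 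Your closing caveat about $D_{\ge \ell}$ in even characteristic is precisely the gap the paper closes by citing \cite{MQRT}, where the cutting property of $D_{\ge \ell}$ is established for all $q$ under $\ell\le k-2$, so the theorem does hold without the odd-$q$ restriction, but flagging it was the right call given that the example inside the paper covers only odd $q$.
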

\begin{proof}
From  \cite{BB19} and \cite{MQRT}, $D_{\le h}$ and $D_{\ge \ell}$ are cutting blocking sets.
The desired conclusion then follows from Theorem \ref{thm:D1-D2}.
\end{proof}

As a special case, the minimal  codes $\C_{\widetilde{[D_{\le h}, {D_{\ge (h+1)}}]}}$ have appeared in \cite{DHZI,DHZF} and
 \cite{BBI}. Thus, Theorem \ref{thm:mc-wt} generalizes some previously known constructions of minimal codes.

\begin{example}
Let $q=3$, $k=6$. Then the set $\C_{\widetilde{[D_{\le 2}, {D_{\le 2}}]}}$ in Theorem \ref{thm:mc-wt} is a minimal code with parameters $[144,7,44]$ and maximum weight
$w_{\max}=104$.
Obviously,  $\frac{w_{\min}}{w_{\max}}=\frac{11}{26}<\frac{2}{3}$.
\end{example}

The following open problem would be interesting.
\begin{open}
Determine the weight distributions of the minimal codes $\C_{\widetilde{[D_{\le h}, {D_{\le h}}]}}$, $\C_{\widetilde{[D_{\ge \ell}, {D_{\ge \ell}}]}}$  and
$\C_{\widetilde{[D_{\le h}, {D_{\ge \ell}}]}}$ in Theorem \ref{thm:mc-wt}.
\end{open}

The following theorem presents a general approach to constructing minimal codes via cutting blocking sets and affine blocking sets.
\begin{theorem}\label{thm:cutting-affine blocking sets}
Let $k \ge 3$. Let $D_1$ be a subset of $\gf(q)^{k-1}$ and let ${D_2}$ be a vectorial cutting blocking sets in $\gf(q)^{k-1}$.
Define the subset of $\gf(q)^{k}$ by
\begin{align}\label{eq:D1-D2}
\widetilde{[D_1, D_2]}:=\left \{ (\mathbf x,1)\in \gf(q)^{k}: \mathbf x \in D_1 \right \} \bigcup \left \{ ( \mathbf x,0)\in \gf(q)^{k}: \mathbf x \in D_2 \right \}.
\end{align}
Then, the code $\C_{\widetilde{[D_1, D_2]}}$ is a minimal code if and only if the set $D_1$ is an affine blocking set in $\gf(q)^{k-1}$.
\end{theorem}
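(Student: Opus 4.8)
The plan is to invoke Theorem~\ref{thm:characterization} to translate the minimality of $\C_{\widetilde{[D_1,D_2]}}$ into a purely geometric condition and then verify that condition hyperplane by hyperplane using the dimension criterion of Proposition~\ref{prop:cutting-fold}. Writing $\widetilde D:=\widetilde{[D_1,D_2]}$, Theorem~\ref{thm:characterization} says that $\C_{\widetilde D}$ is minimal if and only if $\widetilde D$ is a vectorial cutting blocking set in $\gf(q)^k$, which by Proposition~\ref{prop:cutting-fold} is equivalent to the assertion that for every hyperplane $\widetilde H$ of $\gf(q)^k$ the span of $\widetilde D\cap\widetilde H$ is $(k-1)$-dimensional. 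Every such hyperplane has the form $\widetilde H=\{(\mathbf x,y):\langle\mathbf a,\mathbf x\rangle+by=0\}$ for some $(\mathbf a,b)\in\gf(q)^{k-1}\times\gf(q)$, and I would split $\widetilde D$ into its ``affine part'' $\{(\mathbf x,1):\mathbf x\in D_1\}$ and its ``linear part'' $\{(\mathbf x,0):\mathbf x\in D_2\}$, analysing the two according to whether $\mathbf a=\mathbf 0$.

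First I would record two consequences of $D_2$ being a cutting blocking set in $\gf(q)^{k-1}$. Applying Proposition~\ref{prop:cutting-fold} to two distinct hyperplanes $H_1\neq H_2$ shows $\mathrm{Span}(D_2\cap H_i)=H_i$, so $\mathrm{Span}(D_2)\supseteq H_1+H_2=\gf(q)^{k-1}$; and for any nonzero $\mathbf a$ the intersection $D_2\cap H_{\mathbf a}$ spans the $(k-2)$-dimensional hyperplane $H_{\mathbf a}=\{\mathbf x:\langle\mathbf a,\mathbf x\rangle=0\}$. With these in hand, the case $\mathbf a=\mathbf 0$ is immediate: then $\widetilde H=\{y=0\}$, the affine part contributes nothing, and $\widetilde D\cap\widetilde H=\{(\mathbf x,0):\mathbf x\in D_2\}$ spans the whole $(k-1)$-dimensional subspace $\{y=0\}$ since $\mathrm{Span}(D_2)=\gf(q)^{k-1}$. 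Thus this degenerate hyperplane always succeeds, independently of $D_1$.

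The decisive case is $\mathbf a\neq\mathbf 0$. Here the linear part of $\widetilde D\cap\widetilde H$ equals $\{(\mathbf x,0):\mathbf x\in D_2\cap H_{\mathbf a}\}$, whose span is the $(k-2)$-dimensional subspace $H_{\mathbf a}\times\{0\}$, while the affine part equals $\{(\mathbf x,1):\mathbf x\in D_1\cap H\}$, where $H=\{\mathbf x:\langle\mathbf a,\mathbf x\rangle=-b\}$ is an affine hyperplane of $\gf(q)^{k-1}$. Since any vector with last coordinate $1$ lies outside $\{y=0\}$, adjoining even a single such vector to the $(k-2)$-dimensional span of the linear part raises the dimension to $k-1$; if instead the affine part is empty, the span remains $k-2$. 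Hence $\dim\mathrm{Span}(\widetilde D\cap\widetilde H)=k-1$ exactly when $D_1\cap H\neq\emptyset$. As $(\mathbf a,b)$ runs over all pairs with $\mathbf a\neq\mathbf 0$, the map $(\mathbf a,b)\mapsto\{\langle\mathbf a,\cdot\rangle=-b\}$ surjects onto all affine hyperplanes of $\gf(q)^{k-1}$, so $\widetilde D$ is cutting precisely when $D_1$ meets every affine hyperplane, i.e. when $D_1$ is an affine blocking set, which is both directions at once. I expect the only delicate point to be this bookkeeping of cases — ensuring the degenerate hyperplane $\{y=0\}$ is saved by the spanning property of $D_2$ rather than by $D_1$, and confirming the surjectivity of the hyperplane correspondence — while the dimension count itself is routine.
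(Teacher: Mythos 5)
Your proposal is correct, and it takes a genuinely more unified route than the paper does. The paper proves the two implications separately, each by citation: for the forward direction it repeats the argument of Theorem~\ref{thm:abs-bound} (given a hyperplane $\widetilde H$ with $\mathbf a\neq\mathbf 0$, the part of $\widetilde{[D_1,D_2]}$ with last coordinate $0$ spans at most $k-2$ dimensions inside $\widetilde H$, so the $(k-1)$-dimensionality forced by Proposition~\ref{prop:cutting-fold} and Theorem~\ref{thm:characterization} makes the affine part nonempty); for the converse it repeats the proof of Theorem~\ref{thm:D1-D2}, i.e.\ it verifies Definition~\ref{def:cutting} pairwise, taking linearly independent $(\mathbf a_1,b_1)$, $(\mathbf a_2,b_2)$ and producing in three cases ($\mathbf a_1,\mathbf a_2$ independent; $\mathbf a_1=\lambda\mathbf a_2\neq\mathbf 0$; $\mathbf a_2=\mathbf 0$) a point of $\widetilde{[D_1,D_2]}$ on the second hyperplane but off the first, with the affine blocking property of $D_1$ standing in for Lemma~\ref{lem:cutting-lineq} in the middle case. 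You instead run both directions through the single-hyperplane dimension criterion of Proposition~\ref{prop:cutting-fold}: your two-case analysis ($\mathbf a=\mathbf 0$ versus $\mathbf a\neq\mathbf 0$) yields the exact equivalence ``$\widetilde{[D_1,D_2]}$ is cutting iff $D_1$ meets every affine hyperplane'' in one pass, so the forward implication coincides in substance with the paper's while your converse avoids the three-case separating-point argument entirely. What the paper's route buys is reuse of lemmas already on record; what yours buys is a single bookkeeping scheme and the biconditional without two separate proofs. Two small points, neither a gap: Proposition~\ref{prop:cutting-fold} is stated for \emph{projective} subsets, and neither $D_2$ nor $\widetilde{[D_1,D_2]}$ need be projective, but this is harmless since intersections with linear subspaces and their spans are unchanged when a set is replaced by a projection (a license the paper itself takes); and your derivation of $\mathrm{Span}(D_2)=\gf(q)^{k-1}$ from two distinct hyperplanes needs $k-1\ge 2$, which the hypothesis $k\ge 3$ supplies.
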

\begin{proof}
We first suppose that $\C_{\widetilde{[D_1, D_2]}}$ is a minimal code.
In the same manner as in the proof of Theorem \ref{thm:abs-bound} we can see that
$D_1$ is an affine blocking set.

Conversely, suppose that $D_1$ is an affine blocking set in $\gf(q)^{k-1}$.
Analysis similar to that in the proof of Theorem \ref{thm:D1-D2} shows
that $\C_{\widetilde{[D_1, D_2]}}$ is a minimal code.
\end{proof}

\begin{corollary}\label{cor:cutting+affine}
Let $k \ge 3$. Let $D_1=\left \{ a \mathbf{e}_i: a\in \gf(q), 1 \le i \le k-1 \right \}$, where $\mathbf{e}_1, \cdots, \mathbf{e}_{k-1}$
is a basis for $\gf(q)^{k-1}$ over $\gf(q)$. Let ${D_2}$ be a vectorial cutting blocking sets in $\gf(q)^{k-1}$.
Define the subset of $\gf(q)^{k}$ by
\begin{align}\label{eq:D1-D2}
\widetilde{[D_1, D_2]}:=\left \{ (\mathbf x,1)\in \gf(q)^{k}: \mathbf x \in D_1 \right \} \bigcup \left \{ ( \mathbf x,0)\in \gf(q)^{k}: \mathbf x \in D_2 \right \}.
\end{align}
Then, the code $\C_{\widetilde{[D_1, D_2]}}$ is a minimal code with parameters $[\# D_2 + (k-1)(q-1)+1, k, (k-1)(q-1)+1]$.
\end{corollary}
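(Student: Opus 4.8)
The plan is to obtain the corollary as a direct application of Theorem~\ref{thm:cutting-affine blocking sets}. Since $D_2$ is a vectorial cutting blocking set in $\gf(q)^{k-1}$ by hypothesis, that theorem tells us $\C_{\widetilde{[D_1,D_2]}}$ is minimal if and only if $D_1$ is an affine blocking set in $\gf(q)^{k-1}$. Hence the substantive step is to verify that the axis set $D_1=\{a\mathbf e_i: a\in\gf(q),\ 1\le i\le k-1\}$ meets every affine hyperplane; minimality then follows at once, and the remaining work is purely to read off the parameters.

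To carry out the blocking check, I would write an arbitrary affine hyperplane of $\gf(q)^{k-1}$ (coordinates taken with respect to $\mathbf e_1,\dots,\mathbf e_{k-1}$) as $H:\ \sum_{i=1}^{k-1}a_ix_i=b$ with $(a_1,\dots,a_{k-1})\neq\mathbf 0$ and $b\in\gf(q)$. If $b=0$, then $\mathbf 0=0\cdot\mathbf e_1\in D_1$ lies on $H$. If $b\neq 0$, choose an index $i$ with $a_i\neq 0$ and put $a=b/a_i\in\gf(q)^*$; the point $a\mathbf e_i\in D_1$ satisfies $a_ia=b$, so $a\mathbf e_i\in H$. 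Thus $D_1$ blocks every affine hyperplane, and Theorem~\ref{thm:cutting-affine blocking sets} yields that $\C_{\widetilde{[D_1,D_2]}}$ is minimal.

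It then remains to determine the parameters. The two blocks of $\widetilde{[D_1,D_2]}$ are disjoint, since their final coordinates are $1$ and $0$, and the $k-1$ axes in $D_1$ share only $\mathbf 0$; hence $\#\widetilde{[D_1,D_2]}=\#D_1+\#D_2=(k-1)(q-1)+1+\#D_2$, which is the asserted length. For the dimension I would use that a cutting blocking set spans its ambient space (if $D_2$ lay in a hyperplane $H_0$, then for a second hyperplane $H_1\neq H_0$ the intersection $D_2\cap H_1\subseteq H_0\cap H_1$ would fail to be full-dimensional, contradicting Proposition~\ref{prop:cutting-fold}); therefore $\{(\mathbf x,0):\mathbf x\in D_2\}$ spans $\gf(q)^{k-1}\times\{0\}$, and together with $(\mathbf 0,1)\in\widetilde{[D_1,D_2]}$ this spans $\gf(q)^k$, so the dimension is exactly $k$.

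Finally, for the minimum distance, Theorem~\ref{thm:abs-bound} applied to this minimal code of dimension $k$ gives $d\ge(q-1)(k-1)+1$. For the matching upper bound I would exhibit the codeword associated with the nonzero vector $(\mathbf 0,1)\in\gf(q)^k$: its entry at each $(\mathbf x,1)$ equals $\langle\mathbf 0,\mathbf x\rangle+1=1$ and its entry at each $(\mathbf x,0)$ equals $0$, so its weight is exactly $\#D_1=(k-1)(q-1)+1$. The two bounds coincide, forcing $d=(k-1)(q-1)+1$. I expect no real obstacle beyond Theorem~\ref{thm:cutting-affine blocking sets} itself; the only genuine computation is the affine-blocking verification, and the elegant point is that the single codeword from $(\mathbf 0,1)$ already meets the universal bound of Theorem~\ref{thm:abs-bound}, so the minimum distance is pinned down without any weight enumeration.
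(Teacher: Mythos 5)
Your proposal is correct and takes essentially the same route as the paper: show that $D_1$ meets every affine hyperplane (the same ``scale a basis vector on which the functional is nonzero'' argument), then invoke Theorem~\ref{thm:cutting-affine blocking sets} for minimality and Theorem~\ref{thm:abs-bound} for the lower bound on the distance. Your extra explicit verifications --- that $D_2$ spans $\gf(q)^{k-1}$ so the dimension is exactly $k$, and that the codeword associated with $(\mathbf 0,1)$ has weight $(k-1)(q-1)+1$, giving the matching upper bound --- are details the paper leaves implicit, and they are carried out correctly.
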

\begin{proof}
Let $H$ be an affine hyperplane given by $a_1x_1 + \cdots + a_{k-1} x_{k-1}=b$.
￼Since any nonzero linear functional must be nonzero on at least one basis vector, it is clear that
multiplying that basis vector by an appropriate scalar yields a solution in $D_1$ to
$a_1x_1 + \cdots + a_{k-1} x_{k-1}=b$.
Hence, $D_1$ is an affine blocking set containing $(k-1)(q-1)+1$ vectors.
The desired conclusion then follows from Theorems \ref{thm:abs-bound} and \ref{thm:cutting-affine blocking sets}.
\end{proof}

\begin{remark}
Corollary \ref{cor:cutting+affine} presents a generic construction
of minimal codes with minimum distance achieving the lower bound in Theorem \ref{thm:abs-bound}.
In particular, the lower bound in Theorem \ref{thm:abs-bound} is tight.
\end{remark}

\begin{remark}
Let $\mathbf{e}_i$ be the vector in $\gf(q)^{k-1}$ with all elements equal to zero,
except $i$th element which is equal to one.
Denote $D_1=\left \{ a \mathbf{e}_i: a\in \gf(q), 1 \le i \le k-1 \right \}$.
Let $D_2$ be the set given by
 \[ D_2=\left \{ \mathbf{e}_i + a \mathbf{e}_j: a \in \gf(q)^{*}, 1 \le i <j \le k-1  \right \} \cup \left  (D_1 \setminus \{\mathbf 0\} \right ). \]
 Then the code $\C_{\widetilde{[D_1, D_2]}}$ is a minimal code. It is easily seen that
 $\C_{\widetilde{[D_1, D_2]}}$ is equivalent to the code in Theorem 5.4 of \cite{ABN}.
 Thus Corollary \ref{cor:cutting+affine} gives a new interpretation of this code.
\end{remark}

\section{Summary and concluding remarks} \label{sec:conc}

The main contributions of this paper are the following.
\begin{itemize}
\item A link between minimal linear codes and blocking sets
 was established and documented in Theorem \ref{thm:characterization}, which
  says that projective minimal codes and  cutting blocking sets are identical objects.
  Adopting  this geometric perspective a tight lower bound on the minimum distance of minimal codes is yielded, which
  confirms a recent conjecture by Alfarano, Borello and Neri.
   The link also played an important role for the construction of minimal codes in later sections.
 \item A general primary construction of minimal linear codes from hyperplanes   was derived in Theorem \ref{thm:projC-D}.
 With these general results, minimal codes with new parameters and explicit weight distributions were obtained.

 \item A general secondary construction of minimal linear codes
    was presented and documented in Theorem \ref{thm:D1-D2}.
    From this construction, many minimal codes with $\frac{w_{\min}}{w_{\max}}\le \frac{q-1}{q}$ were produced.

   \end{itemize}

As observed, the geometric depiction of minimal codes via cutting blocking sets is very effective for analyzing and constructing minimal linear codes.
Other good minimal codes may be obtained from the generic constructions  proposed in this paper, a lot of work can be done in this direction.
It would be nice if the open problems presented in this paper could be settled.
The reader is cordially invited to attack these problems.

\section*{Acknowledgements}
C. Tang was supported by National Natural Science Foundation of China (Grant No.
11871058) and China West Normal University (14E013, CXTD2014-4 and the Meritocracy Research
Funds).

\end{document}